\title{Optimizing the order of actions in contact tracing}
\author{
	Michela Meister \\ 
	meister@cs.cornell.edu
	\and
	Jon Kleinberg \\
	kleinberg@cornell.edu
	}
\begin{document}
\maketitle

\begin{abstract}
Contact tracing is a key tool for managing epidemic diseases like HIV, tuberculosis, and COVID-19. Manual investigations by human contact tracers remain a dominant way in which this is carried out. This process is limited by the number of contact tracers available, who are often overburdened during an outbreak or epidemic. As a result, a crucial decision in any contact tracing strategy is, given a set of contacts, which person should a tracer trace next? In this work, we develop a formal model that articulates these questions and provides a framework for comparing contact tracing strategies. Through analyzing our model, we give provably optimal prioritization policies via a clean connection to a tool from operations research called a ``branching bandit''. Examining these policies gives qualitative insight into trade-offs in contact tracing applications.
\end{abstract}

\section{Introduction} \label{section:introduction}
Contact tracing, the process of identifying individuals exposed to an infected case of a disease, is a key tool for managing epidemic diseases like HIV, tuberculosis, and COVID-19.~\cite{hiv_guide,tb_guide,vecino,keeling}. Once a contact is identified, they may undergo testing, quarantine, or medical treatment. Identifying infected contacts soon after exposure mitigates the impact of disease on multiple fronts: quarantining cases is crucial to limiting the disease's spread~\cite{macintyre,riley,kretzschmar,klinkenberg,who-em}, and early initiation of treatment is associated with positive health outcomes such as shorter hospital stays and lower mortality~\cite{may,krawczyk,sobrino,nakagawa,greenaway,alvarez,grinsztejn}.

Manual investigations by human contact tracers remain a dominant way in which contact tracing is carried out. Typically, a team of contact tracers work together to interview infected cases and follow up with contacts, tasks which are often arduous and time-consuming~\cite{spencer,tb_guide}. To simplify things, we consider the workflow of a single \textit{tracer} tasked with investigating a set of contacts. The tracer iteratively chooses an individual from the set to \textit{query}. When an individual is queried, they are tested for infection; if they are infected, they receive medical treatment and list their contacts, which are then added to the set. Since the contact tracing process is limited by the number of tracers available, during an outbreak or epidemic there may not be resources to query each contact immediately. Therefore, an important strategic decision is, which contact should the tracer query next? For example, for an easily-transmissable respiratory disease like COVID-19, a tracer may prioritize querying a clerk at a grocery store over a writer working from home, as the clerk likely interacts with many more people each day than the writer. As different diseases have different vectors of transmission, these trade-offs change in other settings. 

Deciding which contact to query next quickly becomes complex, since each query a tracer makes has downstream effects. As illustrated in~\cref{fig:example_1}, suppose an infected case A exposed contacts B and C. If B became infected, they exposed D and E, and if C became infected they exposed F, G, and H.
\begin{figure}
\begin{center}
\includegraphics[width=.3\linewidth]{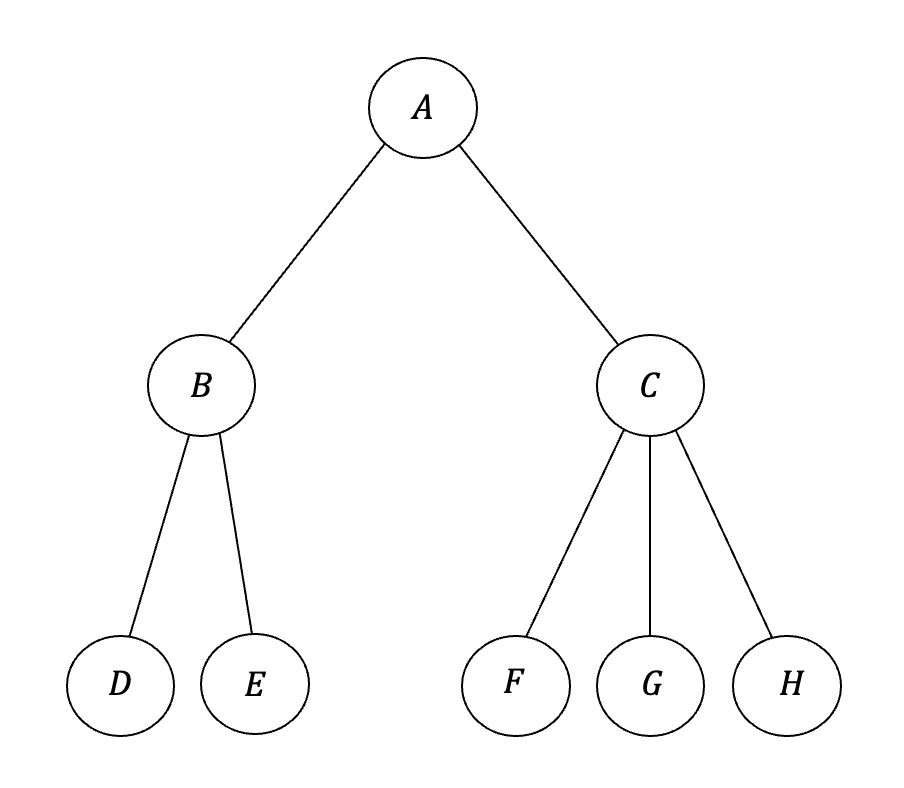}
\caption{Manual contact tracing begins with an index case (A), after which further contacts are revealed. Contact tracing proceeds recursively, where the tracer must decide the order in which to investigate contacts, with different orderings yielding different rewards.}
\label{fig:example_1}
\end{center}
\end{figure}
Upon querying A, the tracer only has access to first-generation contacts B and C and must decide which one to query next. Querying B opens up the possibility of querying second-generation contacts D and E, whereas querying C gives access to F, G, and H. Thus the decision whether to query B or C first affects the options available going forward.

Many other factors, such as the probability a contact is infected, their risk of infecting others, and the recency of their exposure affect these choices, and tracers often trade off between these different factors when deciding which contact to query next. To guide this decision-making, groups like the WHO and the CDC develop detailed recommendations for conducting contact tracing investigations~\cite{tb_guide,hiv_guide,who-em,virginia,northcarolina,cdc_covid_prioritize,cdc_covid_interview}. A primary aim of these guidelines is to synthesize these trade-offs into a decision-making protocol for tracers to follow. The complexity of this process is emphasized in the CDC's \textit{Guidelines for Investigations of Contacts with Infectious Tuberculosis}.
\begin{quote}
``Contact investigations are complicated undertakings that typically require hundreds of interdependent decisions, the majority of which are made on the basis of incomplete data, and dozens of time-consuming interventions.''~\cite{tb_guide}
\end{quote} 
Protocols for prioritizing contacts typically take the form of a flow-chart or matrix that assigns contacts to different groups and then dictates the order in which these groups ought to be queried. Often multiple criteria are considered when categorizing contacts. In many cases, two of the most important factors are the probability that a contact is infected and the recency of their exposure.

Despite the fact that the importance of these factors is well-understood, agencies still have difficulty managing these trade-offs, which is not without cost. During a 2017 HIV outbreak in West Virginia, contact tracers in the West Virginia Department of Health and Human Resources were overwhelmed by the surge in cases. In such a situation, CDC guidelines recommend interviewing contacts associated with clusters of infections, who may be at high risk of infection~\cite{hiv_guide}. However, the contact tracers had no means of adjusting the order in which they investigated cases to respond to the outbreak.
\begin{quote}
``[T]here was no supervisory triage system to respond to a cluster of HIV infections. As a result, [contact tracers] would investigate cases linearly --- prioritizing index case investigations over investigating contacts within clusters --- and did not have the flexibility to shift their priorities based on the identification of an ongoing HIV cluster.''~\cite{quilter}
\end{quote}
Given the importance of identifying infected cases quickly, both with respect to  preventing future infections and initiating medical treatment early, delays like this may cause real harm.

Even in the numerous cases where an agency has an effective method of prioritization, there are subtleties to how these priorities are chosen~\cite{virginia,cdc_covid_prioritize}. For example, the protocol for COVID-19 contact tracing developed by the North Carolina Department of Health and Human Services prioritizes contacts by recency of exposure, with one exception: contacts associated with a cluster or outbreak of infections are prioritized first, regardless when they were exposed~\cite{northcarolina}. By clearly dictating when to shift priorities in an investigation, this protocol addresses the issues presented in the West Virginia case. Yet many questions still remain. How are these priorities chosen? How do these tools translate to different settings? How might priorities change with slightly different factors or as parameters change? Is there even a way to ask these sorts of questions?

In this work, we develop a formal model that articulates these questions and provides a framework for comparing contact tracing strategies. Through analyzing our model, we give provably optimal prioritization policies via a clean connection to a tool from operations research called a ``branching bandit''~\cite{weiss}. Examining these policies gives qualitative insight into trade-offs in contact tracing applications. The model we study has two phases: first an infection spreads within a population; then the infection process halts and a contact tracing intervention begins. Of course, contact tracing is an extremely complex process requiring nuance and domain expertise, and there are many factors which this stylized model cannot capture --- for example, the dynamics of contact tracing while a disease is still spreading, which we explore in forthcoming work~\cite{dynamic}. Yet this two-phase model already exposes trade-offs and questions about prioritization, decision-making, and resource allocation, which not only seem important in their own right, but also seem prerequisite to understanding more complex settings. 

\paragraph{Paper Organization.}
First, in~\cref{sub:related_work} we discuss relevant work from the contact tracing, operations research, and computer science literatures. Then in~\cref{section:example} we present an example to illustrate the trade-offs at play in contact tracing. In~\cref{section:model} we present our formal model, and in~\cref{section:results} we give an overview of our results for the four models we consider. The remaining~\cref{section:basic,section:univariate,section:bivariate,section:general} present each of the four models, their optimal policies, and the analysis of these policies. Finally we discuss a few compelling directions for future work in~\cref{section:discussion}.

\subsection{Further Related Work} \label{sub:related_work}
Work closest to ours focuses on comparing contact tracing policies and the question of ``who to trace''. Specifically, Armbruster and Brandeau consider a network model where a tracer with limited capacity must decide at each step which contact to trace next. In~\cite{armbruster-brandeau_sim} via simulation they evaluate three different policies for prioritizing contacts. Using the best of these three policies, in~\cite{armbruster-brandeau}, they describe the trade-offs between investing in such a contact tracing effort versus directing funding toward other interventions. Tian et al. also consider a network model where they evaluate a set of contact tracing strategies targeted at various subgroups within a population~\cite{tian}. Among compartmental models, Hethcote et al. evaluate sets of targeted contact tracing strategies in~\cite{hethcote-yorke} and Eames considers targeted tracing for different population structures in~\cite{eames}. Our results differ from this prior work in that we provide provably optimal contact tracing policies, as opposed to evaluating a set of specified policies or strategies.

The importance of developing prioritization strategies for contact tracing under resource constraints is highlighted in recent surveys~\cite{muller-kretzschmar,kwok}. Kaplan et al. consider tracing under resource constraints as well, however in a somewhat different setting~\cite{kaplan_emergency,kaplan_analyzing}. There have also been many studies evaluating under what conditions contact tracing is effective and when a disease can be controlled via tracing~\cite{fraser,hellewell,klinkenberg,eames-keeling,keeling}. In~\cite{muller}, M{\"u}ller et al. analyze a branching process model and compare the fraction of contacts traced to the effective reproduction number. 

Digital contact tracing is surveyed in~\cite{ahmed}. Our work focuses on contact tracing carried out by human tracers and is orthogonal to digital contact tracing apps. In the digital setting, Lunz et al. develop optimal policies for deciding how to quarantine contacts of infected individuals, where some fraction of contacts can be traced by digital means~\cite{lunz}.

Within the operations research and computer science literatures, our work is most closely related to search problems on trees that do not have clear connections to contact tracing, but which use similar techniques. The problem closest to our work is the tree-constrained Pandora's Box problem studied in~\cite{boodaghians}, which also analyzes stochastic selection on a tree, but which involves a fairly different objective. Another related problem is stochastic probing with constraints~\cite{gupta}. While our model of contact tracing involves a tracer operating on a tree, it is quite different from the minimum latency problem on weighted trees~\cite{sitter} in that the tracer does not physically travel to the the individuals they trace. Finally, our model formulation can be viewed as falling under the general class of Markov decision processes, but the key is that it falls under the specific class of branching bandit problems, which leads to an efficient solution.
\section{Example} \label{section:example}
\begin{figure}[t]
\begin{center}
\includegraphics[width=\linewidth]{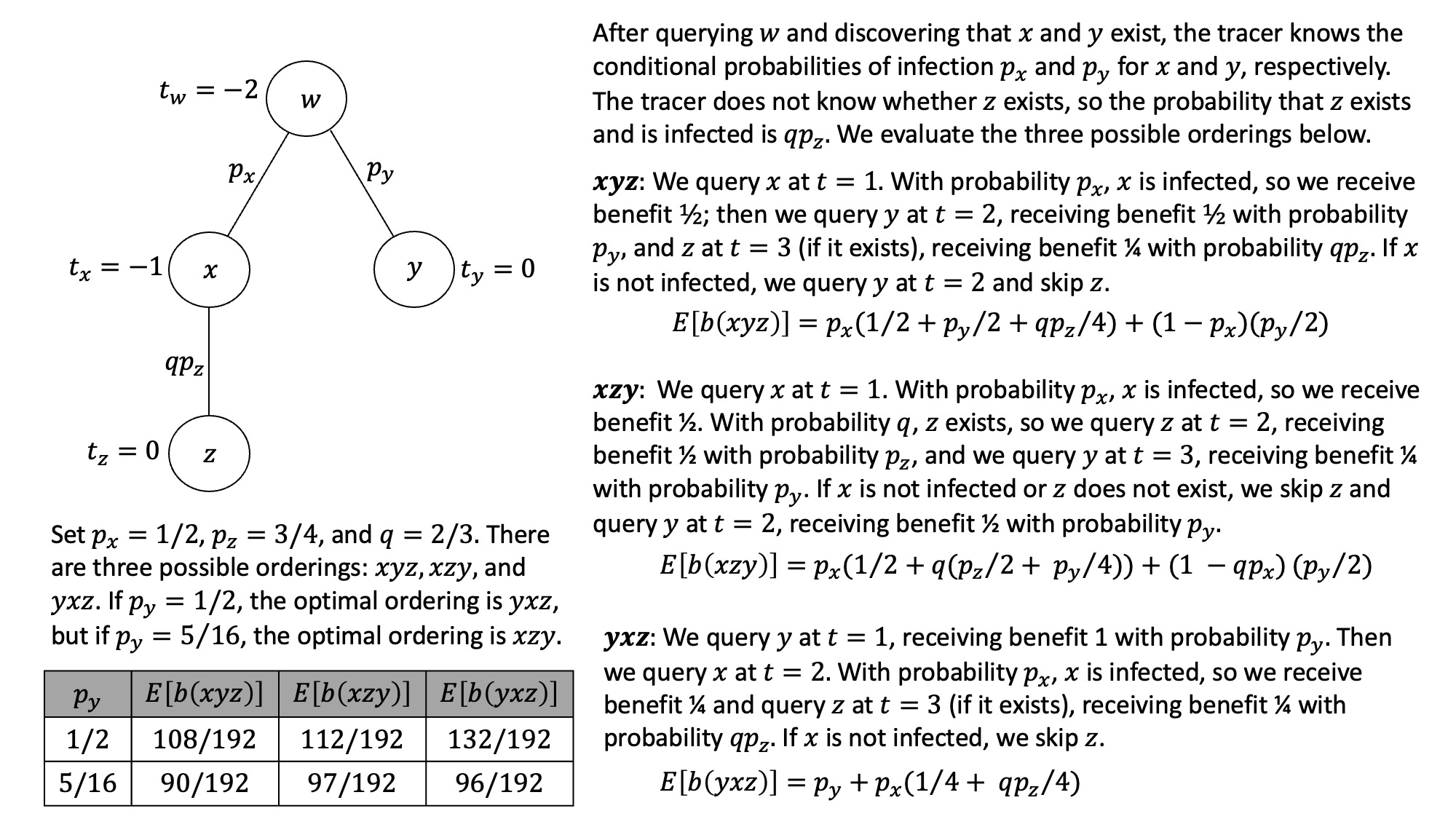}
\caption{In a mathematical model of contact tracing, the tracer knows the probability of infection for each contact and needs to choose an ordering for investigating contacts. As we show in this paper's main results, there is an efficient algorithm to decide an optimal ordering.}
\label{fig:example_diagram}
\end{center}
\end{figure}
We begin with an example that illustrates the trade-offs at play in prioritizing contacts. Suppose an infection spreads in a population over the course of two days,\footnote{We say ``day'' for simplicity; a day represents an arbitrary unit of time.} after which the infection halts and contact tracing begins. Specifically, over the course of two days each individual in a population meets one new contact each day with probability $q \in (0, 1]$, and infected individuals probabilistically infect each new person they meet. The following events take place:
\begin{itemize}
\item On day $t=-2$, $w$ is infected with probability $p_w$.
\item On day $t=-1$, with probability $q$, $w$ meets $x$ and if $w$ is infected they infect $x$ with probability $p_x$.
\item On day $t=0$, with probability $q$, $w$ meets $y$ and if $w$ is infected they infect $y$ with probability $p_y$.
\item On day $t=0$, with probabability $q$, $x$ meets $z$ and if $x$ is infected they infect $z$ with probability $p_z$. 
\item After day $t = 0$, the community goes into lockdown and no new infections occur.
\end{itemize}
Now consider a contact tracer working to discover infected cases. On day $t = 0$ the tracer finds that $w$ is infected and learns that they met $x$ on day $t = -1$ and also met $y$ on day $t = 0$. 

Going forward, on each day $t \geq 1$ the tracer chooses one individual to \textit{query}. When an infected individual is queried, they receive medical treatment and their contacts become available to query in the future. Querying an individual infected for $\tau$ days returns benefit $2^{-\tau + 1}$, which represents the probability they respond to treatment.\footnote{We will have more to say about how benefit depends on time, but roughly speaking this says that treating someone sooner is better than later.} Querying an uninfected individual returns benefit $0$. Each individual may be queried at most once. The tracer's objective is to maximize the total benefit accumulated. Since $x$ was exposed to $w$ on day $t_x = -1$, the tracer knows that $x$ may have met another contact (arbitrarily called $z$) on day $t = 0$, however they have no way of accessing $z$ (or even knowing if $z$ exists) before querying $x$. Since $y$ was exposed on day $t_y = 0$, the tracer knows that $y$ met no further contacts. 

The question is, which contact should the tracer query first, $x$ or $y$? Querying $x$ potentially grants access to $z$, however $y$ was exposed more recently than $x$, so if $y$ is infected it returns a higher benefit.~\Cref{fig:example_diagram} shows the benefit of different ways in which the tracer can operate, numerically evaluated for two different parameter settings.

This simple example already highlights a few crucial points we explore in this work. For one, querying the node with the higher expected immediate benefit is not always optimal. This lack of a simple rule might at first seem to imply that a contact tracer working in this synthetic model would need to calculate the expected benefit of each possible ordering in order to identify the optimal choice. 

In fact this is not the case. Via a dynamic programming approach we provide optimal policies for querying individuals that are computed before contact tracing begins and which are straightforward to implement: a ``priority index'' is computed for each individual, and at each step the contact tracer queries the individual with the highest priority index.

\section{Model} \label{section:model}
Modeling a contact tracing process involves a few different ingredients. People need to meet and make new contacts, through these interactions the infection needs to spread to some individuals and not others, and finally, we need a way of identifying infected individuals and their contacts. 

We develop a model with two phases. Phase 1 spans steps $-T \leq t \leq 0$ and involves a contact process, which describes how people meet new contacts, and an infection process, which describes how the infection spreads through these interactions. At the end of Phase 1, the contact and infection processes halt, and from then on no new infections occur. Phase 2 begins on step $t = 0$ and continues indefinitely. At the start of Phase 2 a set of \textit{index cases} are identified. We define an index case as an individual that is exposed to infection and becomes infected according to a probability function $p$. We are agnostic to the origin of the index cases; for example, an index case may have been identified via surveillance testing, another contact tracing effort, or random chance. Starting with the index cases, on each step $t \geq 0$ a contact tracer, simply called a \textit{tracer}, selects one individual to \textit{query}. Querying an individual models the traditional test-and-trace process: it reveals the individual's infection status, and if they are infected it reveals their contacts; these contacts may then be queried on future steps. When an infected individual is queried, they are also receive medical treatment for the disease. For the remainder of the second phase the tracer iteratively queries individuals with the goal of identifying infected cases as efficiently as possible.

\paragraph{Phase 1} Phase 1 spans steps $t = -T$ to $t = 0$. During this phase, individuals meet new contacts and the infection spreads through these interactions.

Let $D$ be an arbitary distribution on $\{ 0, 1, 2, \dots \}$. On each step each individual meets a random number of contacts $Z \sim D$, where $Z$ is drawn independently for each individual at each step. If an individual is infected, they infect each new contact they meet independently according to a probability function $p$ defined separately for each model we consider. We call these contacts \textit{exposed}. Exposed individuals are labeled by the recency of their exposure: an individual exposed at time $t = -h$ has \textit{recency} $h$. Since Phase 1 spans steps $t = -T$ to $t = 0$, exposed individuals have recencies in $\{ 0, 1, \dots, T \}$. We assume that an individual's recency can be observed but their infection status is hidden. 

\paragraph{Step $t = 0$:} At the end of step $t = 0$ the contact and infection processes halt, and from then on no new infections occur. 

To understand the system on step $t = 0$, consider an individual $v$ exposed on step $t = -h$ in Phase 1. If $v$ becomes infected, then for each step $h - 1 \leq t \leq 0$ in the remainder of Phase 1, $v$ exposes $Z_t \sim D$ individuals. Then by the end of step $t = 0$, $v$ has exposed a multiset of contacts $Z(h) = (Z_0, Z_1, \dots, Z_{h-1}) \sim D^h$ where $Z_j$ indicates the number of contacts of recency $j$. Thus we can model $v$ as the root of a tree, where the nodes in the first layer represent the contacts $v$ met after being exposed, the nodes in the second layer represent contacts individuals in the first layer met after meeting $v$, and so on. We call this a \textit{tree of potential exposures} because there is a path of contacts from $v$ to each individual in the tree along which the infection could potentially travel. Since the distribution $D$ on contacts is fixed, $v$'s recency $h(v)$ determines the distribution on the tree of potential exposures. We often refer to the nodes in the tree of exposures as $v$'s \textit{descendants}.

We call the probability that an exposed node is infected the \textit{probability of infection}. In the first model we examine, the probability of infection is constant. In the second model, the probability of infection depends on the node's recency as defined by a function $p(h)$. Either way, for both models a node's recency contains all the information needed to determine its probability of infection and the distribution on its tree of exposures. In the third model we consider, the probability of infection depends both on the node's recency and the recency of its parent. Going forward, we use the terms ``node'' and ``individual'' interchangeably.

\paragraph{Phase 2} Phase 2 begins on step $t = 0$ and continues indefinitely. Throughout Phase 2 an individual's infection status is fixed but hidden. During this phase a contact tracing effort proceeds.

Contact tracing begins on step $t = 0$ when a set of index cases are identified. From then on, at each step $t \geq 0$ the tracer selects one node to query. Querying a node reveals its infection status, and if it is infected reveals the node's children along with the recency of each child. These children may then be queried on future steps. We assume that nodes of the same recency are indistinguishable until they are queried. 

On step $t = 0$, the index cases are the only individuals available to query. The contact tracer observes the recency of each index case but has no information about any other individuals in the population. For the remainder of Phase 2, the tracer may only query a node that is an index case or the child of an infected node queried on an earlier step. Equivalently, we can view each index case as the root of a tree of potential exposures, which together form a forest. Through this query process the tracer maintains a sub-forest where any leaf not already queried is either a root or the child of an infected node. These leaves form the \textit{frontier}, and each step the tracer selects one node to query from the frontier. Observe that, by definition, each node in the frontier was exposed to infection in Phase 1. 

To understand this process further, consider the options available to the tracer each step. Since nodes of the same recency are indistinguishable until they are queried, the system on any step $t \geq 0$ is defined by a multiset $S_t = (X_0, X_1, \dots, X_T)$, where $X_j$ indicates the number of nodes of recency $j$ present in the frontier. We call $S_t$ the \textit{state} on step $t$. We use both notions of a multiset when referring to states; i.e. $S_t$ can be viewed as a collection of elements or as a vector of counts.

Querying an infected node $v$ returns a \textit{benefit}, which represents the probability that $v$ responds to medical treatment. The benefit of querying an infected node decays relative to the duration of the infection and depends on the node's recency $h(v)$ and the step $t$ it is queried, as defined by a function $b(h(v), t)$. The benefit of querying an uninfected node is 0, and each node may be queried at most once. The tracer's objective is to maximize the total expected benefit returned over the course of Phase 2.

\paragraph{Defining the objective.} 
On each step $t \geq 0$, the tracer selects a node $v_t$ to query. Since the tracer only selects nodes from the frontier, by definition the node $v_t$ was exposed to infection in Phase 1. Let $\mathbbm{1}(v_t)$ indicate whether $v_t$ is infected $(1)$ or uninfected $(0)$. If $v_t$ is infected, then benefit $b(h(v_t), t)$ is returned. Thus the total benefit the tracer accumulates over the course of Phase 2 is
\begin{align*}
\sum_{t \geq 0} \mathbbm{1}(v_t) \cdot b(h(v_t), t).
\end{align*}
The main objective is to develop a policy for querying nodes that maximizes the total expected benefit, where the expectation is taken over all realizations $\{ \mathbbm{1}(v_0), \mathbbm{1}(v_1), \mathbbm{1}(v_2), \dots \}$. Such a policy is called an \textit{optimal policy}. In order to make this problem tractable, like many other stochastic models we assume exponential discounting. Specifically, the benefit of querying a node infected for $\tau$ steps is $e^{-\beta \tau}$ for a fixed parameter $\beta > 0$.\footnote{That is, discounting begins at $t = 0$. For the sake of simplicity, for the example in~\cref{section:example} discounting begins at $t = 1$.}
\section{Overview of results} \label{section:results}
Our results can be summarized in three main contributions. 

\paragraph{1) Constructing optimal policies.} We show how to construct an optimal policy for any instance of our model. These policies have a special property: for any instance of our model the optimal policy can be described by an algorithm that assigns each node a ``type'', computes an index based on the type, and chooses the node of the highest index. Such a policy is called an \textit{index policy}.\footnote{The terms ``index'' and ``index policy'' are from the operations research literature and have no relation to the term ``index case''.} An index policy is efficiently computable if each individual index can be computed in polynomial time. For any instance of our model, we show how to compute an optimal policy that is an efficiently computable index policy. We prove this result in~\cref{section:general}.

We can interpret this result in the context of the contact tracing protocols discussed in~\cref{section:introduction}. Recall that many contact tracing protocols assign individuals to groups and then dictate the order in which groups ought to be queried. Since our construction computes indices from types, the resulting index policy induces a fixed priority ordering on types. Mapping individuals to nodes and groups to types, our results imply that, for our model, (1) the optimal policy overall is defined by a priority ordering on groups, and (2) this priority ordering has an explicit, efficient construction. It is important to note that, taken on its own, this result does not explicitly describe any policies, and it makes no guarantees about the structure of an optimal policy beyond the promise that it is an index policy. Describing the structure of optimal policies any further requires analyzing the construction itself. 

\paragraph{2) Analyzing optimal policies for different models of infection.} We examine three different versions of our model, each corresponding to a different model of infection. Analyzing the construction of optimal policies in each model gives qualitative insight into questions about prioritizing contacts from~\cref{section:introduction}, such as how to trade-off between an individual's probability of infection, the recency of their exposure, or the number of other contacts they may have exposed. The three versions we examine are identical except for a function defining the probability of infection. First we examine a basic model, where the probability of infection is constant, then we examine a univariate model, where the probability of infection decays absolutely with time, and finally we examine a bivariate model where a node's probability of infection decays relative to the incubation period of its parent.

\paragraph{3) Connecting contact tracing and the branching bandit problem.} Our key finding is a clean connection between contact tracing and the branching bandit problem~\cite{weiss}. The branching bandit problem broadly belongs to a large class of online decision problems called ``bandit'' problems. The general motivation for the branching bandit problem is scheduling projects, where each project returns a reward and begets new projects, which must then be scheduled. While bandit problems in general have numerous applications~\cite{gittins_text,slivkins_text,lattimore_text}, most applications of the branching bandit model are similar scheduling problems. We therefore find this connection between contact tracing and the branching bandit problem especially striking, since it extends the branching bandit problem to a domain to which it previously has not been applied. We formalize this connection in~\cref{section:general}.

\paragraph{Section Organization.} The remainder of this section reviews our results in the order in which they appear. First we discuss the three models of infection we examine: a basic model (\cref{section:basic}), a univariate model (\cref{section:univariate}), and a bivariate model (\cref{section:bivariate}). Then we discuss a general model of infection which encompasses the previous three models and demonstrates the connection to the branching bandit problem (\cref{section:general}). Finally, we give an overview of techniques. Throughout the paper, for simplicity we refer to ``the'' optimal policy when we are discussing a specific optimal policy we are constructing, however models may have multiple optimal policies. 

\subsection{Basic Model}
The basic model describes a standard model of infection where the probability of infection is constant. 

\begin{restatable*}{thm}{basicmain} 
\label{thm:basic_main}
In the basic model, it is optimal to query nodes in order of recency. 
\end{restatable*}

This result provides insight into some of the most vexing questions from~\cref{section:introduction}. Namely, in practice tracers seem to have these two opposing priorities --- whether to query more recent cases or cases that may have exposed many other contacts. In practice it seems that tracers lean towards querying in order of recency. In our model, this result shows that --- even taking into account the downstream effects of accessing an individual's contacts --- the optimal policy is indeed to query the most recent case first.

\subsection{Univariate Model}
In the univariate model the probability of infection decays absolutely with time according to an exponential functional form. The rate of decay is parameterized by a constant $\alpha \geq 0$, where for small values of $\alpha$, the probability of infection is close to constant, and the rate of decay accelerates as $\alpha$ increases. While in the basic model querying nodes in order of recency is optimal, this is not always the case in the univariate model. 

For small values of $\alpha$, when the probability of infection is close to constant and the setting resembles that of the basic model, the optimal policy queries nodes in order of recency. However, once $\alpha$ reaches a certain threshold, the optimal policy no longer queries nodes in order of recency, and one may wonder whether any structure remains at all. In fact, we find that there is still structure to the optimal policy: the policy always queries either the most recent or least recent node available. We say that such a policy is defined by an \textit{interleaved} priority ordering.
\begin{restatable*}[Interleaving property]{definition}{interleavingdef}
An ordering $\sigma$ on $\{ 0, 1, \dots, T \}$ is \textit{interleaved} if for all $0 \leq j \leq T$, $\sigma_j$ is either the maximum or minimum element of the suffix $\sigma_j, \dots, \sigma_T$.
\end{restatable*} 
Observe that many different priority orderings satisfy the interleaving property. For example, an ordering that prioritizes nodes by recency is interleaved, as is an ordering that prioritizes nodes by reverse recency. Our main result shows that the interleaving property holds for optimal policies in the univariate model.
\begin{restatable*}{thm}{univmain}
\label{thm:univ_main}
In the univariate model there is an optimal policy defined by an interleaved priority ordering.
\end{restatable*}
This result implies that the optimal policy always queries either the most recent node or least recent node in the frontier, which exposes an interesting trade-off between the probability that a node is infected, the recency of its exposure, and its expected number of children. Since a less recent node was exposed earlier in time, it has a larger number of children in expectation. Additionally, since the probability of infection decays with time, a less recent node also has a higher probability of infection. However, a more recent node, should it be infected, returns a higher immediate benefit. This result implies that the optimal policy pursues the extremes: it either queries the least recent node with the highest probability of infection and the most children in expectation, or it queries the most recent node, which is associated with the highest benefit. It is particularly interesting that the optimal policy never tends toward a node of intermediate recency, which seems to imply that compromising between these two extremes is suboptimal.

\subsection{Bivariate Model}
In the bivariate model the probability that a node is infected decays relative to the incubation period of its parent. As a result, defining a node in this model requires examining two parameters, the recency of the node and the recency of its parent. For a node of recency $h$ with a parent of recency $h'$, we call $\Delta = h' - h$ the \textit{span}, representing the span of time the parent has been infected upon meeting the child. The probability that a node is infected decays exponentially as a function of $\Delta$. In this model each node in the frontier is defined by a \textit{type} $(h, \Delta)$, and we examine policies on the set of types $\{ 0, 1, \dots, T \}^2$.
Thus the bivariate model demonstrates that we can analyze policies that take into account multiple parameters.

Analyzing optimal policies in the bivariate model reveals monotonic structure in the ordering of nodes with the same recency.
\begin{restatable*}{thm}{bivspan}
\label{thm:biv:span}
In the bivariate model, there is an optimal policy that queries nodes with the same recency in order of increasing span.
\end{restatable*}
We also explore monotonicity with respect to recency, in a more restricted model. Here we find that, for recencies in $\{ 0, 1, 2 \}$ and within certain constraints, nodes of the same span are queried in order of recency. 
\begin{restatable*}{thm}{bivrecency}
\label{thm:biv:recency}
In the bivariate model, for any Bernoulli distribution $D$, a large enough constant $\beta > 0$, and restricted to types with recencies in $\{ 0, 1, 2 \}$, it is optimal to query nodes of the same span in order of recency.
\end{restatable*}
While there are examples to show that a restriction on $\beta$ is necessary, the restriction to Bernoulli distributions and to recencies in $\{ 0, 1, 2 \}$ are functions of our proof technique.

\subsection{General Model}
The general model provides a broad framework that allows for a variety of factors to affect how individuals interact and how the infection spreads. As we saw in~\cref{section:introduction}, in practice individuals are often categorized according to multiple attributes, such as their profession or role within a community, their age, or the recency of their exposure. The general model captures this complexity, by assigning each node a type representing a set of attributes. Whereas types in the bivariate model are defined by two parameters, types in the general model can be defined by an arbitrary number of parameters. As a result, the general model encompasses the previous three models as well: in the basic and univariate models a node's type is its recency, and, as before, in the bivariate model a node's type is the pair $(h, \Delta)$.

Our main result shows that optimal policies in the general model are index policies on types. 
\begin{restatable*}{thm}{generalpolicy}
\label{thm:general}
For any instance of the general model, there is an optimal policy that is an index policy on the set of types. Moreover, this index policy has an efficient construction.
\end{restatable*}
This result implies that any instance of the general model has an optimal policy defined by a priority ordering on types. In order to prove this result, we show that any instance of the general model maps to an instance of the branching bandit model.
\begin{restatable*}{thm}{generalreduction}
\label{thm:general_reduction}
The general model reduces to the branching bandit model.
\end{restatable*}
This reduction formalizes the connection between contact tracing in our model and the branching bandit problem, by showing that finding an optimal policy for any instance of the general model requires analyzing an optimal policy for a corresponding instance of the branching bandit model. 

\subsection{Summary of techniques}
Here we formally define the branching bandit problem and describe how the reduction from the general model to the branching bandit problem lays the foundation for our results in the basic, univariate, and bivariate models.

The branching bandit model involves arms belonging to classes $\{ 1, \dots, L \}$, where when an arm of class $i$ is pulled it yields a non-negative reward $R(i)$, occupies $\mu(i)$ steps, and is replaced by a set of new arms $N_{i1}, \dots, N_{iL}$. Each class $i$ has an arbitrary, known, joint distribution on the random variables $R(i)$, $\mu(i)$, and $N_{i1}, \dots, N_{iL}$. At each step $t$ the system is defined by a vector $n(t) = (N_1, \dots, N_L)$, where $N_i$ is the total number of arms of class $i$ available, and a reward received at step $t$ is discounted by $e^{-\eta t}$ for a fixed parameter $\eta > 0$. The objective is to find a policy for pulling arms that maximizes the total discounted reward accumulated. As we describe in~\cref{sub:basic:weiss}, the optimal policy in the branching bandit model is an index policy on the set of classes $\{ 1, \dots, L \}$ with an efficient construction.

In~\cref{section:general} we reduce the general model to the branching bandit model, under a mapping where nodes map to arms, types map to classes, benefit maps to reward, and new children revealed by querying a node map to new arms acquired by pulling an arm. This reduction implies that the optimal policy for any instance of the general model is an index policy on types. Since the basic, univariate, and bivariate models are all versions of the general model, this guarantee extends to these three models as well. In particular, for each instance the reduction provides a construction for a priority ordering that defines the optimal policy. However, the construction on its own does not describe the policy explicitly or define any properties of the policy beyond the promise that it is an index policy. Revealing the structure of these policies requires analyzing the construction of optimal policies for each model, which is the main focus of the following sections. 
\section{Basic Model} \label{section:basic}
The basic model describes a standard model of infection where the probability of infection is a constant $p_T \in (0, 1]$. Since the benefit of querying an individual infected for $\tau$ steps is $e^{-\beta \tau}$, the benefit of querying an infected individual of recency $h$ on step $t$ is $b(h, t) = e^{-\beta (h + t)}$. The basic model is a special case of both the univariate and bivariate models.

\paragraph{Understanding the trade-offs at play.} 
As described in~\cref{section:model}, we can view each node in the frontier as the root of a tree of exposures, where less recent nodes have more children in expectation. Recall that if a node is queried and found to be infected, its children are added to the frontier. Therefore, querying a less recent node provides an opportunity to significantly expand the frontier. On the other hand, since the benefit of querying an infected node decays with time, querying a more recent node returns a higher expected benefit. When selecting a node to query, how should the tracer trade-off between these two factors --- on the one hand, the expected benefit associated with querying a node, and on the other hand, the opportunity to access its descendants in the future?

Our main result shows that --- even taking into account these downstream effects --- the optimal policy queries nodes in order of recency. 
\basicmain

One might assume that such a straightforward policy has a correspondingly straightforward proof. After all, since querying the most recent node returns the highest expected benefit, perhaps an elementary exchange argument is sufficient. 

\paragraph{An attempt at an elementary exchange argument.} Following this line of reasoning, suppose on step $t$ a node $u$ is the most recent node in the frontier, but the tracer selects a less recent node $v$ and queries $u$ on some later step. Now consider exchanging the order of $u$ and $v$ so that $u$ is queried on step $t$ instead. If $u$ is infected, then its children are added to the frontier. Since any child is more recent than its parent, and since $u$ is the most recent node in the frontier on step $t$, on step $t + 1$ the children of $u$ are more recent than all other nodes in the frontier. Therefore, a commitment to prioritizing nodes by recency requires querying the descendants of $u$ \textit{recursively} in order of recency.

Exchanging $u$ and $v$ then becomes tricky. As a thought experiment, consider querying either $u$ or $v$ in isolation and then recursively querying any descendants in order of recency. Since a node is always less recent than its descendants, querying $u$ only ever leads to querying nodes more recent than $u$. However, since $v$ is less recent than $u$, querying $v$ could lead to querying nodes more recent than $v$ but less recent than $u$. From this standpoint, it's unclear how to compare these two processes or go about an exchange. One observation is that we could potentially compare the two processes if we measured the process catalyzed by $v$ only up until the first step where a node less recent than $u$ is queried. In fact, such a scheme already exists; continuing with this argument (which is now far from elementary) essentially requires reinventing machinery developed by Weiss for the branching bandit model.

\subsection{Summarizing Weiss's Model} \label{sub:basic:weiss}
Here we summarize Weiss's branching bandit model and the optimal policy for pulling arms, with a slight derparture from the original notation in~\cite{weiss}. Then in~\cref{sub:basic:defining} we map the basic model to the branching bandit model by mapping each node to an arm and each recency to a class of arms. 

Recall that the branching bandit model is a general framework involving arms of classes $\{ 1, 2, \dots, L \}$, where each time an arm of class $i$ is pulled it returns a non-negative reward $R(i)$, occupies $\mu(i)$ steps, and is replaced by a set of new arms $N_{i1}, \dots, N_{iL}$. Weiss's key idea is the notion of a \textit{period}. A period is defined with respect to any arbitrary priority ordering $\sigma$ on classes $\{ 1, 2, \dots, L \}$. For any class $i \in \{ 1, 2, \dots, L \}$, an $(i, \sigma_j)$-period is defined as follows. Initially only a single arm of class $i$ exists in the system. On step $t=0$ the arm of class $i$ is pulled, and from then on at each step an arm is pulled according to the priority ordering $\sigma$ until all classes $i' \preceq \sigma_j$ are exhausted. Therefore, an $(i, \sigma_j)$-period is really defined with respect to the prefix $\sigma_0, \sigma_1, \dots, \sigma_j$, since the ordering of the later elements is irrelevant. 

A few random variables describe an $(i, \sigma_j)$-period. Let $r(i, \sigma_j)$ be the total discounted reward accumulated during the period, and let $\tau(i, \sigma_j)$ be the duration. Observe that following a period of duration $\tau(i, \sigma_j)$, the reward of the next query is pre-multiplied by the discount factor $e^{-\eta \tau(i, \sigma_j)}$. Call $\gamma(h, \sigma_j) = \Exp[e^{-\eta \tau(i, \sigma_j)}]$ the expected pre-multiplier. 

\paragraph{Defining the optimal policy in Weiss's model.} Weiss leverages this notion of a period to inductively construct the optimal priority ordering. He then proves that the index policy defined by this optimal priority ordering is in fact an optimal policy outright.

The optimal priority ordering is constructed via a dynamic program which maintains an optimal prefix that lengthens in each round. In round $0$ the highest priority element $\sigma_0$ is selected. Entering any round $k > 0$, the prefix $\sigma_0, \sigma_1, \dots, \sigma_{k-1}$ is fixed, and $\sigma_k$ is selected by comparing $(i, \sigma_{k-1})$-periods over all classes $i$ \textit{not} in the prefix.

Describing the base case requires one more definition. For any $i \in \{ 1, 2, \dots, L \}$ an $(i, \emptyset)$-period simply pulls an arm of class $i$ exactly once. The reward $r(i, \emptyset)$, duration $\tau(i, \emptyset)$, and pre-multiplier $\gamma(i, \emptyset)$ are defined analogously, so by definition $\tau(i, \emptyset) = 1$ and $\gamma(i, \emptyset) = e^{-\eta}$. 

The highest priority is assigned to the element with the highest expected immediate reward. 
\begin{align*}
\sigma_0 &= \mathop{\arg\max}_{ i \in \{ 1, \dots, L \} } \Exp[r(i, \emptyset)]
\end{align*}
In any round $k > 0$, the prefix $\sigma_0, \sigma_1, \dots, \sigma_{k - 1}$ is fixed, and $\sigma_k$ is selected from the elements not already in the prefix.
\begin{align*}
\sigma_k &= \mathop{\arg\max}_{ i \in \{1, \dots, L \} \setminus \{ \sigma_0, \dots, \sigma_{k-1} \} } \frac{ \Exp[r(i, \sigma_{k-1})]}{1 - \gamma(i, \sigma_{k-1})}
\end{align*}
Section 3 of~\cite{weiss} proves that the priority ordering $\sigma$ constructed via this dynamic program is the optimal priority ordering, and moreover, that the index policy defined by $\sigma$ is the optimal policy overall.

\paragraph{An overview of Weiss’s proof idea.} For a high-level overview of the proof idea, first observe that $\sigma_0$ is the element associated with the maximal expected immediate reward. To understand the intuition behind this selection, imagine choosing between a node $u$ which returns the maximal expected immediate reward and some other node $v$. Since the expected immediate reward of any descendant of $v$ is at most that of $u$, there is no reason to delay querying $u$ in order to access the descendants of $v$. 

Moving ahead to any round $k > 0$, committing to the prefix $\sigma_0, \sigma_1, \dots, \sigma_{k - 1}$ implies that if we query a node, we are committing to recursively querying its descendants according to the ordering defined by the prefix. Therefore, we are no longer comparing individual queries but instead $(h, \sigma_{k - 1})$-periods. To compare periods, we need to trade-off between the expected reward $\Exp[r(h, \sigma_{k - 1})]$ returned and the expected pre-multiplier $\gamma(h, \sigma_{k - 1})$ imposed on the queries that follow. In this sense, we can think of $\sigma_k$ as selecting, from the elements not already in the prefix, the element $h$ whose $(h, \sigma_{k - 1})$-period has the highest expected “rate” of reward in this time-discounted setting.

\subsection{Defining the optimal policy} \label{sub:basic:defining}
Here we map the basic model of contact tracing to the branching bandit model by mapping nodes to arms, recencies to classes, benefit to reward, and the new children revealed by querying a node to new arms acquired by pulling an arm. We prove this reduction formally in~\cref{section:general}. We now restate the above dynamic program applied to the basic model, beginning with the definition of a period.

A period is defined with respect to any arbitrary priority ordering $\sigma$ on recencies $\{ 0, 1, \dots, T \}$. For any $h \in \{ 0, 1, \dots, T \}$, an $(h, \sigma_j)$-period is defined as follows. Initially only a single root $v$ of recency $h(v) = h$ exists in the frontier. On step $t=0$ the node $v$ is queried, and from then on at each step a descendant is queried according to $\sigma$ until all nodes $v'$ with recency $h(v') \preceq \sigma_j$ are exhausted. Let $b(h, \sigma_j)$ be the total discounted benefit accumulated during the period, let $\tau(h, \sigma_j)$ be the duration, and let $\gamma(h, \sigma_j) = \Exp[e^{-\beta \tau(h, \sigma_j)}]$ be the expected pre-multiplier. For any $h \in \{ 0, 1, \dots, T \}$ an $(h, \emptyset)$-period simply queries a single root $v$ of recency $h(v) = h$ exactly once. The benefit $b(h, \emptyset)$, duration $\tau(h, \emptyset)$, and pre-multiplier $\gamma(h, \emptyset)$ are defined analogously, so by definition $\tau(h, \emptyset) = 1$ and $\gamma(h, \emptyset) = e^{-\beta}$. Note that the benefit, duration, and pre-multiplier associated with a period are by definition non-negative.

Then $\sigma_0$ is assigned to the element with the highest expected immediate benefit.
\begin{align}
\sigma_0 &= \mathop{\arg\max}_{ h \in \{ 0, \dots, T \} } \Exp[b(h, \emptyset)] \label{eqn:dp_0}
\end{align}
In any round $k > 0$, the prefix $\sigma_0, \sigma_1, \dots, \sigma_{k - 1}$ is fixed, and $\sigma_k$ is selected from the elements not already in the prefix.
\begin{align}
\sigma_k &= \mathop{\arg\max}_{ h \in \{0, \dots, T \} \setminus \{ \sigma_0, \dots, \sigma_{k-1} \} } \frac{ \Exp[b(h, \sigma_{k-1})]}{1 - \gamma(h, \sigma_{k-1})} \label{eqn:dp_k}
\end{align}

\subsection{Analyzing the optimal policy} \label{sub:basic:analyzing}
While the above dynamic program produces a priority ordering $\sigma$ that defines an optimal policy, this in no way implies that the resulting optimal policy queries nodes in order of recency. Indeed, in~\cref{section:univariate}, we explore other regimes in which optimal policies exhibit very different structure. The main challenge in the following proof is to show that $\sigma = (0, 1, \dots, T)$, which implies that the optimal policy queries nodes in order of recency.

\paragraph{An observation about periods.} The following proof, in addition to many of the proofs in later sections, involves analyzing periods. For the purpose of analysis, it will be helpful to separate the immediate benefit returned by querying a single root from the benefit returned through recursively querying its descendants. Specifically, for a root $v$ with recency $h(v) = h$, we can think of an $(h, \sigma_j)$-period as the concatenation of two sub-periods, where $v$ is queried in the first sub-period and descendants of $v$ are queried in the second sub-period. Thus the first sub-period is equivalent to an $(h, \emptyset)$-period. If $v$ is infected, then at the start of the second sub-period the children of $v$ make up the frontier. Since $h(v) = h$, the children of $v$ have recencies defined by a random multiset $Z(h) = (Z_0, Z_1, \dots, Z_{h-1}) \sim D^h$, where $Z_j$ indicates the number of children of recency $j$. To describe the second sub-period, we first need to define a generalization of a period, called an epoch.

For any state $S = (X_0, X_1, \dots, X_T)$, define an $(S, \sigma_j)$-epoch as follows. At the start of step $t = 0$, the state $S$ defines the recencies of nodes present in the frontier. On each step $t \geq 0$ nodes are queried according to the ordering defined by the prefix $\sigma_0, \sigma_1, \dots, \sigma_j$ until all nodes $v'$ with recency $h(v') \preceq \sigma_j$ are exhausted. Let $b(S, \sigma_j)$ be the total discounted benefit accumulated over the period, let $\tau(S, \sigma_j)$ be the duration, and call $\gamma(S, \sigma_j) = \Exp[e^{-\beta \tau(S, \sigma_j)}]$ the expected pre-multiplier. 

Therefore an $(h, \sigma_j)$-period can be thought of as an $(h, \emptyset)$-period, which in the event that the root is infected, is followed by a $(Z(h), \sigma_j)$-epoch for $Z(h) \sim D^h$. If the root is infected, then a benefit of $e^{-\beta h}$ plus the benefit accumulated during the following epoch is returned. If the root is not infected then no benefit is returned. Therefore, the expected benefit accumulated over the $(h, \sigma_j)$-period is
\begin{align}
\Exp[b(h, \sigma_j)] &= p_T \left( e^{-\beta h} + e^{-\beta} \mathop{\Exp}_{Z(h) \sim D^h} \left[ b(Z(h), \sigma_j) \right] \right) \label{eqn:benefit}
\end{align}
Likewise $\gamma(h, \sigma_j)$ has a similar decomposition. If the root is infected, then $\tau(h, \sigma_j) = 1 + \tau(Z(h), \sigma_j)$ where $Z(h) \sim D^h$. If the root is not infected, then $\tau(h, \sigma_j) = 1$. Therefore
\begin{align}
\gamma(h, \sigma_j) &= \Exp \left[ e^{-\beta \tau(h, \sigma_j)} \right] \nonumber \\
&= p_T \mathop{\Exp}_{Z(h) \sim D^h} \left[ e^{-\beta (1 + \tau(Z(h), \sigma_j))} \right] + (1 - p_T) e^{-\beta} \nonumber \\
&= e^{-\beta} \left( p_T \mathop{\Exp}_{Z(h) \sim D^h} \left[ e^{-\beta \tau(Z(h), \sigma_j)} \right] + 1 - p_T \right) \label{eqn:premultiply}
\end{align}
For both decompositions the expectation is over the randomness of a node's descendants and their infection statuses.

\paragraph{Restating the optimal policy.} The probability of infection is $p_T$, and the benefit of querying an infected node of recency $h$ is $e^{-\beta h}$, so $\Exp[b(h, \emptyset)] = p_T e^{-\beta h}$. Applying this definition to~\cref{eqn:dp_0},
\begin{align}
\sigma_0 &= \mathop{\arg\max}_{ h \in \{ 0, 1, \dots, T \} } p_T e^{-\beta h} \label{eqn:basic_dp0}
\end{align}
In any round $k > 0$, the prefix $\sigma_0, \sigma_1, \dots, \sigma_{k - 1}$ is fixed, and $\sigma_k$ is selected from the elements not already in the prefix. Applying~\cref{eqn:benefit,eqn:premultiply} to~\cref{eqn:dp_k},
\begin{align}
\sigma_k &= \mathop{\arg\max}_{ h \in \{0, \dots, T \} \setminus \{ \sigma_0, \dots, \sigma_{k-1} \} } \frac{ p_T \left( e^{-\beta h} + e^{-\beta} \mathop{\Exp}_{Z(h) \sim D^h} \left[ b(Z(h), \sigma_{k - 1}) \right] \right) }{1 - e^{-\beta} \left( p_T \mathop{\Exp}_{Z(h) \sim D^h} \left[ e^{-\beta \tau(Z(h), \sigma_j)} \right] + 1 - p_T \right) } \label{eqn:basic_dpk}
\end{align}
With these decompositions we are now equipped to prove~\cref{thm:basic_main}. 
\setcounter{thm}{0}
\basicmain
\begin{proof}
Fix $T \in \NN$, $p_T \in (0, 1]$, and $\beta > 0$. Let $\sigma$ be the optimal priority ordering constructed via the dynamic program in~\cref{sub:basic:defining}. It suffices to show that for all $j \in \{ 0, 1, \dots, T \}$, $\sigma_j = j$. 

Proof by induction on $j$. By~\cref{eqn:basic_dp0},  
\begin{align*}
\sigma_0 &= \mathop{\arg\max}_{ h \in \{ 0, 1, \dots, T \} } p_T e^{-\beta h} = 0
\end{align*}
Fix $0 < k < T$. Assume that for all $0 \leq j \leq k-1$, $\sigma_j = j$, resulting in the prefix is $(0, 1, \dots, k-1)$. Fix $h \geq k$. For $Z(h) = (Z_0, Z_1, \dots, Z_{h-1}) \sim D^h$, consider a $(Z(h), \sigma_{k-1})$-epoch. The epoch begins in state $Z(h)$, and since this particular prefix is in order of recency, nodes are then queried recursively in order of recency until all nodes $v'$ of recency $h(v') \leq k-1$ are exhausted. Thus a $(Z(h), \sigma_{k - 1})$-epoch only queries nodes with recencies in $\{ 0, 1, \dots, k - 1 \}$. Let $\Pi_k(Z(h)) = (Z_0, Z_1, \dots, Z_{k-1})$ be the projection of $Z(h)$ onto the first $k$ coordinates, and note that $\Pi_k(Z(h)) \sim D^k$. Thus the benefit and duration of a $(Z(h), \sigma_{k - 1})$-epoch are identically distributed as the benefit and duration of a $(\Pi_k(Z(h)), \sigma_{k - 1})$-epoch. Call the total expected benefit $b_k$ and observe that it does not depend on $h$.
\begin{align}
\mathop{\Exp}_{Z(h) \sim D^h} \left[ b(Z(h), \sigma_{k-1}) \right] = \mathop{\Exp}_{Z(h) \sim D^h} \left[ b(\Pi_k(Z(h)), \sigma_{k-1}) \right] = \mathop{\Exp}_{Z(k) \sim D^k} \left[ b(Z(k), \sigma_{k-1}) \right] = b_k \label{eqn:bk}
\end{align}
Since the durations are also identically distributed, the expected pre-multipliers are equal. Call this pre-multiplier $\gamma_k$ and observe that it has does not depend on $h$.
\begin{align}
\mathop{\Exp}_{Z(h) \sim D^h} \left[ \gamma(Z(h), \sigma_{k-1}) \right] = \mathop{\Exp}_{Z(h) \sim D^h} \left[ \gamma(\Pi_k(Z(h)), \sigma_{k-1}) \right] = \mathop{\Exp}_{Z(k) \sim D^k} \left[ \gamma(Z(k), \sigma_{k-1}) \right] = \gamma_k \label{eqn:gk}
\end{align}
Combining~\cref{eqn:benefit,eqn:bk}, the expected benefit of an $(h, \sigma_{k-1})$-period is
\begin{align*}
\Exp[b(h, \sigma_{k - 1})] &= p_T \left( e^{-\beta h} + e^{-\beta} \mathop{\Exp}_{Z(h) \sim D^h} \left[ b(Z(h), \sigma_{k - 1}) \right] \right) \\
&= p_T \left( e^{-\beta h} + e^{-\beta} b_k \right) \\
\end{align*}
Likewise, combining~\cref{eqn:premultiply,eqn:gk}, the expected pre-multiplier is
\begin{align*}
\gamma(h, \sigma_{k - 1}) &= e^{-\beta} \left( p_T \mathop{\Exp}_{Z(h) \sim D^h} \left[ e^{-\beta \tau(Z(h), \sigma_{k - 1})} \right] + 1 - p_T \right) \\
&= e^{-\beta} \left( p_T \gamma_k + 1 - p_T \right)
\end{align*}
By~\cref{eqn:basic_dpk}, given the prefix $(0, 1, \dots, k-1)$, $\sigma_k$ is selected by comparing $(h, \sigma_{k-1})$-periods for $h \in \{ k, \dots, T \}$. Therefore for all $h$ in consideration, the above identities hold. Then
\begin{align*}
\sigma_k &= \mathop{\arg\max}_{ h \in \{k, \dots, T \} } \frac{ \Exp[b(h, \sigma_{k-1})]}{1 - \gamma(h, \sigma_{k-1})} \\
&= \mathop{\arg\max}_{ h \in \{ k, \dots, T \} } \frac{ p_T \left( e^{-\beta h} + e^{-\beta} b_k \right) }{1 - e^{-\beta} \left( p_T \gamma_k + 1 - p_T \right) } \\
&= k
\end{align*}
To understand the final equality, note that $e^{-\beta h}$ decreases with $h$, and all other terms are constant with respect to $h$. Thus $k$, the smallest value of $h$ in consideration, attains the maximum. Since $\sigma_k = k$, for all $j \in \{ 0, 1, \dots, T \}$, $\sigma_j = j$. 
\end{proof}
Thus, despite the fact that the optimal policy in the basic setting is simply querying nodes in order of recency, the proof of optimality requires balancing trade-offs between recency and benefit.
\section{Univariate Model} \label{section:univariate}
While in the basic model querying nodes in order of recency is optimal, this is not always the case in the univariate model. In the univariate model the probability of infection decays absolutely with time according to an exponential functional form.

Fixing $p_T \in (0, 1]$ and $\alpha \geq 0$, for any $h \in \{ 0, 1, \dots, T \}$ the probability of infection is $p(h) = p_T e^{-\alpha (T - h)}$. All other parameters of the univariate model are identical to the basic model. Observe that if $\alpha = 0$ the probability of infection is constant and the univariate model instantiates the basic model. Additionally, $p_T$ represents the probability that a node of recency $T$ is infected.

\paragraph{Examining trade-offs and thresholds.} As in the basic model, there is a trade-off between querying a less recent node, which provides an opportunity to significantly expand the frontier, and querying a more recent node which, if infected, returns a larger immediate benefit. However, since the probability of infection decays with time, a more recent node has a lower probability of infection. As a result, querying a more recent node does not necessarily return a higher \textit{expected} immediate benefit. Thus choosing a node to query involves a trade-off between a node's recency and probability of infection. 

The calculation of this trade-off changes with $\alpha$. If $\alpha = 0$ the probability of infection is constant, so as in the basic model, querying nodes in order of recency is optimal. As $\alpha$ increases it eventually hits a threshold beyond which this policy is no longer optimal, at which point it is natural to wonder whether any structure remains. In fact, we show that the there is still structure to the optimal policy: the policy always queries either the most recent or least recent node available. We say that such a policy is defined by an \textit{interleaved} priority ordering.
\interleavingdef
Observe that many different priority orderings satisfy the interleaving property. For example, an ordering that prioritizes nodes by recency is interleaved, as is an ordering that prioritizes nodes by reverse recency. Once $\alpha = \beta$, each node returns the same expected benefit, and we show that in this case any priority ordering is optimal. Once $\alpha > \beta$, a less recent node both returns a higher expected immediate benefit and has more children in expectation, and we show that in this regime querying nodes in order of reverse recency is optimal. Taken together, these results imply that the interleaving property holds for all instances of the univariate model.

\univmain
\begin{proof}
The main challenge is proving that the interleaving property holds for $0 < \alpha < \beta$, which we show in~\cref{thm:univ_interleaved}. If $\alpha = 0$, querying nodes in order of recency is optimal, by~\cref{thm:basic_main}. If $\alpha \geq \beta$, then querying nodes in order of reverse recency is optimal, as shown in~\cref{thm:univ_equal,thm:univ_reverse}. In general, the proof techniques involve analyzing periods and follow a similar structure as the proof in~\cref{thm:basic_main}.
\end{proof}

\subsection{Defining the optimal policy} \label{sub:univ:dp}
The optimal policy is constructed via the same dynamic program as defined in~\cref{eqn:basic_dp0,eqn:basic_dpk}, except the probability of infection is now determined by the function $p(h)$.
\begin{align}
\sigma_0 &= \mathop{\arg\max}_{ h \in \{ 0, 1, \dots, T \} } p(h) e^{-\beta h} \label{eqn:univ:dp0}
\end{align}
In any round $k > 0$, the prefix $\sigma_0, \sigma_1, \dots, \sigma_{k - 1}$ is fixed, and $\sigma_k$ is selected from the elements not already in the prefix.
\begin{align}
\sigma_k &= \mathop{\arg\max}_{ h \in \{0, \dots, T \} \setminus \{ \sigma_0, \dots, \sigma_{k-1} \} } \frac{ p(h) \left( e^{-\beta h} + e^{-\beta} \mathop{\Exp}_{Z(h) \sim D^h} \left[ b(Z(h), \sigma_{k - 1}) \right] \right) }{1 - e^{-\beta} \left( p(h) \mathop{\Exp}_{Z(h) \sim D^h} \left[ e^{-\beta \tau(Z(h), \sigma_{k - 1})} \right] + 1 - p(h) \right) } \label{eqn:univ:dpk}
\end{align}

\subsection{Analyzing the optimal policy}
We analyze the above construction via similar strategies as in the proof of~\cref{thm:basic_main}.
\begin{restatable}{theorem}{univinterleaved} \label{thm:univ_interleaved}
In the univariate model, for $0 < \alpha < \beta$, there is an optimal policy defined by an interleaved priority ordering. 
\end{restatable}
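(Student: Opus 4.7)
The plan is to induct on the round $k$ of the dynamic program in \cref{sub:univ:dp}, showing that at each step one can choose $\sigma_k$ so that the prefix $\sigma_0, \ldots, \sigma_k$ remains interleaved. For the base case, I would directly compute $\Exp[b(h, \emptyset)] = p(h) e^{-\beta h} = p_T e^{-\alpha T} e^{(\alpha - \beta) h}$; since $\alpha - \beta < 0$, this is strictly decreasing in $h$, so by \cref{eqn:univ:dp0} we may take $\sigma_0 = 0$, which is the minimum of $\{0, \ldots, T\}$ and therefore an extreme of that set.

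For the inductive step, suppose the prefix $\sigma_0, \ldots, \sigma_{k-1}$ is interleaved, so the placed elements form $\{0, \ldots, a-1\} \cup \{b+1, \ldots, T\}$ for some $a, b$ with $a + (T - b) = k$, and the remaining elements form the contiguous block $\{a, a+1, \ldots, b\}$. The crucial structural observation is that for any $h \in \{a, \ldots, b\}$, descendants of an $h$-root have recencies strictly less than $h \leq b$, so none of them ever falls into the ``high'' part of the prefix $\{b+1, \ldots, T\}$. Hence a $(Z(h), \sigma_{k-1})$-epoch processes only descendants with recency in $\{0, \ldots, a-1\}$: the components of $Z(h)$ with index at least $a$ sit idle and contribute nothing to either the benefit or the duration. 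Following the reasoning of \cref{eqn:bk,eqn:gk}, both $\Exp[b(Z(h), \sigma_{k-1})]$ and $\Exp[e^{-\beta \tau(Z(h), \sigma_{k-1})}]$ reduce to constants $b_a, \gamma_a$ that depend only on the prefix and on $a$, not on $h$.

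Substituting these constants and $p(h) = p_T e^{-\alpha(T - h)}$ into \cref{eqn:univ:dpk} reduces the selection of $\sigma_k$ to maximizing an expression of the form
\[
f(h) = \frac{A\, e^{(\alpha - \beta) h} + B\, e^{\alpha h}}{C + D\, e^{\alpha h}}
\]
over $h \in \{a, \ldots, b\}$, where $A, C > 0$ and $B, D \geq 0$. The main obstacle is showing that this maximum is always attained at $h = a$ or $h = b$. My plan is to make the substitution $v = e^{-\alpha h}$ and set $r = \beta/\alpha > 1$, which rewrites $f$ as
\[
g(v) = \frac{C_1 v^r + C_2}{C_3 v + C_4}.
\]
A direct computation shows that the numerator of $g'(v)$ equals $C_1 C_3 (r - 1) v^r + C_1 C_4 r v^{r-1} - C_2 C_3$, whose derivative in $v$ is $r(r-1) v^{r-2} C_1 (C_3 v + C_4) > 0$ for $v > 0$, so the numerator of $g'$ is strictly increasing and tends to $+\infty$. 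Consequently $g'$ has at most one zero on $(0, \infty)$, at which $g$ transitions from decreasing to increasing; hence $g$ is either strictly increasing or valley-shaped, and its maximum over any subinterval of $(0, \infty)$ is attained at an endpoint. Since $v$ is monotone in $h$, the maximum of $f(h)$ over $\{a, \ldots, b\}$ is attained at $h = a$ or $h = b$, so choosing $\sigma_k$ accordingly preserves the interleaving property and closes the induction.
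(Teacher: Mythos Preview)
Your proposal is correct and follows essentially the same approach as the paper: both induct on $k$, use the key observation that descendants of any $h$ in the remaining block have recencies in $\{0,\dots,l-1\}$ so that the epoch quantities collapse to constants independent of $h$, and then argue that the resulting one-variable index function attains its maximum at an endpoint of the block. The only difference is in that last step---the paper shows $I_l$ is convex in $h$ by directly checking $I_l'$ is non-decreasing, whereas you substitute $v=e^{-\alpha h}$ and show the transformed function $g(v)=(C_1 v^r+C_2)/(C_3 v+C_4)$ is valley-shaped; both routes yield the endpoint property, and yours is a clean alternative.
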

\begin{proof}
Fix $T \in \NN$, $p_T \in (0, 1]$, $\beta > 0$, and $0 < \alpha < \beta$. Let $\sigma$ be the optimal priority ordering constructed via the dynamic program in~\cref{sub:univ:dp}. It suffices to show that for all $0 \leq j \leq T$, $\sigma_j \in \{ \max_{i \geq j} \sigma_i, \min_{i \geq j} \sigma_i \}$.

Proof by induction. By definition,
\begin{align*}
\sigma_0 &= \arg\max_{ h \in \{ 0, 1, \dots, T \} } p(h) e^{-\beta h} \\
&= \arg\max_{ h \in \{ 0, 1, \dots, T \} } p_T e^{-\alpha T} e^{(\alpha - \beta) h} \\
&= 0
\end{align*}
Here $0$ attains the maximum, since $\alpha < \beta$.

Fix $0 < k < T$. Assume that for all $0 \leq j \leq k-1$, $\sigma_j \in \{ \max_{i \geq j} \sigma_i, \min_{i \geq j} \sigma_i \}$. Then there exist elements $l, m \in \{ 0, 1, \dots, T \}$ with $0 < l \leq m$ such that $\{ l, \dots, m \} = \{ 0, 1, \dots, T \} \setminus \{ \sigma_0, \sigma_1, \dots, \sigma_{k - 1} \}$. Fix $h \in \{ l, \dots, m \}$ and let $v$ be a node with recency $h(v) = h$. Then the children of $v$ are described by a multiset $Z(h) = (Z_0, Z_1, \dots, Z_{h-1}) \sim D^h$, where $Z_j$ indicates the number of children of recency $j$. Consider a $(Z(h), \sigma_{k - 1})$-epoch. Since a child is more recent than its parent, any descendant $u$ of $v$ has recency $h(u) \leq h - 1$. As a result, a $(Z(h), \sigma_{k - 1})$-epoch involves only nodes with recencies in $\{ 0, 1, \dots, h - 1 \}$. Additionally, the prefix $\sigma_0, \sigma_1, \dots, \sigma_{k - 1}$ dictates that any node $u$ queried during the epoch has recency $h(u) \preceq \sigma_{k - 1}$. Combining these two restrictions, if a node $u$ is queried during a $(Z(h), \sigma_{k - 1})$-epoch then
\begin{align*}
h(u) \in \{ \sigma_0, \sigma_1, \dots, \sigma_{k - 1} \} \cap \{ 0, 1, \dots, h - 1 \} = \{ 0, 1, \dots, l - 1 \}
\end{align*}
Let $\Pi_l(Z(h)) = (Z_0, Z_1, \dots, Z_{l - 1})$ be the projection of $Z(h)$ onto the first $l$ coordinates, and note that $\Pi_l(Z(h)) \sim D^l$. Following the same reasoning as in~\cref{thm:basic_main}, the benefit and duration of a $(Z(h), \sigma_{k - 1})$-epoch are identically distributed as the benefit and duration of a $(\Pi_l(Z(h)), \sigma_{k - 1})$-epoch. Call the total expected benefit $b_l$ and observe that it does not depend on $h$.
\begin{align}
\mathop{\Exp}_{Z(h) \sim D^h} \left[ b(Z(h), \sigma_{k-1}) \right] = \mathop{\Exp}_{Z(h) \sim D^h} \left[ b(\Pi_l(Z(h)), \sigma_{k-1}) \right] = \mathop{\Exp}_{Z(l) \sim D^l} \left[ b(Z(l), \sigma_{k-1}) \right] = b_l \label{eqn:univ:benefit}
\end{align}
Likewise, since the duration is identically distributed, the expected pre-multipliers are equal. Call this pre-multiplier $\gamma_l$, and observe that it does not depend on $h$.
\begin{align}
\mathop{\Exp}_{Z(h) \sim D^h} \left[ \gamma(Z(h), \sigma_{k-1}) \right] = \mathop{\Exp}_{Z(h) \sim D^h} \left[ \gamma(\Pi_l(Z(h)), \sigma_{k-1}) \right] = \mathop{\Exp}_{Z(l) \sim D^l} \left[ \gamma(Z(l), \sigma_{k-1}) \right] = \gamma_l \label{eqn:univ:premultiply}
\end{align}
Given the prefix $\sigma_0, \sigma_1, \dots, \sigma_{k-1}$, the element $\sigma_k$ is selected by comparing $(h, \sigma_{k-1})$-periods for $h$ not in the prefix, that is, for $h \in \{ l, \dots, m \}$. Therefore for all $h$ in consideration~\cref{eqn:univ:benefit,eqn:univ:premultiply} hold. Applying these identities and the definition of $p(h)$ to~\cref{eqn:univ:dpk},
\begin{align*}
\sigma_k &= \mathop{\arg\max}_{ h \in \{ l, \dots, m \} } \frac{ p(h) \left( e^{-\beta h} + e^{-\beta} b_l \right) }{1 - e^{-\beta} \left( p(h) \gamma_l + 1 - p(h) \right) } \\
&= \mathop{\arg\max}_{ h \in \{ l, \dots, m \} } \frac{ p_T e^{-\alpha (T - h)} \left( e^{-\beta h} + e^{-\beta} b_l \right) }{1 - e^{-\beta} + p_T e^{-\alpha (T - h) - \beta} (1 - \gamma_l)}
\end{align*}
Define the function $I_l: \RR \rightarrow \RR$, where
\begin{align}
I_l(x) & = \frac{ p_T e^{-\alpha (T - x)} \left( e^{-\beta x} + e^{-\beta} b_l \right) }{1 - e^{-\beta} + p_T e^{-\alpha (T - x) - \beta} (1 - \gamma_l)} \label{eqn:univ:il}
\end{align}
Then 
\begin{align}
\sigma_k &= \mathop{\arg \max}_{h \in \{ l, \dots, m \}} I_l(h) \label{eqn:univ:interval_ub}
\end{align}
By~\cref{lem:univ_increasing} $I_l$ is convex. This implies that the maximum value of $I_l$ on any closed interval is attained at an endpoint. Therefore
\begin{align}
\sigma_k = \mathop{\arg\max}_{x \in \{ l , \dots, m \} } I_l(x) = \mathop{\arg\max}_{x \in [l, m]} I_l(x) = \mathop{\arg\max}_{x \in \{ l, m \}} I_l(x)
\end{align}
As a result, $\sigma_k \in \{ l, m \}$, so $\sigma_k \in \{ \min_{i \geq k} \sigma_i, \max_{i \geq k} \sigma_i \}$. Therefore $\sigma$ is interleaved.
\end{proof}
The following lemma completes the above proof.
\begin{lemma} \label{lem:univ_increasing}
$I_l$ is convex.
\end{lemma}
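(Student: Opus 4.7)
The plan is a direct second-derivative calculation, organized around a clean variable substitution. First I would simplify~\cref{eqn:univ:il} by collecting constants: letting $A = p_T e^{-\alpha T}$, $B = A e^{-\beta} b_l$, $C = 1 - e^{-\beta}$, and $D = A e^{-\beta}(1 - \gamma_l)$ and dividing numerator and denominator by $e^{\alpha x}$, $I_l$ takes the compact form
\[
I_l(x) = \frac{A e^{-\beta x} + B}{C e^{-\alpha x} + D}.
\]
Substituting $y = e^{-\alpha x}$ (a bijection $\RR \to (0,\infty)$) and writing $r = \beta/\alpha > 1$ gives $I_l(x) = J(y)$ where $J(y) = (Ay^r + B)/(Cy + D)$. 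Applying the chain rule twice together with $\tfrac{dy}{dx} = -\alpha y$ yields the identity $I_l''(x) = \alpha^2 y \cdot (yJ'(y))'$, so convexity of $I_l$ in $x$ is equivalent to $yJ'(y)$ being non-decreasing in $y > 0$.

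Next I would compute $J'(y)$ by the quotient rule, multiply by $y$, and differentiate once more. After clearing the positive factor $(Cy+D)^2$, the inequality $(yJ'(y))' \ge 0$ reduces to the polynomial bound
\[
(r-1)^2 AC^2 y^{r+1} + (2r^2 - 2r - 1)\,ACDy^r + r^2 AD^2 y^{r-1} + BC^2 y - BCD \;\ge\; 0.
\]
The leading $(r-1)^2$-coefficient (positive since $r > 1$) dominates for large $y$, and the trailing pair groups as $BC(Cy - D)$ which is $\ge 0$ once $Cy \ge D$; the substantive content sits in the regime where the middle coefficient $2r^2-2r-1$ can be negative and the heavy negative constant $-BCD$ must be absorbed.

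The main obstacle I expect is controlling the regime where $y$ is small and these negative contributions are not obviously dominated. The structural input I would use is the reward-rate bound $b_l/(1-\gamma_l) \le A/C$, equivalently $BC \le AD$, which follows because no extended period built from the current prefix can have reward-rate exceeding that of repeatedly pulling the arm of highest expected immediate benefit. Given $BC \le AD$, I would absorb $-BCD$ into the positive heavy term $r^2 AD^2 y^{r-1}$, and then handle the middle term via a case split on whether $r \ge (1+\sqrt{2})/2$: in that range the discriminant of the $A$-heavy quadratic $(r-1)^2 C^2 y^2 + (2r^2-2r-1)CDy + r^2 D^2$ equals $-(4r^2-4r-1)\,C^2D^2 \le 0$, making the quadratic uniformly non-negative; in the complementary range $1 < r < (1+\sqrt{2})/2$, the positive leading and trailing heavy terms must be shown to dominate the negative middle, which I would verify by a direct AM--GM-style bound on the three $A$-terms. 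Executing this case split cleanly and bookkeeping the combination with $BC \le AD$ is what I expect to be the bulk of the work.
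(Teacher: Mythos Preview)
Your second-derivative route has a genuine gap. The polynomial you arrive at,
\[
P(y)=(r-1)^2 AC^2 y^{r+1}+(2r^2-2r-1)ACD\,y^{r}+r^2AD^2 y^{r-1}+BC^2 y-BCD,
\]
satisfies $\lim_{y\to 0^+}P(y)=-BCD$, which is strictly negative whenever $b_l>0$ and $\gamma_l<1$. Your plan to ``absorb $-BCD$ into the positive heavy term $r^2AD^2y^{r-1}$'' therefore cannot work: that term tends to $0$ as $y\to 0^+$ (since $r>1$) while $-BCD$ stays fixed and negative, and the reward-rate bound $BC\le AD$ does nothing to change this. Tracing back through your identity $I_l''(x)=\alpha^2 y\,(yJ'(y))'$, this means $I_l''(x)<0$ for all sufficiently large $x$, so $I_l$ is in general \emph{not} convex on all of $\RR$. (For a concrete check: with $\alpha=1$, $\beta=2$ and $B/A=D/C=c$ for small $c>0$, the rescaled function $x\mapsto(e^{-2x}+c)/(e^{-x}+c)$ is strictly concave for large $x$.)

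The paper's argument sidesteps this by staying at the first-derivative level. Writing $I_l(x)=C_0 e^{\alpha x}(e^{-\beta x}+C_1)/(C_2+C_3 e^{\alpha x})$, one differentiation gives
\[
I_l'(x)=\frac{C_0 e^{\alpha x}}{(C_2+C_3 e^{\alpha x})^2}\Bigl[C_2(\alpha-\beta)e^{-\beta x}-C_3\beta\, e^{(\alpha-\beta)x}+\alpha C_1C_2\Bigr].
\]
The bracket is a sum of two negative terms whose magnitudes decay in $x$ (because $0<\alpha<\beta$) plus a non-negative constant, hence it is monotonically increasing. Thus $I_l'$ is a positive prefactor times an increasing function, so $I_l'$ changes sign at most once, from negative to positive. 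That is exactly what the application in \cref{thm:univ_interleaved} requires: on any closed interval the maximum of $I_l$ is attained at an endpoint. The paper phrases the conclusion as ``$I_l'$ is non-decreasing, hence $I_l$ is convex''; strictly speaking the bracket argument yields quasi-convexity, which already suffices. Your approach targets the stronger second-derivative inequality, which fails globally, and the additional machinery (the substitution $y=e^{-\alpha x}$, the structural bound $BC\le AD$, the case split on $r$) is neither needed nor able to close the gap.
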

\begin{proof}
Fix $T \in \NN$, $p_T \in (0, 1]$, $\beta > 0$, and $0 < \alpha < \beta$. Fix $b_l, \gamma_l \geq 0$. It suffices to show that $I_l'$ is non-decreasing.

Set $C_0 = p_T e^{-\alpha T}$, $C_1 = e^{-\beta} b_l$, $C_2 = 1 - e^{-\beta}$, and $C_3 = p_T e^{-\alpha T - \beta} (1 - \gamma_l)$. Given these parameters, we can rewrite $I_l$ as
\begin{align*}
I_l(x) &= \frac{ C_0 e^{\alpha x} \left( e^{-\beta x} + C_1 \right) }{C_2 + C_3 e^{\alpha x} }
\end{align*}
Then
\begin{align*}
I_l'(x) &=  \frac{C_0  e^{\alpha x}}{(C_2 + C_3 e^{\alpha x})^2} \cdot \left[ C_2 (\alpha - \beta) e^{-\beta x} - C_3 \beta e^{(\alpha - \beta) x} + \alpha C_1 C_2 \right]
\end{align*}
Observe that $C_0, C_2, C_3 > 0$ and $C_1 \geq 0$. Examining $I_l'(x)$, note that the third term is positive and has no dependence on $x$. Since $0 < \alpha < \beta$, the first two terms are negative and both decrease in absolute value as $x$ increases. Thus $I_l'(x)$ is non-decreasing, and therefore $I_l$ is convex.
\end{proof}
In the special case when $\alpha = \beta$, the order in which nodes are queried has no effect on the total expected benefit, and any policy is optimal.
\begin{restatable}{theorem}{univequal} \label{thm:univ_equal}
If $\alpha = \beta$, any policy is optimal.
\end{restatable}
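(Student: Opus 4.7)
The plan is to show that when $\alpha=\beta$ the total expected benefit is identical for every policy, so every policy is optimal by default. The starting observation is the algebraic identity
\[
p(h)\, e^{-\beta h} \;=\; p_T e^{-\alpha(T-h)} e^{-\beta h} \;=\; p_T e^{-\alpha T},
\]
which under $\alpha = \beta$ is a constant independent of the recency $h$. Consequently, for any node $u$ of recency $h(u)$ queried at step $t_u$, the expected immediate contribution $\Exp[\mathbbm{1}(u)\, b(h(u), t_u) \mid u \text{ is queried}, t_u]$ collapses to $p_T e^{-\alpha T}\, e^{-\beta t_u}$, depending only on $t_u$.

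Let $Q$ denote the random set of nodes the tracer ever queries, and for $u \in Q$ let $t_u$ be the step at which $u$ is queried. The key structural step is to establish that $Q$ is determined by the underlying random forest of potential exposures together with the random infection statuses, and does \emph{not} depend on the policy. An induction on tree depth handles this: every index case begins in the frontier and is therefore queried; a non-root $u$ enters the frontier iff its parent is queried and infected, and once in the frontier $u$ is eventually queried because the process runs indefinitely and recencies strictly decrease along any path, bounding depth by $T+1$. Unrolling the induction, $u \in Q$ iff every proper ancestor of $u$ is infected, which is an event defined entirely by the random environment. Moreover, under any policy the tracer queries exactly one node per step while the frontier is nonempty, so $\{t_u : u \in Q\}$ is a permutation of $\{0, 1, \dots, |Q| - 1\}$.

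Putting the pieces together, I would sum the expected benefit over all potential nodes $u$ by linearity of expectation, using that the infection indicator $\mathbbm{1}(u)$ is independent of $(\mathbbm{1}(u \in Q), t_u)$: the former depends only on ancestors of $u$, while the latter is decided by the policy based on information available strictly before $t_u$, information that cannot reference $u$'s own status or subtree since both are revealed only upon querying $u$. This yields
\[
\Exp[\text{total benefit}] \;=\; p_T e^{-\alpha T} \cdot \Exp\!\left[\sum_{u \in Q} e^{-\beta t_u}\right] \;=\; p_T e^{-\alpha T} \cdot \Exp\!\left[\sum_{s=0}^{|Q|-1} e^{-\beta s}\right],
\]
where the second equality invokes the permutation property. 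Since $|Q|$ is determined by the random environment alone, the right-hand side is policy-independent, proving the claim.

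The main obstacle is the policy-independence of $Q$ together with the conditional independence used in the final display. Both facts are intuitive but demand careful bookkeeping of what is revealed to the tracer at each step: one must verify that $u$'s own infection status and subtree are revealed precisely at step $t_u$ (and never earlier), so no pre-$t_u$ decision can depend on them. Once this informational structure is pinned down, the rest of the argument reduces to the identity above.
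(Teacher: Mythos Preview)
Your proposal is correct and follows essentially the same approach as the paper: both exploit the identity $p(h)e^{-\beta h}=p_T e^{-\alpha T}$ to reduce the expected total benefit to a geometric sum over steps $0,\dots,|Q|-1$, and then argue that the stopping time $|Q|$ (equivalently, the paper's $\tau$) is determined by the random environment rather than the policy. Your treatment is in fact more careful than the paper's on two points the paper leaves implicit---the characterization $u\in Q\iff$ all proper ancestors of $u$ are infected, and the conditional independence of $\mathbbm{1}(u)$ from $(\mathbbm{1}(u\in Q),t_u)$---but the skeleton of the argument is the same.
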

\begin{proof}
Fix $T \in \NN$, $p_T \in (0, 1]$, and $\alpha = \beta > 0$. It suffices to show that, starting from an arbitrary state $S$, any arbitrary policies $P_1$ and $P_2$ achieve the same total expected benefit. 

Let $b(S, P_1)$ and $b(S, P_2)$ be random variables indicating the total discounted benefit accumulated by running $P_1$ and $P_2$, respectively, from the initial state $S$. Since $\alpha = \beta$, at any step $t$ the expected discounted benefit of querying any node in the frontier is 
\begin{align*}
p(h) e^{-\beta (t + h)} = p_T e^{-\alpha T + (\alpha - \beta) h - \beta t} = p_T e^{-\alpha T - \beta t}
\end{align*}
As a result, a policy that queries a node at step $t$ receives an expected discounted benefit of $p_T e^{-\alpha T - \beta t}$. Thus the total expected benefit a policy achieves is defined by the first step in which the frontier is empty, since from then on no nodes are queried. Let $\tau_1$ and $\tau_2$ be random variables indicating the first step in which the frontier is empty for $P_1$ and $P_2$, respectively, starting from state $S$. Then 
\begin{align*}
\Exp \left[ b(S, P_1) \right] &= \sum_{t = 0}^{\tau_1 - 1} p_T e^{-\alpha T - \beta t} \\
\Exp \left[ b(S, P_2) \right] &= \sum_{t = 0}^{\tau_2 - 1} p_T e^{-\alpha T - \beta t}
\end{align*}
The order in which nodes are queried does not affect the first step in which the frontier is empty, so $\tau_1 \sim \tau_2$. Therefore $\Exp[b(S, P_1)] = \Exp[b(S, P_2)]$.
\end{proof}
Once $\alpha > \beta$, a less recent node has both a higher expected immediate benefit and more children in expectation, so querying nodes in order of reverse recency is optimal.
\begin{restatable}{theorem}{univreverse} \label{thm:univ_reverse}
If $\alpha > \beta$, it is optimal to query nodes in order of reverse recency.
\end{restatable}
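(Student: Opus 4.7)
The plan is an induction on $k$ mirroring the proof of \cref{thm:basic_main}, this time showing that the dynamic program from \cref{sub:univ:dp} produces the ordering $\sigma_k = T - k$ for all $0 \leq k \leq T$. Fix $T \in \NN$, $p_T \in (0, 1]$, $\beta > 0$, and $\alpha > \beta$, and let $\sigma$ be the resulting priority ordering.

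For the base case, by \cref{eqn:univ:dp0},
\begin{align*}
\sigma_0 = \mathop{\arg\max}_{h \in \{0, \dots, T\}} p(h) e^{-\beta h} = \mathop{\arg\max}_{h \in \{0, \dots, T\}} p_T e^{-\alpha T + (\alpha - \beta) h} = T,
\end{align*}
where the maximum is attained at $h = T$ because $\alpha > \beta$ makes the exponent strictly increasing in $h$.

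For the inductive step, fix $0 < k \leq T$ and assume the prefix is $(T, T-1, \dots, T-k+1)$, so $\sigma_k$ is chosen from $\{0, 1, \dots, T-k\}$. The key structural observation that makes this direction easy is: for any $h \leq T-k$, every descendant of a root of recency $h$ has recency at most $h - 1 \leq T - k - 1 < T - k + 1 = \min\{\sigma_0, \dots, \sigma_{k-1}\}$, so no descendant has recency in the prefix. Hence the $(Z(h), \sigma_{k-1})$-epoch following the root query is empty for every realization of $Z(h) \sim D^h$, giving $\mathop{\Exp}_{Z(h) \sim D^h}[b(Z(h), \sigma_{k-1})] = 0$ and $\mathop{\Exp}_{Z(h) \sim D^h}[e^{-\beta \tau(Z(h), \sigma_{k-1})}] = 1$ identically. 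Plugging these into \cref{eqn:univ:dpk},
\begin{align*}
\sigma_k = \mathop{\arg\max}_{h \in \{0, \dots, T-k\}} \frac{p(h) e^{-\beta h}}{1 - e^{-\beta}} = \mathop{\arg\max}_{h \in \{0, \dots, T-k\}} \frac{p_T e^{-\alpha T + (\alpha - \beta) h}}{1 - e^{-\beta}},
\end{align*}
which is maximized at $h = T - k$ because $\alpha > \beta$, completing the induction.

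The main point to get right---rather than a true obstacle---is the contrast with the $0 < \alpha < \beta$ case of \cref{thm:univ_interleaved}. There, the prefix consists of the small recencies, so descendants of a candidate node always have recency in the prefix and trigger a nontrivial recursion, forcing the convexity argument on $I_l$. In the $\alpha > \beta$ regime the prefix instead consists of the large recencies while the remaining candidates are strictly smaller, so the epoch recursion collapses entirely and the round-$k$ comparison reduces to a direct comparison of expected immediate benefits, where strict monotonicity of the exponent $\alpha - \beta > 0$ immediately picks out the largest available recency.
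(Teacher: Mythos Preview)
Your proof is correct and follows essentially the same approach as the paper: an induction on $k$ showing that the dynamic program yields $\sigma_k = T-k$, where the inductive step reduces to a one-step comparison because the reverse-recency prefix contains no descendants of any remaining candidate, making the $(Z(h),\sigma_{k-1})$-epoch trivial. The paper phrases this as ``an $(h,\sigma_{k-1})$-period is equivalent to an $(h,\emptyset)$-period,'' but the argument is the same.
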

\begin{proof}
Let $\sigma$ be the optimal priority ordering constructed via the dynamic program in~\cref{sub:univ:dp}. It suffices to show that for all $ 0 \leq j \leq T$, $\sigma_j = T - j$. 

Proof by induction. By definition,
\begin{align*}
\sigma_0 &= \mathop{\arg\max}_{ h \in \{ 0, 1, \dots, T \} } p(h) e^{-\beta h} \\
&= \mathop{\arg\max}_{ h \in \{ 0, 1, \dots, T \} } p_T e^{-\alpha T + (\alpha - \beta) h} \\
&= T
\end{align*}
Here $T$ attains the maximum, since $\alpha > \beta$.

Fix $0 < k < T$. Assume that that for all $0 \leq j \leq k - 1$, $\sigma_j = T - j$. As a result,
\begin{align*}
(\sigma_0, \sigma_1, \dots, \sigma_{k - 1}) = (T, T - 1, \dots, T - k + 1)
\end{align*}
Given this prefix, $\sigma_k$ is selected by comparing $(h, \sigma_{k - 1})$-periods for $h$ not in the prefix, that is, for $h \in \{ 0, 1, \dots, T - k \}$. Fix $h \in \{ 0, 1, \dots, T - k \}$, and let $v$ be a node with recency $h(v) = h$. Recall that an $(h, \sigma_{k - 1})$-period consists of an $(h, \emptyset)$-period followed by a $(Z(h), \sigma_{k-1})$-epoch, for $Z(h) = (Z_0, Z_1, \dots, Z_{h - 1}) \sim D^h$. Since a node is less recent than its descendants, any descendant $v'$ of $v$ has recency $h(v') < h \leq T - k$, so the epoch only involves nodes with recencies in $\{ 0, 1, \dots, T - k \}$. However, the prefix dictates that any node $u$ queried during the epoch has recency $h(u) > T - k$. Thus no node is queried during the epoch, so for this particular prefix an $(h, \sigma_{k - 1})$-period is equivalent to an $(h, \emptyset)$-period.

Therefore
\begin{align*}
\sigma_k &= \mathop{\arg \max}_{h \in \{ 0, 1, \dots, T - k \} } \frac{\Exp[b(h, \sigma_{k-1})]}{1 - \gamma(h, \sigma_{k-1})} \\
&= \mathop{\arg \max}_{h \in \{ 0, 1, \dots, T - k \} } \frac{\Exp[b(h, \emptyset)]}{1 - \gamma(h, \emptyset)} \\
&= \mathop{\arg \max}_{h \in \{ 0, 1, \dots, T - k \} } \frac{p_T e^{-\alpha T + (\alpha - \beta) h}}{1 - e^{-\beta} } \\
&= T - k
\end{align*}
Here $T - k$ attains the maximum, because $\alpha > \beta$. Thus for all $0 \leq j \leq T$, $\sigma_j = T - j$, so the optimal policy queries nodes in order of reverse recency.
\end{proof}
\section{Bivariate Model} \label{section:bivariate}
In the bivariate model the probability that a node is infected decays relative to the incubation period of its parent. Modeling this requires keeping track of two parameters, a node's recency and the recency of its parent. We say that a node of recency $h$ with a parent of recency $h'$ has \textit{span} $\Delta = h' - h$. The probability that a node of span $\Delta$ is infected is $p(\Delta) = p_T e^{-\alpha \Delta}$. Aside from the probability of infection, the bivariate model is identical to the basic model, so a node's recency $h$ determines the distribution on its children and the benefit returned if it is found to be infected. A node of recency $h$ and span $\Delta$ is defined by its \textit{type} $(h, \Delta)$, and policies in the bivariate model are on the set of types $\{ 0, 1, \dots, T \}^2$. Thus the bivariate model demonstrates an analysis of policies that take into account multiple parameters. When $\alpha = 0$, the bivariate model instantiates the basic model.

Just as in the basic and univariate models we assume that nodes of the same recency are indistinguishable until they are queried, in the bivariate model we assume that nodes of the same type are indistinguishable until they are queried. Therefore we redefine the state $S_t$ to be the multiset of types present in the frontier at time $t$. If the tracer queries an infected node, the query reveals the type of each contact. Recall that a node $v$ of recency $h(v) = h$ has a multiset of children $Z(h) = (Z_0, Z_1, \dots, Z_{h - 1}) \sim D^h$, where $Z_j$ indicates the number of children of recency $j$. Since $h(v) = h$, a child with recency $j$ has span $i = h - j$. Therefore the children of $v$ are defined by the multiset of types $Y(h) = (Y_{0, h}, Y_{1, h-1}, \dots, Y_{h-1, 1}) \sim D^h$, where $Y_{j,i}$ indicates the number of children with recency $j$ and span $i$. Observe that $i + j = h$, which implies that nodes of different recencies have no children of the same type.

Our main result in this section is a monotonicity property that shows that it is optimal to query nodes of the same recency in order of increasing span.
\bivspan
To prove this result, we analyze an optimal priority ordering using many of the same techniques developed in~\cref{section:basic,section:univariate}. The proof leverages the fact that nodes of the same recency have the same distribution on descendants in order to optimize over span. 

As a complement to the main result, we also examine monotonicity along the dimension of recency. Since nodes of different recencies have no children of the same type, comparing periods becomes tricky, and the inductive approaches used in~\cref{section:basic,section:univariate} do not seem to apply here. Instead we construct the optimal ordering of types step-by-step. To make this approach tractable, our result is restricted to settings where the contact distribution $D$ is a Bernoulli distribution and where all nodes have recencies in $\{ 0, 1, 2 \}$. 
\bivrecency
While the other restrictions are due to our particular approach, a lower bound on $\beta$ is in fact necessary; there are settings of $\beta$ for which querying nodes of the same span in order of recency is not optimal. Full proofs of the above theorems are in~\cref{app:biv}.

\section{General Model} \label{section:general}
The general model provides a broad framework that allows for a variety of factors to affect how individuals interact and how the infection spreads. In the general model each node is assigned an arbitrary type, which is associated with an arbitrary probability of infection and an arbitrary distribution on descendants. An individual’s type could be thought of as representing information like their profession or role within a community and the context in which they are exposed. Thus the general model has the flexibility to describe the categorizations of contacts we see in practice, where the priority assigned to a contact depends on multiple different factors. The general model also encompasses the basic, univariate, and bivariate models.

We show in~\cref{thm:general_reduction} that the general model reduces to the branching bandit model. This has two main implications. First, it formally defines a connection between contact tracing and the brancing bandit model. Second, it implies that the optimal policy for any instance of the general model can be found by analyzing the optimal policy for a corresponding instance of the branching bandit model, as constructed by the dynamic program in~\cref{sub:basic:weiss}. We show in~\cref{thm:general} that, as a result, any instance of the general model has an optimal policy which is an efficiently computable index policy on the set of types with an efficient construction. Since the basic, univariate, and bivariate models are all instances of the general model, this result applies to these settings as well, and it defines the construction of optimal policies in each of the three models.

\subsection{Model}
The general model follows the same structure as the model from~\cref{section:model}, however the changes to the contact and infection processes warrant a second overview. 

\paragraph{Phase 1} Phase 1 spans steps $t = -T$ through $t = 0$. During this phase each individual meets new contacts and infected individuals probabilistically infect each new contact they meet.

Each individual is associated with a known role $w \in \{ 1, 2, \dots, W \}$ and a hidden binary infection status $d \in \{ 0, 1 \}$. Each individual also has a known recency $h \in \{ 0, 1, \dots, T \}$. An individual with recency $h$ and role $w$ belongs to category $(h, w) \in C$ where $C = \{ 1, 2, \dots, T \} \times \{ 1, 2, \dots, W \}$. After an individual is exposed, on each step $t$ thereafter they meet new contacts defined by a multiset of categories drawn from $D_{c,d,t}$, a distribution on all multisets of elements in $[C]$. If the individual is infected, they infect each new contact they meet independently according to the function $p: [C] \times [C] \rightarrow [0, 1]$. The probability that an infected individual from category $c'$ infects a contact in category $c$ is $p(c', c)$. As in the model from~\cref{section:model}, we are agnostic to the origin of the index cases, and we model each index case as the child of an infected super-root $\bot$. An index case with category $c$ is infected with probability $p(c(\bot), c)$.

\paragraph{Step $t = 0$.} On step $t = 0$ the contact and infection processes halt and from then on no new infections occur.

To understand the system on step $t = 0$, consider an individual $v$ from category $c(v) = c$. An individual in category $c$ has recency $h(c)$, so $v$ was exposed in step $-h(c)$. If $v$'s exposure results in infection, for each step $-h(c) + 1 \leq t \leq 0$ in the remainder of Phase 1, $v$ exposes $(Z_1(t), \dots, Z_C(t)) \sim D_{c, 1, t}$ individuals, where $Z_j(t)$ indicates the number of individuals with category $j$ exposed on step $t$. Then by the end of step $t = 0$, throughout the course of the first phase $v$ has exposed a multiset of contacts $ Z(c) = (Z_1, \dots, Z_C)$, where $Z_j = \sum_{t = -h(c)+1}^0 Z_j(t)$. Let $D_{c}$ be the distribution on $Z(c)$.

As in~\cref{section:model}, we can view $v$ as a node in a tree of potential exposures, where the children of $v$ are all the contacts $v$ met after being exposed. Let $u$ be the node which exposed $v$. Then with probability $p((c(u), c(v))$ node $v$ is infected and has children with categories $(Z_1, \dots, Z_C) \sim D_c$.

\paragraph{Phase 2} Phase 2 begins on step $t = 0$ and continues indefinitely. During Phase 2 an individual's infection status is fixed but hidden. A contact tracer seeks to identify infected individuals as efficiently as possible.

As in~\cref{section:model}, contact tracing begins on step $t = 0$ when a set of index cases are identified. Initally the index cases are the only nodes available to query, and the tracer observes the category of each index case. From then on, at each step $t \geq 0$ the tracer selects one node from the frontier to query. Querying a node reveals its infection status, and if they are infected, its contacts are included in the frontier. The tracer may only query a node that is an index case or the contact of an infected node already queried. Querying an infected node $v$ with category $c(v) = c$ at step $t$ returns the benefit $b(c, t) = b(c) e^{-\beta t}$, for some arbitrary contant $b(c) \in [0, 1]$. 

Now consider the information available to the tracer at each step. Any node $v$ in the frontier has an infected parent $u$, where $u$ is either the super-root or another node in the tree that was already queried. Thus the tracer knows both $v$'s category $c(v)$ and $u$'s category $c(u)$. As described above, these two parameters define the probability that $v$ is infected, and if $v$ is infected, the benefit of querying $v$ and the distribution on children added to the frontier. That is, $c(u)$ and $c(v)$ fully define the distribution on outcomes that result from querying $v$. We say that $v$ has \textit{type} $(c(u), c(v)) \in [C]^2$. As in~\cref{section:bivariate}, we assume two nodes of the same type are indistinguishable until they are queried. Then the state $S_t$ is the multiset of types present in the frontier at time $t$, and at each step the tracer selects a node to query based on its type. 

\paragraph{Defining the objective.} On each step $t \geq 0$, the tracer selects a node $v_t$ from the frontier to query where $v_t$ has category $c(v_t)$. Since $v_t$ is in the frontier, it was exposed to the infection. Let $\mathbbm{1}(v_t)$ indicate whether $v_t$ is infected $(1)$ or uninfected $(0)$. If $v_t$ is infected, then benefit $b(c(v_t)) e^{-\beta t}$ is returned. Thus the total benefit the tracer accumulates over the course of Phase 2 is
\begin{align*}
\sum_{t \geq 0} \mathbbm{1}(v_t) \cdot b(c(v_t)) e^{-\beta t}.
\end{align*}
As in~\cref{section:model}, the objective is to develop a policy for querying nodes that maximizes the total expected benefit, where the expectation is taken over all realizations $\{ \mathbbm{1}(v_0), \mathbbm{1}(v_1), \mathbbm{1}(v_2), \dots \}$.

\paragraph{Defining the general model from types.} The model can be equivalently described in terms of types instead of categories, which helps simplify the proof that follows. Enumerate the set of all types $[C]^2$ as $\{ 1, 2, \dots, C^2 \}$. Let $v$ be a node of type $(c', c) \in [C]^2$, where $(c', c)$ is indexed as type $j \in \{1, 2, \dots, C^2 \}$ in the enumeration. We now define the probability of infection, benefit, and distribution on children associated with type $j$. If $v$ is exposed, the probability that $v$ is infected is $\overline{p}(j) = p(c', c)$. If $v$ is infected, the benefit of querying $v$ is $\overline{b}(j) = b(c)$ and $v$'s children have categories $(Z_1, \dots, Z_{C}) \sim D_c$, where $Z_k$ indicates the number of children from category $k$. Since $c(v) = c$, any child of $v$ from category $k$ has type $(c, k)$, so $Z_k$ equivalently indicates the number of children of type $l = (c, k)$. Let $\overline{D}_j$ be the equivalent distribution on multisets of types in $\{ 1, 2, \dots, C^2 \}$ so that for $(Y_1, \dots, Y_{C^2}) \sim \overline{D}_j$, $Y_l$ indicates the number of children of type $l$. Then the state $S_t = (X_1, \dots, X_{C^2})$ is the multiset of types present in the frontier at time $t$.

\subsection{Optimal policies in the general model}
Our primary result shows that the optimal policy in the general model is an index policy on the set of types $\{ 1, 2, \dots, C^2 \}$. 
\generalpolicy
\begin{proof}
Proof via reduction to the branching bandit model. Recall that the branching bandit model involves arms belonging to classes $\{ 1, \dots, L \}$. When an arm of class $i$ is pulled it yields a reward $R(i)$ and is replaced by a set of new arms $N_{i1}, \dots, N_{iL}$, where each class $i$ has an arbitrary, known, joint distribution on the random variables $R(i)$ and $N_{i1}, \dots, N_{iL}$. Additionally, each class $i$ is also associated with a random variable $\mu(i)$, where pulling an arm from class $i$ occupies $\mu(i)$ steps, however for the purposes of this reduction we consider only a restricted model where each pull occupies exactly $1$ step. At step $t$ the system is defined by a vector $n(t) = (N_1, \dots, N_L)$, where $N_i$ is the total number of arms of class $i$ available, and a reward received at step $t$ is discounted by $e^{-\eta t}$ for a fixed parameter $\eta > 0$. As described in~\cref{sub:basic:weiss}, the optimal policy in the branching bandit model is an index policy on the set of classes $\{ 1, \dots, L \}$ with an efficient construction.

The plan for the reduction is to map each node to an arm. The idea is to map the benefit and new children returned by querying a node to the reward and new arms returned by pulling an arm. To do this, for each type in the contact tracing instance we construct a class in the branching bandit instance with an appropriate distribution on reward and new arms.~\Cref{thm:general_reduction} proves that, for a specific mapping from types to classes, the general model reduces to the branching bandit model. Informally, proving the reduction involves showing that querying a sequence of nodes maps to pulling a corresponding sequence of arms.

The claim follows from this reduction. Since types map to classes, at any step $t$ the state of the frontier $S_t$ is represented by a vector of arms $n(t)$. The optimal policy in the branching bandit model selects the next arm to pull from $n(t)$, which by the reduction dictates the next node to query from the frontier. Thus the optimal policy in the branching bandit model defines the optimal policy in the general model. In particular, since the former is an index policy on the set of classes, the latter is an index policy on the set of types. Therefore the optimal policy in the general model is an index policy on the set of types with an efficient construction.
\end{proof}
Before continuing with the reduction, we first describe what proving such a reduction requires. We start by defining the general model and the branching bandit model each as games where an agent chooses actions in order to transition between different states. (Here ``state'' is used in the general sense of an agent moving through different states in a game, not as a multiset of types.) Proving a reduction involves mapping a sequence of states and actions in the general model to a corresponding sequence of states and actions in the branching bandit model. 

In the general model, the tracer (the agent) chooses types (actions) and in the branching bandit model the agent chooses classes (actions). Specifically, in the general model the state at step $t$ is a pair $(S_t, b_t)$ where $b_t$ is the total benefit accumulated through the start of step $t$. The tracer chooses a type $j_t \in S_t$, and in response a random benefit $B(j_t) e^{-\beta t}$ is returned and a multiset of new children $(Y_1, \dots, Y_{C^2}) \sim D_{j_t}$ is added to the frontier. As a result, the tracer transitions to the state $(S_{t + 1}, b_{t + 1})$ where $S_{t + 1} = (S_t \setminus j_t) \cup (Y_1, \dots, Y_{C^2})$ and $b_{t + 1} = b_t + B(j_t) e^{-\beta t}$. In the branching bandit model, the state at step $t$ is a pair $(n(t), r_t)$ where $r_t$ is the total reward earned through the start of step $t$. The agent chooses a class $i_t \in n(t)$, and in response a random reward $R(i_t) e^{-\eta t}$ is received and a multiset of new arms $(N_{i_t 1}, \dots, N_{i_t L})$ are added to the system. As a result, the agent transitions to the state $(n(t + 1), r_{t + 1})$ where $n(t + 1) = (n(t) \setminus i_t) \cup (N_{i_t 1}, \dots, N_{i_t L})$ and $r_{t + 1} = r_t + R(i_t) e^{-\eta t}$.

Let $(S_0, b_0)$ be an arbitrary initial state, and let $(S_0, b_0, j_0), \dots, (S_{t'}, b_{t'}, j_{t'})$ be a sequence of states and actions. The general model reduces to the branching bandit model if there is a mapping of states and actions such that the next-state distribution is preserved. Specifically, we need a mapping from types to classes so that a corresponding sequence $(n(0), r_0, i_0), \dots, (n(t'), r_{t'}, i_{t'})$ exists which fulfills the following criteria for all $0 \leq t \leq t'$:
\begin{enumerate}[label={(\arabic*)}]
	\item $S_t$ maps to $n(t)$, $b_t = r_t$, and $j_t$ maps to $i_t$. 
	\item $(S_{t + 1}, b_{t + 1}) \mid (S_t, b_t, j_t)$ has the same distribution as $(n(t + 1), r_{t + 1}) \mid (n(t), r_t, i_t)$.
\end{enumerate}
If we can establish both $(1)$ and $(2)$ then for any sequence of states and actions the distribution on outcomes after some number of steps $k$ is the same in both models. The following theorem proves the reduction by constructing a mapping from types to classes that satisfies these criteria.
\generalreduction
\begin{proof}
Let $L = C^2$ and let $\eta = \beta$. For each type $j \in \{ 1, 2, \dots, C^2 \}$ we define a class $i = j$. (Even though $i = j$, we use separate names to distinguish between the class and type.) By this mapping, for a state $S_t = (X_1, \dots, X_{C^2})$ and vector $N(t) = (N_1, \dots, N_L)$, if $(X_1, \dots, X_{C^2}) = (N_1, \dots, N_L)$ then $S_t$ maps to $N(t)$. First we define the mapping, and then we show the reduction. 

The idea for the mapping is to construct a class $i$ for each type $j$ such that the joint distribution on reward and new arms associated with $i$ maps to the joint distribution on benefit and new children associated with $j$. 

Recall that the joint distribution on the benefit and multiset of children associated with type $j$ is
\begin{center}
$(B(j), (Y_{j1}, \dots, Y_{jC^2})) =
\left\{
	\begin{array}{ll}
		\mbox{w.p. } \overline{p}(j) & (\overline{b}(j), (U_{j1}, \dots, U_{jC^2}) \sim \overline{D}_j ) \\
		& \\
		\mbox{w.p. } 1 - \overline{p}(j) & (0, \emptyset) \\
	\end{array}
\right.$
\end{center}
While so far we have referred to $\overline{D}_j$ as a distribution on multisets of types, $\overline{D}_j$ is simply a distribution on multisets of elements in $\{ 1, 2, \dots, C^2 \}$. Since $L = C^2$, it can also be viewed as a distribution on multisets of classes $\{ 1, 2, \dots, L \}$. We define class $i$ by the joint distribution
\begin{center}
$(R(i), (N_{i1}, \dots, N_{iL})) =
\left\{
	\begin{array}{ll}
		\mbox{w.p. } \overline{p}(j) & (\overline{b}(j), (M_{j1}, \dots, M_{jL}) \sim \overline{D}_j ) \\
		& \\
		\mbox{w.p. } 1 - \overline{p}(j) & (0, \emptyset) \\
	\end{array}
\right.$
\end{center}
By this construction, if $i = j$ then $(B(j), (Y_{j1}, \dots, Y_{jC^2})) \sim (R(i), (N_{i1}, \dots, N_{iL}))$.

For any $t \geq 0$, let $(S_t, b_t)$ be an arbitrary state in the general model, and suppose the tracer chooses type $j \in S_t$. Then $b_{t + 1} = b_t + B(j) e^{-\beta t}$ and $S_{t + 1} = (S_t \setminus j) \cup (Y_{j1}, \dots, Y_{jC^2})$. Let $(n(t), r_t) = (S_t, b_t)$ be the corresponding state in the branching bandit model, and let the agent choose the corresponding class $i = j$. Then $r_{t + 1} = r_t + R(i) e^{-\eta t}$ and $n(t + 1) = (n(t) \setminus i) \cup (N_{i1}, \dots, N_{iL})$. Because $i = j$, $(B(j), (Y_{j1}, \dots, Y_{jC^2})) \sim (R(i), (N_{i1}, \dots, N_{iL}))$, and combined with the fact that $n(t) = S_t$, $\eta = \beta$, and $r_t = b_t$,
\begin{align*}
(S_{t + 1}, b_{t + 1}) \mid(S_t, b_t, j) &\sim ((S_t \setminus j) \cup (Y_{j1}, \dots, Y_{jC^2}), b_t + B(j) e^{-\beta t} ) \mid (S_t, b_t, j) \\
&\sim ((n(t) \setminus i) \cup (N_{i1}, \dots, N_{iL}), r_t + R(i) e^{-\eta t}) \mid (n(t), r_t, i) \\
&\sim (n(t + 1), r_{t + 1}) \mid (n(t), r_t, i)
\end{align*}
Therefore, given corresponding states $(S_t, b_t) = (n(t), r_t)$ and actions $j_t = i_t$, then 
$$(S_{t + 1}, b_{t + 1}) \mid (S_t, b_t, j_t) \sim (n(t + 1), r_{t + 1}) \mid (n(t), r_t, i_t).$$ 
Since the mapping preserves this next-state transition, by induction there is a sequence of corresponding states and actions in the branching bandit model, so the reduction holds.
\end{proof}
\section{Discussion} \label{section:discussion}
We present contact tracing as an algorithm design problem where the objective is to develop a policy that prioritizes contacts to trace. Through a clean connection to the branching bandit model~\cite{weiss}, we develop provably optimal policies for a variety of infection models. Analyzing the structure of these policies leads to qualitative insights about trade-offs in contact tracing applications. In a setting where the probability of transmission is constant, the policy reflects the prioritization of contacts by recency that we see in practice. In the more complex models of infection we study, the structure of the optimal policy depends on the specific parameters of the model yet still exhibits clear structure. Finally, we conclude with a general model, which has the capacity to model arbitrary interactions between individuals based on factors like an individual's profession, role within a community, or their risk of infection.

There are many compelling questions to consider going forward. One interesting question is how to analyze optimal policies in a dynamic setting, where contact tracing proceeds while the infection is spreading. We are currently exploring this setting in ongoing work~\cite{dynamic}. A key question here is how to choose the objective function. On the one hand, the algorithm ought to be rewarded for identifying infected cases. On the other hand, the algorithm ought to prevent the spread of new cases, which in a formal sense works against the goal of identifying infected cases. This makes defining the objective function somewhat subtle. Another interesting question to consider is backward contact tracing. In this paper we examined forward contact tracing, which traces any infections due to an individual, while in backward contact tracing the goal is to identify the source of the infection. Finally, it is an interesting question to ask how theoretical recommendations in general can be integrated into manual contact tracing.
\section*{Acknowledgments}
Michela Meister was supported by the Department of Defense (DoD) through the National Defense Science \& Engineering Graduate (NDSEG) Fellowship Program. Jon Kleinberg was supported in part by a Simons Investigator Award, a Vannevar Bush Faculty Fellowship, MURI grant W911NF-19-0217, AFOSR grant FA9550-19-1-0183, ARO grant W911NF19-1-0057, and a grant from the MacArthur Foundation.

\bibliographystyle{plain}
\bibliography{references}

\begin{thebibliography}{10}

\bibitem{ahmed}
Nadeem Ahmed, Regio~A Michelin, Wanli Xue, Sushmita Ruj, Robert Malaney,
  Salil~S Kanhere, Aruna Seneviratne, Wen Hu, Helge Janicke, and Sanjay~K Jha.
\newblock A survey of covid-19 contact tracing apps.
\newblock {\em IEEE Access}, 8:134577--134601, 2020.

\bibitem{alvarez}
Francisco {\'A}lvarez-Lerma, Judith Mar{\'\i}n-Corral, Clara Vila,
  Joan~Ram{\'o}n Masclans, Francisco Javier~Gonz{\'a}lez de~Molina,
  Ignacio~Mart{\'\i}n Loeches, Sandra Barbadillo, and Alejandro Rodr{\'\i}guez.
\newblock Delay in diagnosis of influenza a (h1n1) pdm09 virus infection in
  critically ill patients and impact on clinical outcome.
\newblock {\em Critical Care}, 20(1):1--11, 2016.

\bibitem{armbruster-brandeau}
Benjamin Armbruster and Margaret~L Brandeau.
\newblock Contact tracing to control infectious disease: when enough is enough.
\newblock {\em Health care management science}, 10(4):341--355, 2007.

\bibitem{armbruster-brandeau_sim}
Benjamin Armbruster and Margaret~L Brandeau.
\newblock Who do you know? a simulation study of infectious disease control
  through contact tracing.
\newblock In {\em Proceedings of the 2007 Western Multiconference on Computer
  Simulation}, pages 79--85, 2007.

\bibitem{tb_guide}
National Tuberculosis~Controllers Association, Centers for Disease~Control,
  Prevention (CDC, and others).
\newblock Guidelines for the investigation of contacts of persons with
  infectious tuberculosis. recommendations from the national tuberculosis
  controllers association and cdc.
\newblock {\em MMWR. Recommendations and reports: Morbidity and mortality
  weekly report. Recommendations and reports}, 54(RR-15):1--47, 2005.

\bibitem{boodaghians}
Shant Boodaghians, Federico Fusco, Philip Lazos, and Stefano Leonardi.
\newblock Pandora's box problem with order constraints.
\newblock In {\em Proceedings of the 21st ACM Conference on Economics and
  Computation}, pages 439--458, 2020.

\bibitem{eames-keeling}
Ken~TD Eames and Matt~J Keeling.
\newblock Contact tracing and disease control.
\newblock {\em Proceedings of the Royal Society of London. Series B: Biological
  Sciences}, 270(1533):2565--2571, 2003.

\bibitem{eames}
KTD Eames.
\newblock Contact tracing strategies in heterogeneous populations.
\newblock {\em Epidemiology \& Infection}, 135(3):443--454, 2007.

\bibitem{cdc_covid_interview}
Centers for Disease~Control and Prevention.
\newblock Key information to collect during a case interview.
\newblock
  \url{https://www.cdc.gov/coronavirus/2019-ncov/php/contact-tracing/keyinfo.html}.
\newblock Accessed: 2021-06-25.

\bibitem{cdc_covid_prioritize}
Centers for Disease~Control and Prevention.
\newblock Prioritizing case investigations and contact tracing for covid-19 in
  high burden jurisdictions.
\newblock
  \url{https://www.cdc.gov/coronavirus/2019-ncov/php/contact-tracing/contact-tracing-plan/prioritization.html}.
\newblock Accessed: 2021-06-25.

\bibitem{who-em}
World Health Organization Regional~Office for~the Eastern~Mediterranean.
\newblock Diagnostic and treatment delay in tuberculosis, 2006.

\bibitem{fraser}
Christophe Fraser, Steven Riley, Roy~M Anderson, and Neil~M Ferguson.
\newblock Factors that make an infectious disease outbreak controllable.
\newblock {\em Proceedings of the National Academy of Sciences},
  101(16):6146--6151, 2004.

\bibitem{gittins_text}
John Gittins, Kevin Glazebrook, and Richard Weber.
\newblock {\em Multi-armed bandit allocation indices}.
\newblock John Wiley \& Sons, 2011.

\bibitem{greenaway}
Christina Greenaway, Dick Menzies, Anne Fanning, Raj Grewal, Lilian Yuan,
  J~Mark~FitzGerald, and Canadian Collaborative~Group in~Nosocomial
  Transmission~of Tuberculosis.
\newblock Delay in diagnosis among hospitalized patients with active
  tuberculosis—predictors and outcomes.
\newblock {\em American journal of respiratory and critical care medicine},
  165(7):927--933, 2002.

\bibitem{grinsztejn}
Beatriz Grinsztejn, Mina~C Hosseinipour, Heather~J Ribaudo, Susan Swindells,
  Joseph Eron, Ying~Q Chen, Lei Wang, San-San Ou, Maija Anderson, Marybeth
  McCauley, et~al.
\newblock Effects of early versus delayed initiation of antiretroviral
  treatment on clinical outcomes of hiv-1 infection: results from the phase 3
  hptn 052 randomised controlled trial.
\newblock {\em The Lancet infectious diseases}, 14(4):281--290, 2014.

\bibitem{gupta}
Anupam Gupta, Viswanath Nagarajan, and Sahil Singla.
\newblock Algorithms and adaptivity gaps for stochastic probing.
\newblock In {\em Proceedings of the twenty-seventh annual ACM-SIAM symposium
  on Discrete algorithms}, pages 1731--1747. SIAM, 2016.

\bibitem{hellewell}
Joel Hellewell, Sam Abbott, Amy Gimma, Nikos~I Bosse, Christopher~I Jarvis,
  Timothy~W Russell, James~D Munday, Adam~J Kucharski, W~John Edmunds, Fiona
  Sun, et~al.
\newblock Feasibility of controlling covid-19 outbreaks by isolation of cases
  and contacts.
\newblock {\em The Lancet Global Health}, 8(4):e488--e496, 2020.

\bibitem{hethcote-yorke}
Herbert~W Hethcote, James~A Yorke, and Annett Nold.
\newblock Gonorrhea modeling: a comparison of control methods.
\newblock {\em Mathematical Biosciences}, 58(1):93--109, 1982.

\bibitem{kaplan_emergency}
Edward~H Kaplan, David~L Craft, and Lawrence~M Wein.
\newblock Emergency response to a smallpox attack: the case for mass
  vaccination.
\newblock {\em Proceedings of the National Academy of Sciences},
  99(16):10935--10940, 2002.

\bibitem{kaplan_analyzing}
Edward~H Kaplan, David~L Craft, and Lawrence~M Wein.
\newblock Analyzing bioterror response logistics: the case of smallpox.
\newblock {\em Mathematical biosciences}, 185(1):33--72, 2003.

\bibitem{keeling}
Matt~J Keeling, T~Deirdre Hollingsworth, and Jonathan~M Read.
\newblock Efficacy of contact tracing for the containment of the 2019 novel
  coronavirus (covid-19).
\newblock {\em J Epidemiol Community Health}, 74(10):861--866, 2020.

\bibitem{klinkenberg}
Don Klinkenberg, Christophe Fraser, and Hans Heesterbeek.
\newblock The effectiveness of contact tracing in emerging epidemics.
\newblock {\em PloS one}, 1(1):e12, 2006.

\bibitem{krawczyk}
Christopher~S Krawczyk, Ellen Funkhouser, J~Michael Kilby, and Sten~H Vermund.
\newblock Delayed access to hiv diagnosis and care: Special concerns for the
  southern united states.
\newblock {\em AIDS care}, 18(sup1):35--44, 2006.

\bibitem{kretzschmar}
Mirjam~E Kretzschmar, Ganna Rozhnova, Martin~CJ Bootsma, Michiel van Boven,
  Janneke~HHM van~de Wijgert, and Marc~JM Bonten.
\newblock Impact of delays on effectiveness of contact tracing strategies for
  covid-19: a modelling study.
\newblock {\em The Lancet Public Health}, 5(8):e452--e459, 2020.

\bibitem{kwok}
Kin~On Kwok, Arthur Tang, Vivian~WI Wei, Woo~Hyun Park, Eng~Kiong Yeoh, and
  Steven Riley.
\newblock Epidemic models of contact tracing: systematic review of transmission
  studies of severe acute respiratory syndrome and middle east respiratory
  syndrome.
\newblock {\em Computational and structural biotechnology journal},
  17:186--194, 2019.

\bibitem{lattimore_text}
Tor Lattimore and Csaba Szepesv{\'a}ri.
\newblock {\em Bandit algorithms}.
\newblock Cambridge University Press, 2020.

\bibitem{lunz}
Davin Lunz, Gregory Batt, and Jakob Ruess.
\newblock To quarantine, or not to quarantine: A theoretical framework for
  disease control via contact tracing.
\newblock {\em Epidemics}, 34:100428, 2021.

\bibitem{macintyre}
C~Raina MacIntyre, Aileen~J Plant, Jane Hulls, Jonathon~A Streeton, Neil~MH
  Graham, and Graham~J Rouch.
\newblock High rate of transmission of tuberculosis in an office: impact of
  delayed diagnosis.
\newblock {\em Clinical Infectious Diseases}, 21(5):1170--1174, 1995.

\bibitem{may}
Margaret May, Mark Gompels, Valerie Delpech, Kholoud Porter, Frank Post,
  Margaret Johnson, David Dunn, Adrian Palfreeman, Richard Gilson, Brian
  Gazzard, et~al.
\newblock Impact of late diagnosis and treatment on life expectancy in people
  with hiv-1: Uk collaborative hiv cohort (uk chic) study.
\newblock {\em Bmj}, 343, 2011.

\bibitem{dynamic}
Michela Meister and Jon Kleinberg, 2021.
\newblock Unpublished Manuscript.

\bibitem{northcarolina}
Zack Moore.
\newblock Contact tracing prioritization.
\newblock
  \url{https://epi.dph.ncdhhs.gov/cd/lhds/manuals/cd/coronavirus/Contact%20Prioritization%20Guidance_12182020.pdf}.
\newblock Accessed: 2021-06-25.

\bibitem{muller-kretzschmar}
Johannes M{\"u}ller and Mirjam Kretzschmar.
\newblock Contact tracing--old models and new challenges.
\newblock {\em Infectious Disease Modelling}, 6:222--231, 2021.

\bibitem{muller}
Johannes M{\"u}ller, Mirjam Kretzschmar, and Klaus Dietz.
\newblock Contact tracing in stochastic and deterministic epidemic models.
\newblock {\em Mathematical biosciences}, 164(1):39--64, 2000.

\bibitem{nakagawa}
Fumiyo Nakagawa, Rebecca~K Lodwick, Colette~J Smith, Ruth Smith, Valentina
  Cambiano, Jens~D Lundgren, Valerie Delpech, and Andrew~N Phillips.
\newblock Projected life expectancy of people with hiv according to timing of
  diagnosis.
\newblock {\em Aids}, 26(3):335--343, 2012.

\bibitem{hiv_guide}
STD National Center~for HIV/AIDS, Viral~Hepatitis and TB~Prevention (CDC).
\newblock Recommendations for partner services programs for hiv infection,
  syphilis, gonorrhea, and chlamydial infection.
\newblock {\em MMWR. Recommendations and reports: Morbidity and mortality
  weekly report. Recommendations and reports}, 57(RR-9):1--92, 2008.

\bibitem{quilter}
Laura~AS Quilter, Christine Agnew-Brune, Dawn Broussard, Melinda Salmon,
  Heather Bradley, Vicki Hogan, Alison Ridpath, Kenya Burton, Bridget~Connard
  Rose, Nathan Kirk, et~al.
\newblock Establishing best practices in a response to an hiv cluster: an
  example from a surge response in west virginia.
\newblock {\em Sexually Transmitted Diseases}, 48(3):e35--e40, 2021.

\bibitem{riley}
Richard~L Riley and Allan~S Moodie.
\newblock Infectivity of patients with pulmonary tuberculosis in inner city
  homes.
\newblock {\em American Review of Respiratory Disease}, 110(6P1):810--812,
  1974.

\bibitem{sitter}
Ren{\'e} Sitters.
\newblock The minimum latency problem is np-hard for weighted trees.
\newblock In {\em International conference on integer programming and
  combinatorial optimization}, pages 230--239. Springer, 2002.

\bibitem{slivkins_text}
Aleksandrs Slivkins.
\newblock Introduction to multi-armed bandits.
\newblock {\em arXiv preprint arXiv:1904.07272}, 2019.

\bibitem{sobrino}
Paz Sobrino-Vegas, Lucia Garcia-San Miguel, Ana~M Caro-Murillo, Jos{\'e}~M
  Mir{\'o}, Pompeyo Viciana, Cristina Tural, Maria Saumoy, Ignacio Santos,
  Julio Sola, Julia~d Amo, et~al.
\newblock Delayed diagnosis of hiv infection in a multicenter cohort:
  prevalence, risk factors, response to haart and impact on mortality.
\newblock {\em Current HIV research}, 7(2):224--230, 2009.

\bibitem{spencer}
Kimberly~D Spencer, Christina~L Chung, Alison Stargel, Alvin Shultz, Phoebe~G
  Thorpe, Marion~W Carter, Melanie~M Taylor, Mary McFarlane, Dale Rose,
  Margaret~A Honein, et~al.
\newblock Covid-19 case investigation and contact tracing efforts from health
  departments—united states, june 25--july 24, 2020.
\newblock {\em Morbidity and Mortality Weekly Report}, 70(3):83, 2021.

\bibitem{tian}
Yuan Tian, Nathaniel~D Osgood, Assaad Al-Azem, and Vernon~H Hoeppner.
\newblock Evaluating the effectiveness of contact tracing on tuberculosis
  outcomes in saskatchewan using individual-based modeling.
\newblock {\em Health education \& behavior}, 40(1\_suppl):98S--110S, 2013.

\bibitem{vecino}
Andres~I Vecino-Ortiz, Juliana Villanueva~Congote, Silvana Zapata~Bedoya, and
  Zulma~M Cucunuba.
\newblock Impact of contact tracing on covid-19 mortality: An impact evaluation
  using surveillance data from colombia.
\newblock {\em Plos one}, 16(3):e0246987, 2021.

\bibitem{virginia}
TB~Virginia Department~of Health and Newcomer Health.
\newblock Determining the need for a contact investigation and prioritizing
  public health response.
\newblock
  \url{https://www.vdh.virginia.gov/content/uploads/sites/112/2018/11/Determining-the-Need-for-a-Contact-Investigation-and-Prioritizing-Public-Health-Response.pdf}.
\newblock Accessed: 2021-06-25.

\bibitem{weiss}
Gideon Weiss.
\newblock Branching bandit processes.
\newblock {\em Probability in the Engineering and Informational Sciences},
  2(3):269–278, 1988.

\end{thebibliography}

\section{Appendix: Bivariate} \label{app:biv}
Recall that each node in the bivariate model is associated with a recency $h \in \{ 0, 1, \dots, T \}$ and span $\Delta \in \{ 0, 1, \dots, T \}$, summarized by a type $(h, \Delta) \in \{ 0, 1, \dots, T \}^2$. A node of span $\Delta$ has probability of infection $p(\Delta) = p_T e^{-\alpha \Delta}$. Policies in the bivariate model are on the set of all types $\{ 0, 1, \dots, T \}^2$.

\paragraph{Defining the optimal policy.} 
Recall that the bivariate model is identical to the basic model except for the probability of infection. Therefore we can construct the optimal policy by modifying the dynamic program from~\cref{sub:basic:defining}. The element $\sigma_0$ is assigned to the type with the highest expected immediate benefit.
\begin{align}
\sigma_0 &= \mathop{\arg\max}_{(h, \Delta) \in \{ 0, 1, \dots, T \}^2 } \Exp[b((h, \Delta), \emptyset)] \label{eqn:biv:dp0_start}
\end{align} 
Given the first $k$ elements $\sigma_0, \sigma_1, \dots, \sigma_{k - 1}$, $\sigma_k$ is selected by comparing $((h, \Delta), \sigma_{k-1})$-periods across all types not already in the prefix.
\begin{align}
\sigma_k &= \mathop{\arg\max}_{(h, \Delta) \in \{ 0, 1, \dots, T \}^2 \setminus \{ \sigma_0, \dots, \sigma_{k - 1} \} } \frac{\Exp[b((h, \Delta), \sigma_{k - 1})]}{1 - \gamma((h, \Delta), \sigma_{k - 1})} \label{eqn:biv:dpk_start}
\end{align}

\paragraph{Analyzing periods.} 
Recall that a node with recency $h$ has children $Y(h) = (Y_{0, h}, Y_{1, h-1}, \dots, Y_{h-1, 1}) \sim D^h$, where $Y_{j,i}$ indicates the number of children with recency $j$ and span $i$. For any prefix $\sigma_0, \sigma_1, \dots, \sigma_{k-1}$, an $((h, \Delta), \sigma_{k - 1})$-period consists of an $((h, \Delta), \emptyset)$-period followed by a $(Y(h), \sigma_{k-1})$-epoch for $Y(h) \sim D^h$. Therefore,
\begin{align}
\Exp[b((h, \Delta), \sigma_{k-1})] &= p(\Delta) \left( e^{-\beta h} + e^{-\beta} \mathop{\Exp}_{Y(h) \sim D^h} \left[ b(Y(h), \sigma_{k-1}) \right] \right) \label{eqn:biapp:id1} \\
\gamma((h, \Delta), \sigma_{k-1}) &= e^{-\beta} \left( p(\Delta) \mathop{\Exp}_{Y(h) \sim D^h} \left[ e^{-\beta \tau(Y(h), \sigma_{k-1})} \right] + 1 - p(\Delta) \right) \label{eqn:bivapp:id2}
\end{align}

\paragraph{Restating the optimal policy.}
We restate the optimal policy using the above identities.
\begin{align}
\sigma_0 &= \mathop{\arg \max}_{(h, \Delta) \in \{ 0, 1, \dots, T \}^2 } \Exp[b((h, \Delta), \emptyset)] \nonumber \\
&= \mathop{\arg \max}_{(h, \Delta) \in \{ 0, 1, \dots, T \}^2 } p(\Delta) e^{-\beta h} \label{eqn:biv:dp0}
\end{align} 
Given the first $k$ elements $\sigma_0, \sigma_1, \dots, \sigma_{k - 1}$, $\sigma_k$ is selected from the elements not already in the prefix.
\begin{align}
\sigma_k &= \mathop{\arg \max}_{(h, \Delta) \in \{ 0, 1, \dots, T \}^2 \setminus \{ \sigma_0, \dots, \sigma_{k - 1} \} } \frac{\Exp[b((h, \Delta), \sigma_{k - 1})]}{1 - \gamma((h, \Delta), \sigma_{k - 1})} \nonumber \\
&= \mathop{\arg \max}_{(h, \Delta) \in \{ 0, 1, \dots, T \}^2 \setminus \{ \sigma_0, \dots, \sigma_{k - 1} \} } \frac{ p(\Delta) \left( e^{-\beta h} + e^{-\beta} \mathop{\Exp}_{Y(h) \sim D^h} \left[ b(Y(h), \sigma_{k-1}) \right] \right) }{1 - e^{-\beta} \left( p(\Delta) \mathop{\Exp}_{Y(h) \sim D^h} \left[ e^{-\beta \tau(Y(h), \sigma_{k-1})} \right] + 1 - p(\Delta) \right)} \label{eqn:biv:dpk}
\end{align}

\subsection{Ordering types of the same recency} \label{subsection:same_recency}
Using similar techniques as in~\cref{section:univariate}, we present our main result for the bivariate section.
\setcounter{section}{7}
\bivspan
\setcounter{section}{10}
\begin{proof}
Fix $T \in \NN$, $p_T \in (0, 1]$, and $\beta > 0$. Let $\sigma$ be the optimal priority ordering constructed via the dynamic program described above. Fix $h \in \{ 0, 1, \dots, T \}$. It suffices to show that $(h, 0) \preceq (h, 1) \preceq \dots \preceq (h, T)$. 

By~\cref{eqn:biv:dp0}
\begin{align*}
\sigma_0 &= \mathop{\arg\max}_{ (h, \Delta) \in \{ 0, 1, \dots, T \}^2 } \Exp[b(h, \emptyset)] \\
&= \mathop{\arg\max}_{ (h, \Delta) \in \{ 0, 1, \dots, T \}^2 } p_T e^{-\alpha \Delta - \beta h} \\
&= (0, 0)
\end{align*}

Since $\alpha > 0$ and $\beta > 0$, $(0, 0)$ attains the maximum. It suffices to show that for any $(h, \Delta_1) \neq (0, 0)$ and $(h, \Delta_2) \neq (0, 0)$ if $\Delta_1 < \Delta_2$ then $(h, \Delta_1) \prec (h, \Delta_2)$, since then by induction $(h, 0) \preceq (h, 1) \preceq \dots \preceq (h, T)$. 

Fix any $(h, \Delta_1) \neq (0, 0)$ and $(h, \Delta_2) \neq (0, 0)$. Assume to the contrary that $(h, \Delta_2) \prec (h, \Delta_1)$. Then $(h, \Delta_2)$ was selected prior to $(h, \Delta_1)$ in the construction of $\sigma$. That is, there is some pair $k, k'$ with $k < k'$ such that $\sigma_k = (h, \Delta_2)$ and $\sigma_{k'} = (h, \Delta_1)$. As a result, for the  prefix $\sigma_0, \dots, \sigma_{k - 1}$,
$$\frac{\Exp[b((h, \Delta_2), \sigma_{k - 1})]}{1 - \gamma((h, \Delta_2), \sigma_{k - 1})} \geq \frac{\Exp[b((h, \Delta_1), \sigma_{k - 1})]}{1 - \gamma((h, \Delta_1), \sigma_{k - 1})}$$

Given the prefix $\sigma_0, \sigma_1, \dots, \sigma_{k - 1}$, define the expected benefit and pre-multiplier associated with a $(Y(h), \sigma_{k - 1})$-epoch for $Y(h) \sim D^h$. 
\begin{align*}
b_h &= \mathop{\Exp}_{Y(h) \sim D^h} [b(Y(h), \sigma_{k - 1})] \\
\gamma_h &= \mathop{\Exp}_{Y(h) \sim D^h} [\gamma(Y(h), \sigma_{k - 1})]
\end{align*}

Applying these identities to~\cref{eqn:biapp:id1,eqn:bivapp:id2}, for any $\Delta \in \{ 0, 1, \dots, T \}$,
\begin{align*}
\frac{\Exp[b((h, \Delta), \sigma_{k-1})]}{1 - \gamma((h, \Delta), \sigma_{k-1})} = \frac{ p_T e^{-\alpha \Delta } \left( e^{-\beta h} + e^{-\beta} b_h \right) }{1 - e^{-\beta} + e^{-\beta} (1 - \gamma_h) p_T e^{-\alpha \Delta } }
\end{align*}

Define the function $f_h : \RR \rightarrow \RR$ as
\begin{align*}
f_h(x) &=  \frac{ p_T e^{-\alpha x } \left( e^{-\beta h} + e^{-\beta} b_h \right) }{1 - e^{-\beta} + e^{-\beta} (1 - \gamma_h) p_T e^{-\alpha x } } 
\end{align*}
Then 
\begin{align*}
f_h(\Delta_1) &= \frac{\Exp[b((h, \Delta_1), \sigma_{k - 1})]}{1 - \gamma((h, \Delta_1), \sigma_{k - 1})} \\
f_h(\Delta_2) &= \frac{\Exp[b((h, \Delta_2), \sigma_{k - 1})]}{1 - \gamma((h, \Delta_2), \sigma_{k - 1})}
\end{align*}
Since $\sigma_k = (h, \Delta_2)$, $f_h(\Delta_2) \geq f_h(\Delta_1)$.
For strictly positive constants 
\begin{align*}
C_0 &= p_T \left( e^{-\beta h} + e^{-\beta} b_h \right) \\
C_1 &= 1 - e^{-\beta} \\
C_2 &= p_T e^{-\beta} (1 - \gamma_h),
\end{align*}
we can rewrite $f_h$ as
\begin{align*}
f_h(x) &= \frac{ C_0 e^{-\alpha x} }{C_1 + C_2 e^{-\alpha x } }
\end{align*}
Then 
\begin{align*}
f_h'(x) &= \frac{ - C_0 C_1 \alpha e^{-\alpha x} } { (C_1 + C_2 e^{-\alpha x})^2 }
\end{align*}
Since $C_0, C_1, C_2 > 0$, $f_h'(x) < 0$ for all $x \in \RR$. Therefore $f_h$ is monotonically decreasing. This implies $f_h(\Delta_2) < f_h(\Delta_1)$, which contradicts the assumption that $f_h(\Delta_2) \geq f_h(\Delta_1)$. Thus no such prefix $\sigma_0, \sigma_1, \dots, \sigma_{k - 1}$ exists, so $(h, \Delta_1) \preceq (h, \Delta_2)$, and as a result $(h, 0) \preceq (h, 1) \preceq \dots \preceq (h, T)$.
\end{proof}

\subsection{Ordering types of the same span} \label{subsection:same_span}
In this section we examine ordering types with the same span. Recall that two nodes of different recencies have no children of the same type. This makes comparing periods and epochs tricky, and as a result, this section uses fairly different techniques. In order to determine whether $(h, \Delta) \prec (h', \Delta')$, we analyze the construction of optimal policies step by step and examine which type appears first in the ordering. Our techniques require case-by-case analysis, so we restrict our analysis to the setting in which $D$ is a Bernoulli distribution and the recencies of all nodes are limited to $\{ 0, 1, 2 \}$. 

Our main result shows that, under these restrictions, it is optimal to query nodes of the same span in order of recency.
\setcounter{section}{7}
\bivrecency
\setcounter{section}{10}
\begin{proof}
Fix $T \in \NN$, $p_T \in (0, 1]$, $c \in (0, 1]$, $\alpha > 0$, and $\beta > \ln\left( 2 (1 + c p_T e^{-\alpha})/(1 - e^{-\alpha}) \right)$. Let $D = \Ber(c)$. It suffices to show that there is an optimal policy where for any $\Delta \in \{ 0, 1, \dots, T \}$, $(0, \Delta) \preceq (1, \Delta) \preceq (2, \Delta)$.~\Cref{lem:0_delta_1_delta} shows that $(0, \Delta) \preceq (1, \Delta)$, and~\cref{lem:1_delta_2_delta} shows that $(1, \Delta) \preceq (2, \Delta)$, giving the desired result.
\end{proof}

\paragraph{Overview of techniques.} 
Recall that a node of type $(h, \Delta)$ has children defined by the multiset $Y(h) = (Y_{0, h}, Y_{1, h-1}, \dots, Y_{h - 1, 1}) \sim D^h$, where $Y_{j,i}$ indicates the number of children with recency $j$ and span $i$. For the following proofs $D$ is a Bernoulli distribution, so $Y_{j, i} \in \{ 0, 1 \}$. We say that a node $v$ with recency $h(v) = h$ has \textit{potential children} $\{ (0, h), (1, h-1), \dots, (h-1, 1) \}$, since each type in the set could be a child of $v$. Let $W(h, \Delta)$ be the set of all \textit{potential descendants} of a node of type $(h, \Delta)$.~\Cref{fig:biv_tree} shows the potential descendants of different types we examine in this section.

\begin{figure}
\begin{center}
\includegraphics[width=.5\linewidth]{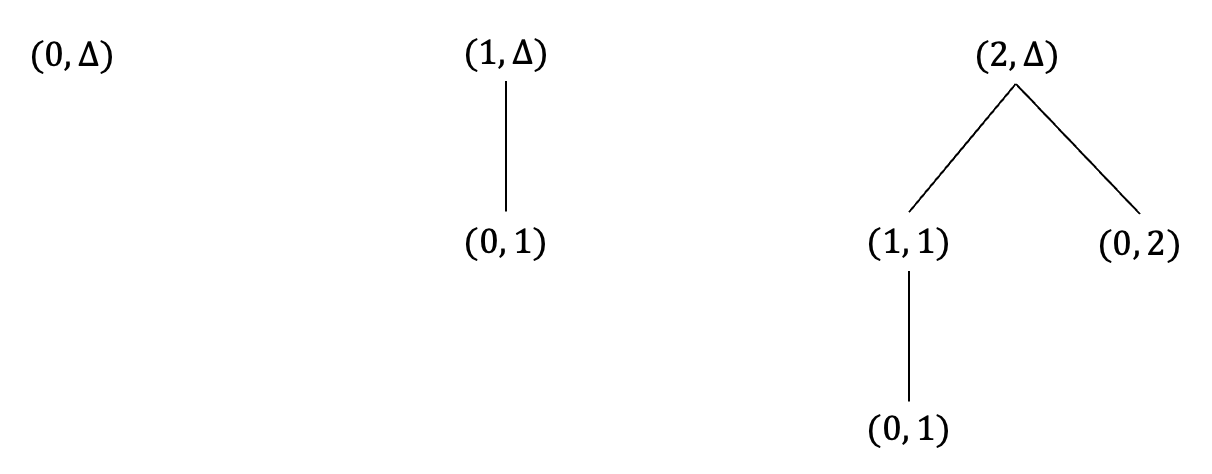}
\caption{In comparing types with the same span $\Delta$, we examine a version of the bivariate model where each individual meets at most one contact each day. A node of recency $0$ has no potential children, a node of recency $1$ has one potential child, and a node of recency $2$ has two potential children, one with a potential child of its own.}
\label{fig:biv_tree}
\end{center}
\end{figure}

Our strategy is to analyze the dynamic program from~\cref{eqn:biv:dp0_start,eqn:biv:dpk_start}. By definition, an optimal priority ordering $\sigma$ sequences all types in $\{ 0, 1, \dots, T \}^2$. However, determining whether $(h, \Delta) \preceq (h', \Delta')$ does not necessarily require computing the entirety of $\sigma$. Instead, we only need to determine which of $(h, \Delta)$ and $(h', \Delta')$ falls earlier in the ordering $\sigma$. To do this, for any prefix $\sigma_0, \sigma_1, \dots, \sigma_{k - 1}$, we need to be able to compare an $((h, \Delta), \sigma_{k - 1})$-period to an $((h', \Delta'), \sigma_{k - 1})$-period. An $((h, \Delta), \sigma_{k - 1})$-period only involves the type $(h, \Delta)$ and its potential descendants. Therefore, to decide whether $(h, \Delta) \preceq (h', \Delta')$, it is sufficient to compute an optimal ordering on only the two types in question, $(h, \Delta)$ and $(h', \Delta')$, and their potential descendants, $W(h, \Delta)$ and $W(h', \Delta')$. Specifically, we construct an ordering $\pi$ on the set of types
$$S_\pi = (h, \Delta) \cup (h', \Delta') \cup W(h, \Delta) \cup W(h', \Delta')$$
such that if $(h, \Delta) \preceq (h', \Delta')$ according to $\pi$, it follows that there is an optimal ordering $\sigma$ on $\{ 0, 1, \dots, T \}^2$ such that $\pi$ is a subsequence of $\sigma$ and therefore $(h, \Delta) \preceq (h', \Delta')$ according to $\sigma$. 

To simplify notation, for any type $(h, \Delta)$ and prefix $\pi_0, \pi_1, \dots, \pi_{k - 1}$, we define the index function
\begin{align*}
I((h, \Delta), \pi_{k - 1}) &= \frac{\Exp[b((h, \Delta), \pi_{k - 1})]}{1 - \gamma((h, \Delta), \pi_{k - 1})}
\end{align*}
Overloading notation, for any recency $h \in \{ 0, 1, \dots, T \}$ we define the benefit function
\begin{align*}
b(h) &= e^{-\beta h}
\end{align*}
The dynamic program is defined exactly as before, now with the index function notation. The highest priority element according to $\pi$ is the type in $S_\pi$ with the highest expected immediate benefit.
\begin{align*}
\pi_0 &= \mathop{\arg\max}_{ (h, \Delta) \in S_\pi } I((h, \Delta), \emptyset)
\end{align*}
Given a prefix $\pi_0, \pi_1, \dots, \pi_{k - 1}$, $\pi_k$ is chosen from the remaining elements in $S_\pi$.
\begin{align*}
\pi_k &= \mathop{\arg\max}_{ (h, \Delta) \in S_\pi \setminus \{ \pi_0, \dots, \pi_{k - 1} \} } I((h, \Delta), \pi_{k - 1})
\end{align*}

\paragraph{Proof organization.}
The proof is split into two main lemmas.~\Cref{lem:0_delta_1_delta} shows that $(0, \Delta) \preceq (1, \Delta)$, and~\cref{lem:1_delta_2_delta} shows that $(1, \Delta) \preceq (2, \Delta)$.~\Cref{lem:0_delta_1_delta} serves as an introduction to constructing optimal priority orderings in a fairly simple setting where $\pi$ involves only three types, and ~\cref{lem:1_delta_2_delta} extends these techniques to handle a larger set of types. Because of this added complexity, the proof of~\cref{lem:1_delta_2_delta} is split into cases. To show that $(1, \Delta) \preceq (2, \Delta)$,~\cref{lem:10_20,lem:11_21} handle the cases when $\Delta \in \{ 0, 1 \}$ and~\cref{lem:first_prefix,lem:second_prefix,lem:third_prefix} handle the cases when $\Delta > 1$. 

\subsubsection{Proving $(0, \Delta) \preceq (1, \Delta)$}
\begin{lemma} \label{lem:0_delta_1_delta}
Fix $T \in \NN$, $\Delta \in \{ 0, 1, \dots, T \}$, $p_T \in (0, 1]$, $c \in (0, 1]$, $\alpha > 0$, and $\beta > \ln(1 + c p_T e^{-\alpha} - c p_T e^{-\alpha \Delta})$. Let $D = \Ber(c)$. Then $(0, \Delta) \preceq (1, \Delta)$.
\end{lemma}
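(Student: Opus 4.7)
The plan is to read off the relative order of $(0,\Delta)$ and $(1,\Delta)$ from the priority ordering $\pi$ constructed by the dynamic program on the restricted set $S_\pi$, and to show that the hypothesis on $\beta$ is precisely what is needed to place $(0,\Delta)$ ahead of $(1,\Delta)$.

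First I would inventory $S_\pi$. A node of recency $0$ has no potential children, so $W(0,\Delta)=\emptyset$; a node of recency $1$ has a single potential child of recency $0$ and span $1$, so $W(1,\Delta)=\{(0,1)\}$. Hence $S_\pi\subseteq\{(0,\Delta),(1,\Delta),(0,1)\}$ contains at most three distinct types, and $\pi$ has at most three entries, keeping the dynamic program entirely tractable by hand.

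I would then dispatch the degenerate cases $\Delta\le 1$ directly from the immediate indices $I(\cdot,\emptyset)=p(\Delta')e^{-\beta h}$. When $\Delta=0$, $(0,0)$ strictly dominates both $(1,0)$ and $(0,1)$ so $\pi_0=(0,0)$; when $\Delta=1$, only two distinct types remain in $S_\pi$ and $(0,1)$ has the larger immediate index, so $\pi_0=(0,1)$. In both cases $(0,\Delta)$ precedes $(1,\Delta)$, and the bound on $\beta$ is vacuous since $\ln(1+cp_Te^{-\alpha}-cp_Te^{-\alpha\Delta})\le 0$ in this range.

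The substantive case is $\Delta>1$. Here $e^{-\alpha}>e^{-\alpha\Delta}$ forces $(0,1)$ to have the largest immediate index, so $\pi_0=(0,1)$, and the question reduces to comparing $I((0,\Delta),(0,1))$ with $I((1,\Delta),(0,1))$. The first index is easy: $(0,\Delta)$ has no potential descendants, so its $((0,\Delta),(0,1))$-period coincides with an $((0,\Delta),\emptyset)$-period, giving index $p_Te^{-\alpha\Delta}/(1-e^{-\beta})$. For the second I would apply the decompositions~\cref{eqn:biapp:id1,eqn:bivapp:id2} with $Y(1)\sim\Ber(c)$; the only nontrivial input is a one-node epoch on type $(0,1)$, which contributes expected benefit $p_Te^{-\alpha}$ and duration identically $1$.

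The main obstacle is the ensuing algebraic comparison. After substituting and clearing the common factor $p_Te^{-\alpha\Delta}/(1-e^{-\beta})$, the inequality $I((0,\Delta),(0,1))\ge I((1,\Delta),(0,1))$ collapses to
\begin{align*}
e^{\beta}\ \ge\ 1+cp_Te^{-\alpha}-cp_Te^{-\alpha\Delta},
\end{align*}
which is exactly the standing hypothesis on $\beta$. Hence $\pi_1=(0,\Delta)$ and $\pi_2=(1,\Delta)$, and by the subsequence principle laid out in the overview of techniques---that $\pi$ embeds as a subsequence of some optimal ordering on the full type space---I conclude $(0,\Delta)\preceq(1,\Delta)$.
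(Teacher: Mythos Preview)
Your proposal is correct and follows essentially the same route as the paper: identify $S_\pi=\{(0,\Delta),(1,\Delta),(0,1)\}$, dispose of $\Delta\in\{0,1\}$ directly from the immediate indices, and for $\Delta>1$ show $\pi_0=(0,1)$ and then reduce the comparison $I((0,\Delta),\pi_0)\ge I((1,\Delta),\pi_0)$ to the inequality $e^{\beta}\ge 1+cp_Te^{-\alpha}-cp_Te^{-\alpha\Delta}$, which is the hypothesis. Your observation that the bound on $\beta$ is vacuous for $\Delta\le 1$ is a nice touch the paper does not make explicit.
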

\begin{proof}
Recall that our strategy is to construct an ordering $\pi$ that is a subsequence of an optimal ordering $\sigma$. Consider $\Delta = 0$. Then $S_\pi = \{ (0, 0), (1, 0), (0, 1) \}$, so
\begin{align*}
\pi_0 &= \mathop{\arg \max}_{(h, \Delta) \in S_\pi} I((h, \Delta), \emptyset) = \mathop{\arg \max}_{(h, \Delta) \in S_\pi} \frac{ p_T e^{-\alpha \Delta - \beta h} }{ 1 - e^{-\beta} } = (0, 0)
\end{align*}
Thus $(0, 0) \prec (0, 1)$. 

Consider $\Delta = 1$. Then $S_\pi = \{ (0, 1), (1, 1) \}$, so
\begin{align*}
\pi_0 &= \mathop{\arg \max}_{(h, \Delta) \in S_\pi} I((h, \Delta), \emptyset) = \mathop{\arg \max}_{(h, \Delta) \in S_\pi} \frac{ p_T e^{-\alpha \Delta - \beta h} }{ 1 - e^{-\beta} } = (0, 1)
\end{align*}
Thus $(0, 1) \prec (1, 1)$.
 
Now consider $\Delta > 1$. Then $S_\pi = \{ (0, \Delta), (1, \Delta), (0, 1) \}$, so
\begin{align*}
\pi_0 &= \mathop{\arg \max}_{(h, \Delta) \in S_\pi} I((h, \Delta), \emptyset) = \mathop{\arg \max}_{(h, \Delta) \in S_\pi} \frac{ p_T e^{-\alpha \Delta - \beta h} }{ 1 - e^{-\beta} } = (0, 1)
\end{align*}
Here, since $\Delta > 1$, $(0, 1)$ maximizes the index function. To select $\pi_1$ we compare $I((0, \Delta), \pi_0)$ and $I((1, \Delta), \pi_0)$. The type $(0, \Delta)$ has no potential children, so 
\begin{align}
I((0, \Delta), \pi_0) &= I((0, \Delta), \emptyset) \nonumber \\
&= \frac{ p(\Delta) b(0)}{1 - e^{-\beta}} \label{eq:0_delta_pv}
\end{align}
The type $(1, \Delta)$ has potential child $(0, 1)$. A $((1, \Delta), \pi_0)$-period begins by querying a node of type $(1, \Delta)$ on step $t = 0$. With probability $p(\Delta)$, the node is infected, and with probability $c$ it has child $(0, 1)$. So with probability $c p(\Delta)$ the child $(0, 1)$ is queried on step $t = 1$, otherwise the period ends on step $t = 0$. Therefore 
\begin{align}
I((1, \Delta), \pi_0) &= \frac{ \Exp[b((1, \Delta), \pi_0)]}{1 - \gamma((1, \Delta), \pi_0)} \nonumber \\
&= \frac{ p(\Delta) (b(1) + c p(1) b(0) e^{-\beta}) }{1 - [(1 - c p(\Delta)) e^{-\beta} + c p(\Delta) e^{-2 \beta}]} \nonumber \\
&= \frac{ p(\Delta) (b(1) + c p(1) b(0) e^{-\beta}) }{1 - e^{-\beta} + c p(\Delta) e^{-\beta} - c p(\Delta) e^{-2 \beta}} \nonumber \\
&= \frac{ p(\Delta) (b(1) + c p(1) b(0) e^{-\beta}) }{(1 - e^{-\beta})(1 + c p(\Delta) e^{-\beta})} \label{eq:1_delta_pv}
\end{align}
Combining~\cref{eq:0_delta_pv,eq:1_delta_pv} we have
\begin{align*}
I((0, \Delta), \pi_0) - I((1, \Delta), \pi_0) &= \frac{ p(\Delta) b(0)}{1 - e^{-\beta}} - \frac{ p(\Delta) (b(1) + c p(1) b(0) e^{-\beta}) }{(1 - e^{-\beta})(1 + c p(\Delta) e^{-\beta})} \\
&= \frac{ p(\Delta)}{1 - e^{-\beta}} \cdot \left( b(0) - \frac{b(1) + c p(1) b(0) e^{-\beta} }{1 + c p(\Delta) e^{-\beta}} \right) \\
&= \frac{ p_T e^{-\alpha \Delta} }{1 - e^{-\beta}} \cdot \left( 1 - \frac{e^{-\beta} + c p_T e^{-\alpha - \beta} }{1 + c p_T e^{-\alpha \Delta - \beta} } \right)
\end{align*}
The above expression is strictly positive iff 
\begin{align*}
1 &> \frac{e^{-\beta} + c p_T e^{-\alpha - \beta} }{1 + c p_T e^{-\alpha \Delta - \beta}} 
\end{align*}
Setting $\beta >\ln(1 + c p_T e^{-\alpha} - c p_T e^{-\alpha \Delta})$ satisfies the above inequality, so $I((0, \Delta), \pi_0) > I((1, \Delta), \pi_0)$. Thus $\pi_1 = (0, \Delta)$, and thus for $\Delta > 1$, $(0, \Delta) \prec (1, \Delta)$. Therefore for any $\Delta \in \{ 0, 1, \dots, T \}$, $(0, \Delta) \prec (1, \Delta)$.
\end{proof}

\subsubsection{Proving $(1, \Delta) \preceq (2, \Delta)$}
The following lemma analyzes an optimal priority ordering case by case to show that in all cases for the given regime, $(1, \Delta) \preceq (2, \Delta)$.
\begin{lemma} \label{lem:1_delta_2_delta}
Fix $T \in \NN$, $\Delta \in \{ 0, 1, \dots, T \}$, $p_T \in (0, 1]$, $c \in (0, 1]$, $\alpha > 0$, and $\beta > \ln\left( 2 (1 + c p_T e^{-\alpha})/(1 - e^{-\alpha}) \right)$. Let $D = \Ber(c)$. Then $(1, \Delta) \preceq (2, \Delta)$.
\end{lemma}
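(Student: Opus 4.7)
The plan is to apply the restricted-subsequence construction introduced in the proof of~\cref{lem:0_delta_1_delta}: build an optimal ordering $\pi$ on the closed set
$$S_\pi = \{(1,\Delta),(2,\Delta)\} \cup W(1,\Delta) \cup W(2,\Delta),$$
and show that $(1,\Delta)$ precedes $(2,\Delta)$ in $\pi$; this suffices because $\pi$ extends to a subsequence of some optimal $\sigma$ on $\{0,1,\dots,T\}^2$. Under $D = \Ber(c)$ and the restriction to recencies in $\{0,1,2\}$, we have $W(1,\Delta) = \{(0,1)\}$ and $W(2,\Delta) = \{(0,2),(1,1),(0,1)\}$, so $|S_\pi| \le 5$ after accounting for coincidences when $\Delta \in \{1,2\}$. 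This lets us run the dynamic program in~\cref{eqn:biv:dp0,eqn:biv:dpk} explicitly on $S_\pi$, using the period decomposition~\cref{eqn:biapp:id1,eqn:bivapp:id2} to evaluate each $(Y(h),\pi_{k-1})$-epoch; all epochs range only over recency-$\leq 1$ types, so the relevant quantities are closed-form.

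First I would split on $\Delta$. When $\Delta = 0$ (\cref{lem:10_20}), the type $(1,\Delta)$ beats $(2,\Delta)$ in immediate expected benefit $I((h,\Delta),\emptyset) = p_T e^{-\alpha\Delta - \beta h}/(1-e^{-\beta})$, so $(1,0)$ wins the first index comparison among those two after any already-chosen common descendants are stripped away. When $\Delta = 1$ (\cref{lem:11_21}), the coincidence $(1,1) = (1,\Delta)$ identifies $(1,\Delta)$ with one of the descendants of $(2,\Delta)$, forcing $(1,\Delta)$ into the prefix before $(2,\Delta)$ by a similar but slightly longer computation. For $\Delta > 1$, the maximum of $I((h,\Delta'),\emptyset)$ over $S_\pi$ is attained at $(0,1)$, so $\pi_0 = (0,1)$; after that, the construction may take one of three shapes depending on how the remaining small-span descendants $(1,1)$ and $(0,2)$ interleave with $(1,\Delta)$ and $(2,\Delta)$. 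These three shapes are precisely the prefixes handled by \cref{lem:first_prefix,lem:second_prefix,lem:third_prefix}.

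The core calculation, common to all cases, is to expand $I((1,\Delta),\pi_{k-1})$ and $I((2,\Delta),\pi_{k-1})$ via~\cref{eqn:biapp:id1,eqn:bivapp:id2}, factor out the common $p(\Delta) = p_T e^{-\alpha\Delta}$, and reduce the inequality $I((1,\Delta),\pi_{k-1}) > I((2,\Delta),\pi_{k-1})$ to a condition on $\beta$ alone. Intuitively, the advantage $(2,\Delta)$ gains over $(1,\Delta)$ in its numerator comes from the recency-$\leq 1$ descendants not yet in the prefix, whose aggregate contribution is bounded by an $O(1)$ multiple of $1 + c p_T e^{-\alpha}$; meanwhile the extra immediate benefit $b(1) - b(2) = e^{-\beta}(1-e^{-\beta})$ enjoyed by $(1,\Delta)$ only needs to beat that descendant advantage once it is weighted by the span-$1$-vs-span-$2$ gap $(1 - e^{-\alpha})$. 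Balancing these two quantities is exactly what the hypothesis $\beta > \ln\bigl(2(1 + cp_T e^{-\alpha})/(1-e^{-\alpha})\bigr)$ encodes.

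The hard part will be the bookkeeping across the three subcases for $\Delta > 1$. In each configuration the epoch benefit $\Exp[b(Y(h),\pi_{k-1})]$ and pre-multiplier $\Exp[e^{-\beta \tau(Y(h),\pi_{k-1})}]$ simplify differently, depending on which of $(0,2),(1,1)$ has already been absorbed into the prefix and whether the grandchild $(0,1)$ of a recency-$2$ node still contributes to the $(2,\Delta)$-epoch. I would discharge each subcase by writing out the two index expressions, canceling the $p(\Delta)$ factor, and checking that the resulting inequality reduces to a consequence of the $\beta$-bound. Combining the three prefix lemmas with the $\Delta\in\{0,1\}$ base cases then yields $(1,\Delta)\preceq(2,\Delta)$ for every $\Delta\in\{0,1,\dots,T\}$, which is the desired conclusion.
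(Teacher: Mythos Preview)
Your outline matches the paper's proof in architecture: split on $\Delta\in\{0,1\}$ via \cref{lem:10_20,lem:11_21}, and for $\Delta>1$ set $\pi_0=(0,1)$ and then invoke \cref{lem:first_prefix,lem:second_prefix,lem:third_prefix}. What is missing is the glue that reduces the $\Delta>1$ case tree to exactly those three prefix shapes. After $\pi_0=(0,1)$ the candidates for $\pi_1$ are $(1,1),(0,2),(1,\Delta),(2,\Delta)$, and nothing you have written rules out $\pi_1=(2,\Delta)$; likewise, in the branch $\pi_1=(0,2)$ you need $\pi_2\neq(2,\Delta)$ before \cref{lem:first_prefix} becomes applicable. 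The paper handles both exclusions not by index computation but by appealing to the span-monotonicity theorem \cref{thm:biv:span} (giving $(1,1)\preceq(1,\Delta)$) together with \cref{lem:11_21} and \cref{thm:biv:span} again (giving $(1,1)\preceq(2,1)\preceq(2,\Delta)$), so that $(1,1)$ is forced ahead of both $(1,\Delta)$ and $(2,\Delta)$ at every stage. Your phrase ``depending on how $(1,1)$ and $(0,2)$ interleave with $(1,\Delta)$ and $(2,\Delta)$'' elides precisely this step.

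Your summary of the $\Delta=0$ case is also too quick: \cref{lem:10_20} does not follow just from the immediate-benefit comparison. The paper first uses \cref{thm:biv:span} to argue $(1,0)\preceq(1,1)$, which lets it drop $(1,1)$ from $S_\pi$ entirely; it then splits on $\alpha\gtrless\beta$ and, in the $\alpha<\beta$ sub-case, must still handle the possibility $\pi_1=(0,2)$ before comparing $(1,0)$ against $(2,0)$ with $(0,2)$ already in the prefix. None of this is hard, but ``$(1,0)$ wins the first index comparison after common descendants are stripped away'' does not capture it.
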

\begin{proof}
Recall that the proof idea is to construct an ordering $\pi$ on a subset of types, where $\pi$ is a subsequence of an optimal ordering $\sigma$ on $\{ 0, 1, \dots, T \}^2$. Specifically, let $S_\pi = \{ (0, 1), (1, 1), (0, 2), (1, \Delta), (2, \Delta) \}$. We will analyze a priority ordering $\pi$ on $S_\pi$. The cases for $\Delta = 0$ and $\Delta = 1$ are covered by~\cref{lem:10_20} and~\cref{lem:11_21}, respectively. We analyze the case when $\Delta > 1$ in multiple parts, as follows.

Fix $\Delta \in \{ 2, \dots, T \}$. Then, as in~\cref{lem:0_delta_1_delta,lem:10_20,lem:11_21}, $\pi_0 = (0, 1)$. Next we examine $\pi_1$. By~\cref{thm:biv:span} $(1, 1) \preceq (1, \Delta)$, and by~\cref{lem:11_21} and~\cref{thm:biv:span} $(1, 1) \preceq (2, 1) \preceq (2, \Delta)$. Therefore neither $(1, \Delta)$ nor $(2, \Delta)$ can be assigned to $\pi_1$, leaving $(1, 1)$ and $(0, 2)$ as the only remaining candidates. We now examine both cases and show that regardless of whether $\pi_1 = (1, 1)$ or $\pi_1 = (0, 2)$, $(1, \Delta) \preceq (2, \Delta)$. 

Suppose $\pi_1 = (0, 2)$. Consider selecting $\pi_2$ from the set of remaining types $\{ (1, 1), (1, \Delta), (2, \Delta) \}$. Following the same argument as described above, $(1, 1)$ precedes both $(1, \Delta)$ and $(2, \Delta)$, so $\pi_2 = (1, 1)$. For the given prefix, by~\cref{lem:first_prefix} $I((1, \Delta), \pi_2) > I((2, \Delta), \pi_2)$ so $\pi_3 = (1, \Delta)$, and therefore $(1, \Delta) \preceq (2, \Delta)$.

Suppose $\pi_1 = (1, 1)$. Consider selecting $\pi_2$ from the set of remaining types $\{ (0, 2), (1, \Delta), (2, \Delta) \}$. By~\cref{lem:second_prefix} $I((1, \Delta), \pi_1) > I((2, \Delta), \pi_1)$, so $\pi_2 \neq (2, \Delta)$. If $\pi_2 = (1, \Delta)$, then $(1, \Delta) \preceq (2, \Delta)$. Suppose instead that $\pi_2 = (0, 2)$. For the given prefix, by~\cref{lem:third_prefix} $I((1, \Delta), \pi_2) > I((2, \Delta), \pi_2)$, so $\pi_3 = (1, \Delta)$. Thus with the given bound on $\beta$, in any optimal priority ordering $(1, \Delta) \preceq (2, \Delta)$.
\end{proof}
\Cref{lem:10_20,lem:11_21} follow similar structures, and the relatively minor differences between the two proofs are due to the change in span. 
\begin{lemma} \label{lem:10_20}
Fix $T \in \NN$, $p_T \in (0, 1]$, $c \in (0, 1]$, and $\alpha, \beta > 0$. Let $D = \Ber(c)$. Then $(1, 0) \preceq (2, 0)$.
\end{lemma}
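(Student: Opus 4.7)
The plan is to follow the approach of~\cref{lem:0_delta_1_delta}: construct via the dynamic program from~\cref{eqn:biv:dp0,eqn:biv:dpk} a priority ordering $\pi$ on the restricted set
\[
S_\pi = \{(0, 1),\ (1, 1),\ (0, 2),\ (1, 0),\ (2, 0)\},
\]
containing the two types of interest together with all their potential descendants, and then show that $(1, 0)$ is selected before $(2, 0)$. Since $\pi$ can be realized as a subsequence of an optimal ordering on all of $\{0, \dots, T\}^2$, by~\cref{thm:biv:span} I may assume $\pi$ respects span monotonicity, so that $(0, 1) \prec (0, 2)$ and $(1, 0) \prec (1, 1)$ in $\pi$.

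The first step is to compute $\pi_0$ by comparing the initial indices $I((h, \Delta), \emptyset) = p_T e^{-\alpha \Delta - \beta h}/(1 - e^{-\beta})$ across $S_\pi$, splitting into the two cases determined by $\alpha \geq \beta$ versus $\alpha < \beta$. If $\alpha \geq \beta$, a direct comparison of exponents shows the maximum is attained at $(1, 0)$ itself, so $\pi_0 = (1, 0)$ and the claim is immediate. Otherwise $\alpha < \beta$ forces $\pi_0 = (0, 1)$, and I would continue by arguing that at every later round of the dynamic program for which $(1, 0)$ is still a candidate, $I((1, 0), \pi_{k-1}) > I((2, 0), \pi_{k-1})$, which prevents $(2, 0)$ from ever being chosen first. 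By span monotonicity $(1, 1)$ cannot appear in any prefix before $(1, 0)$, so any such prefix is a subset of $\{(0, 1), (0, 2)\}$ compatible with $(0, 1) \prec (0, 2)$, leaving only the two prefixes $\{(0, 1)\}$ and $\{(0, 1), (0, 2)\}$ to verify. For each I would expand both sides using the period decompositions in~\cref{eqn:biapp:id1,eqn:bivapp:id2}: a $((1, 0), \pi_{k-1})$-period queries the root and then its revealed potential child $(0, 1)$, while a $((2, 0), \pi_{k-1})$-period queries the root and then whichever of its potential children lies $\preceq \pi_{k-1}$---nothing in the prefix $\{(0, 1)\}$ case, since neither $(0, 2)$ nor $(1, 1)$ is in the prefix, and just $(0, 2)$ in the prefix $\{(0, 1), (0, 2)\}$ case.

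I expect the main obstacle to be the prefix $\{(0, 1)\}$ case, because the two periods there have different duration distributions and hence different denominators. After cancelling common factors, the inequality should reduce to something of the form $(1 - e^{-\beta}) + c p_T (e^{-\alpha} - e^{-2\beta}) > 0$, and the essential point that makes it go through is that $\pi_0 = (0, 1)$ forces $\alpha < \beta$, which gives $e^{-\alpha} > e^{-2\beta}$ and makes the second summand nonnegative while the first is already strictly positive. The prefix $\{(0, 1), (0, 2)\}$ case should be easier: both periods there have identically distributed durations, so the denominators cancel exactly and the comparison collapses to a single numerator inequality that is immediate from $\alpha, \beta > 0$.
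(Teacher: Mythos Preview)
Your proposal is correct and follows essentially the same approach as the paper: restrict to $S_\pi$, use span monotonicity to eliminate $(1,1)$ from any prefix preceding $(1,0)$, split on $\alpha \geq \beta$ versus $\alpha < \beta$, and in the latter case verify the two relevant prefixes. The one tactical difference is at the prefix $\{(0,1)\}$ stage---what you call the main obstacle: you compare $I((1,0),\pi_0)$ directly against $I((2,0),\pi_0)$ (which works, via exactly the inequality $(1-e^{-\beta})+cp_T(e^{-\alpha}-e^{-2\beta})>0$ you identify, using $\alpha<\beta$), whereas the paper sidesteps this comparison entirely by observing that neither $(0,2)$ nor $(2,0)$ has $(0,1)$ as a potential child, so their indices are unchanged from the $\emptyset$-prefix and $\alpha<\beta$ gives $I((0,2),\pi_0)>I((2,0),\pi_0)$ immediately, ruling out $\pi_1=(2,0)$ without ever touching the unequal-denominator computation.
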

\begin{proof}
Let $S_\pi = \{ (0, 1), (1, 0), (1, 1), (0, 2), (2, 0) \}$. Since $(1, 0)$ and $(1, 1)$ both have recency $1$, by~\cref{thm:biv:span} it is optimal to query $(1, 0)$ before $(1, 1)$. Therefore, there is an optimal ordering $\sigma$ where $(1, 0) \preceq (1, 1)$. Then for some pair $k, k'$, with $k' > k$, $\pi_k = (1, 0)$ and $\pi_{k'} = (1, 1)$. Since $(1, 1) \notin \{ \pi_0, \dots, \pi_k \}$, when computing $I((2, 0), \pi_j)$ for any $0 \leq j \leq k$ the potential child $(1, 1)$ is never queried. Therefore we can ignore the potential child $(1,1)$ when computing $\pi$, since it does not affect which of $(1, 0)$ or $(2, 0)$ appears first. Instead, for the purposes of this particular proof, we can think of $(2, 0)$ as having a single potential child $(0, 2)$. Thus we only need to choose an ordering on the types $S_\pi' = \{ (0, 1), (1, 0), (0, 2), (2, 0) \}$.

Consider $\alpha \geq \beta$. Then the type $(1, 0)$ (which recall has recency $h = 1$ and span $\Delta = 0$) is assigned to $\pi_0$.
\begin{align*}
\pi_0 &= \mathop{\arg \max}_{(h, \Delta) \in S_\pi'} I((h, \Delta), \emptyset) = \mathop{\arg \max}_{(h, \Delta) \in S_\pi'} \frac{ p_T e^{-\alpha \Delta - \beta h} }{ 1 - e^{-\beta} } = (1, 0)
\end{align*}
Thus if $\alpha \geq \beta$, $(1, 0) \preceq (2, 0)$.

Consider $\alpha < \beta$. Then 
\begin{align*}
\pi_0 &= \mathop{\arg \max}_{(h, \Delta) \in S_\pi'} I((h, \Delta), \emptyset) = \mathop{\arg \max}_{(h, \Delta) \in S_\pi'} \frac{ p_T e^{-\alpha \Delta - \beta h} }{ 1 - e^{-\beta} } = (0, 1)
\end{align*}
Given the prefix $\pi_0$, we now compute $\pi_1$. The possible candidates for $\pi_1$ are $(1, 0)$, $(2, 0)$ and $(0, 2)$. Since neither $(2, 0)$ nor $(0, 2)$ has $\pi_0$ as a child, $I((2, 0), \pi_0) = I((2, 0), \emptyset)$ and $I((0, 2), \emptyset) = I((0, 2), \emptyset)$. Since $\alpha < \beta$, 
\begin{align*}
I((0, 2), \pi_0) = I((0, 2), \emptyset) > I((2, 0), \pi_0) = I((2, 0), \emptyset)
\end{align*}
Therefore $\pi_1 \neq (2, 0)$, leaving $(1, 0)$ and $(0, 2)$ as the remaining candidates. If $\pi_1 = (1, 0)$, then $(1, 0) \preceq (2, 0)$, and we are done. Suppose $\pi_1 = (0, 2)$. Now we compute $\pi_2$ by comparing $I((1, 0), \pi_1)$ and $I((2,0), \pi_1)$. The type $(1, 0)$ has potential child $(0, 1)$ in the prefix and the type $(2, 0)$ has potential child $(0, 2)$ in the prefix. Mirroring the process described in~\cref{lem:0_delta_1_delta}, 
\begin{align*}
I((1, 0), \pi_1) &= \frac{ \Exp[b((1, 0), \pi_1)]}{1 - \gamma((1, 0), \pi_1)} \\
&= \frac{ p(0) (b(1) + c p(1) b(0) e^{-\beta}) }{1 - [(1 - c p(0)) e^{-\beta} + c p(\Delta) e^{-2 \beta}]} \\
&= \frac{ p(0) (b(1) + c p(1) b(0) e^{-\beta}) }{1 - e^{-\beta} + c p(0) e^{-\beta} - c p(0) e^{-2 \beta}} \\
&= \frac{ p(0) (b(1) + c p(1) b(0) e^{-\beta}) }{(1 - e^{-\beta})(1 + c p(0) e^{-\beta})}
\end{align*}
\begin{align*}
I((2, 0), \pi_1) &= \frac{ \Exp[b((2, 0), \pi_1)]}{1 - \gamma((2, 0), \pi_1)} \\
&= \frac{ p(0) (b(2) + c p(2) b(0) e^{-\beta}) }{1 - [(1 - c p(0)) e^{-\beta} + c p(\Delta) e^{-2 \beta}]} \\
&= \frac{ p(0) (b(2) + c p(2) b(0) e^{-\beta}) }{1 - e^{-\beta} + c p(0) e^{-\beta} - c p(0) e^{-2 \beta}} \\
&= \frac{ p(0) (b(2) + c p(2) b(0) e^{-\beta}) }{(1 - e^{-\beta})(1 + c p(0) e^{-\beta})}
\end{align*}
Then we have
\begin{align}
I((1, 0), \pi_1) - I((2, 0), \pi_1) &= \frac{ p(0) (b(1) + c p(1) b(0) e^{-\beta}) }{(1 - e^{-\beta})(1 + c p(0) e^{-\beta})} - \frac{ p(0) (b(2) + c p(2) b(0) e^{-\beta}) }{(1 - e^{-\beta})(1 + c p(0) e^{-\beta})} \label{eq:10_20_pv} \\
&= \frac{ p(0) }{ (1 - e^{-\beta})(1 + c p(0) e^{-\beta}) } \cdot \left( (b(1) + c p(1) b(0) e^{-\beta}) -  (b(2) + c p(2) b(0) e^{-\beta}) \right) \nonumber \\
&= \frac{ p_T }{ (1 - e^{-\beta})(1 + c p_T e^{-\beta}) } \cdot \left( e^{-\beta} + c p_T e^{-\alpha - \beta} - e^{-2 \beta} - c p_T e^{-2 \alpha - \beta}  \right) \nonumber \\
&= \frac{ p_T e^{-\beta} }{ (1 - e^{-\beta})(1 + c p_T e^{-\beta}) } \cdot \left( 1 + c p_T e^{-\alpha} - e^{- \beta} - c p_T e^{-2 \alpha }  \right) \nonumber \\
&= \frac{ p_T e^{-\beta} }{ (1 - e^{-\beta})(1 + c p_T e^{-\beta}) } \cdot \left( 1 - e^{-\beta} + c p_T e^{-\alpha}( 1 - e^{-\alpha}) \right) \nonumber \\
&> 0
\end{align}
The final inequality comes from the fact that $\alpha, \beta > 0$. Thus $I((1, 0), \pi_1) > I((2, 0), \pi_1)$, so $\pi_2 = (1, 0)$, and therefore $(1, 0) \prec (2, 0)$.
\end{proof}
The next lemma follows a similar proof structure, this time for $\Delta = 1$.
\begin{lemma} \label{lem:11_21}
Fix $T \in \NN$, $p_T \in (0, 1]$, $c \in (0, 1]$, $\alpha > 0$, and $\beta > \ln(1 + c p_T e^{-\alpha})$. Let $D = \Ber(c)$. Then $(1, 1) \preceq (2, 1)$.
\end{lemma}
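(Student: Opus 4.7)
The plan is to follow the template of Lemma \ref{lem:10_20} (which handles $\Delta = 0$), adapting the index computations to the span-$1$ setting. I would construct a priority ordering $\pi$ on the restricted set $S_\pi = \{(0,1), (1,1), (0,2), (2,1)\}$, consisting of the two types in question together with all of their potential descendants (note $W(1,1) = \{(0,1)\}$ and $W(2,1) = \{(0,2), (1,1), (0,1)\}$), and show that the DP for $\pi$ places $(1,1)$ strictly before $(2,1)$; by the same subsequence argument used in Lemma \ref{lem:10_20}, this suffices.

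First I would verify $\pi_0 = (0,1)$. At the empty prefix, $I((h, \Delta), \emptyset) = p_T e^{-\alpha \Delta - \beta h}/(1-e^{-\beta})$, and for any $\alpha, \beta > 0$ the exponent is uniquely maximized at $(0,1)$ among $S_\pi$. Unlike Lemma \ref{lem:10_20}, no case split on whether $\alpha \gtrless \beta$ is required, since the competing candidate has recency $1$ rather than $0$.

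Next, at prefix $\pi_0 = \{(0,1)\}$, I would evaluate the indices for the three remaining types. The key observation is that although $(1,1) \in W(2,1)$, during a $((2,1), \pi_0)$-period the child of type $(1,1)$ cannot be queried because $(1,1) \notin \pi_0$, so the period consists of a single query and gives $I((2,1), \pi_0) = p_T e^{-\alpha - 2\beta}/(1 - e^{-\beta})$. For $(1,1)$, whose only potential child $(0,1)$ does lie in $\pi_0$, a calculation mirroring Lemma \ref{lem:0_delta_1_delta} yields $I((1,1), \pi_0) = p_T e^{-\alpha}(e^{-\beta} + c p_T e^{-\alpha - \beta})/[(1 - e^{-\beta})(1 + c p_T e^{-\alpha - \beta})]$. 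Writing $A = c p_T e^{-\alpha-\beta}$, the difference factors as
\begin{align*}
I((1,1), \pi_0) - I((2,1), \pi_0) = \frac{p_T e^{-\alpha}}{(1 - e^{-\beta})(1+A)} \Bigl[ e^{-\beta}(1 - e^{-\beta}) + A(1 - e^{-2\beta}) \Bigr] > 0,
\end{align*}
so $\pi_1 \neq (2,1)$. If $\pi_1 = (1,1)$ the claim is immediate; otherwise $\pi_1 = (0,2)$, and I would move to the prefix $\{(0,1), (0,2)\}$. Since $(0,2) \notin W(1,1)$, $I((1,1), \pi_1) = I((1,1), \pi_0)$; for $(2,1)$, the $(0,2)$ child is now queryable, giving expected benefit $p_T e^{-\alpha}(e^{-2\beta} + c p_T e^{-2\alpha-\beta})$ and $1 - \gamma((2,1), \pi_1) = (1-e^{-\beta})(1+A)$, whence
\begin{align*}
I((1,1), \pi_1) - I((2,1), \pi_1) = \frac{p_T e^{-\alpha}}{(1 - e^{-\beta})(1+A)} \Bigl[ e^{-\beta}(1 - e^{-\beta}) + c p_T e^{-\alpha-\beta}(1 - e^{-\alpha}) \Bigr] > 0,
\end{align*}
forcing $\pi_2 = (1,1)$ and completing the proof.

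The main obstacle is the self-referential structure in $W(2,1)$: the type $(1,1)$ is simultaneously the type we are comparing against and a potential descendant of $(2,1)$. The argument sidesteps this by observing that at every prefix considered while both $(1,1)$ and $(2,1)$ are still candidates, $(1,1)$ is absent from the prefix, so the $(1,1)$-subtree of $(2,1)$ never contributes to $I((2,1), \cdot)$ and the comparison is safe. The hypothesis $\beta > \ln(1 + c p_T e^{-\alpha})$ is expected to enter when verifying the index inequality in one of the two branches above, mirroring the role of the analogous threshold in Lemma \ref{lem:0_delta_1_delta}.
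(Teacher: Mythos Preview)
Your proposal is correct and follows essentially the same route as the paper: restrict to $S_\pi = \{(0,1),(1,1),(0,2),(2,1)\}$, show $\pi_0 = (0,1)$, rule out $\pi_1 = (2,1)$ by comparing indices, and in the branch $\pi_1 = (0,2)$ verify $\pi_2 = (1,1)$ by a direct index comparison. Your handling of the self-referential issue (that $(1,1)$ is both a comparand and a potential child of $(2,1)$) is exactly right and is implicit in the paper's argument.

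One point worth noting: your factorization at the $\pi_0$-prefix step,
\[
I((1,1),\pi_0)-I((2,1),\pi_0)=\frac{p_T e^{-\alpha}}{(1-e^{-\beta})(1+A)}\bigl[e^{-\beta}(1-e^{-\beta})+A(1-e^{-2\beta})\bigr],
\]
is cleaner than the paper's bound and shows positivity for \emph{any} $\beta>0$. The paper instead bounds $e^{-\beta}$ above by $1/(1+cp_Te^{-\alpha})$ at this step, which is where it invokes the hypothesis $\beta>\ln(1+cp_Te^{-\alpha})$; but since $\frac{1+cp_Te^{-\alpha}}{1+cp_Te^{-\alpha-\beta}}>1>e^{-\beta}$ already, that invocation is not actually needed. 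Your second comparison likewise needs only $\alpha,\beta>0$, matching the paper. So your closing remark that the hypothesis ``is expected to enter'' is too cautious: your own computations show the lemma holds for all $\beta>0$, and the stated lower bound on $\beta$ is an artifact of the paper's looser estimate rather than a genuine requirement.
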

\begin{proof}
Let $S_\pi = \{ (0, 1), (1, 1), (0, 2), (2, 1) \}$. Then
\begin{align*}
\pi_0 &= \mathop{\arg \max}_{(h, \Delta) \in S_\pi} I((h, \Delta), \emptyset) = \mathop{\arg \max}_{(h, \Delta) \in S_\pi} \frac{ p_T e^{-\alpha \Delta - \beta h} }{ 1 - e^{-\beta} } = (0, 1)
\end{align*}
To compute $\pi_1$, we compare $I((1, 1), \pi_0)$, $I((2, 1), \pi_0)$, and $I((0, 2), \pi_0)$. The type $(2, 1)$ has no potential children in the prefix, so
\begin{align*}
I((2, 1), \pi_0) &= I((2, 1), \emptyset) \\
&= \frac{ p(1) b(2) }{ 1 - e^{-\beta} } \\
&= \frac{ p_T e^{-\alpha - 2 \beta } }{ 1 - e^{-\beta} }
\end{align*}
The type $(1, 1)$ has potential child $(0, 1)$ in the prefix. By~\cref{eq:1_delta_pv} in~\cref{lem:0_delta_1_delta}, 
\begin{align*}
I((1, 1), \pi_0) &= \frac{ p(1) (b(1) + c p(1) b(0) e^{-\beta}) }{(1 - e^{-\beta})(1 + c p(1) e^{-\beta})} \\
&= \frac{ p_T e^{-\alpha} (e^{-\beta} + c p_T e^{-\alpha - \beta}) }{(1 - e^{-\beta})(1 + c p_T e^{-\alpha - \beta} )}
\end{align*}
Comparing $I((1, 1), \pi_0)$ and $I((2, 1), \pi_0)$, we have
\begin{align*}
I((1, 1), \pi_0) - I((2, 1), \pi_0) &= \frac{ p_T e^{-\alpha} (e^{-\beta} + c p_T e^{-\alpha - \beta}) }{(1 - e^{-\beta})(1 + c p_T e^{-\alpha - \beta} )} - \frac{ p_T e^{-\alpha - 2 \beta } }{ 1 - e^{-\beta} } \\
&= \frac{ p_T e^{-\alpha - \beta}}{ 1 - e^{-\beta} } \cdot \left( \frac{1 + c p_T e^{-\alpha} }{1 + c p_T e^{-\alpha - \beta} } - e^{-\beta} \right) \\
&\geq \frac{ p_T e^{-\alpha - \beta}}{ 1 - e^{-\beta} } \cdot \left( \frac{1 + c p_T e^{-\alpha} }{1 + c p_T e^{-\alpha - \beta} } - \frac{1}{1 + c p_T e^{-\alpha} } \right) \\ 
&> 0
\end{align*}
The first inequality is due to the lower bound on $\beta$. Thus $I((1, 1), \pi_0) > I((2, 1), \pi_0)$, so $\pi_1 \neq (2, 1)$, leaving $(1, 1)$ and $(0, 2)$ as the only remaining candidates for $\pi_1$. If $\pi_1 = (1, 1)$, then $(1, 1) \prec (2, 1)$, and we are done. Consider the case that $\pi_1 = (0, 2)$, and examine $\pi_2$. The two remaining candidates are $(1, 1)$ and $(2, 1)$. Since $\pi_0 = (0, 1)$ and $\pi_1 = (0, 2)$, the set up is identical to the conclusion of~\cref{lem:10_20}, substituting $\Delta = 1$ for $\Delta = 0$. Applying this substitution to~\cref{eq:10_20_pv},
\begin{align*}
I((1, 1), \pi_1) - I((2, 1), \pi_1) &= \frac{ p(1) (b(1) + c p(1) b(0) e^{-\beta}) }{(1 - e^{-\beta})(1 + c p(1) e^{-\beta})} - \frac{ p(1) (b(2) + c p(2) b(0) e^{-\beta}) }{(1 - e^{-\beta})(1 + c p(1) e^{-\beta})} \\
&= \frac{ p(1)}{(1 - e^{-\beta})(1 + c p(1) e^{-\beta})} \cdot \left((b(1) + c p(1) b(0) e^{-\beta}) - (b(2) + c p(2) b(0) e^{-\beta})\right) \\
&= \frac{ p_T e^{-\alpha} }{(1 - e^{-\beta})(1 + c p_T e^{-\alpha - \beta})} \cdot \left(e^{-\beta} + c p_T e^{-\alpha - \beta} - e^{-2 \beta} - c p_T e^{-2 \alpha - \beta} \right) \\
&= \frac{ p_T e^{-\alpha - \beta} }{(1 - e^{-\beta})(1 + c p_T e^{-\alpha - \beta})} \cdot \left(1 + c p_T e^{-\alpha} - e^{- \beta} - c p_T e^{-2 \alpha} \right) \\
&= \frac{ p_T e^{-\alpha - \beta} }{(1 - e^{-\beta})(1 + c p_T e^{-\alpha - \beta})} \cdot \left(1 - e^{- \beta} + c p_T e^{-\alpha} (1 - e^{-\alpha}) \right) \\
&> 0
\end{align*}
The final inequality comes from the fact that $\alpha, \beta > 0$. Thus $I((1, 1), \pi_1) > I((2, 1), \pi_1)$, so $\pi_2 = (1, 1)$, which implies that $(1, 1) \prec (2, 1)$.
\end{proof}
In the remaining lemmas, we examine cases where $\Delta > 1$. The following lemma simplifies many of the proofs that follow, by showing that in some regimes, it suffices to compare expected benefits.
\begin{lemma}\label{lem:gamma_lemma}
Fix $T \in \NN$, $\Delta \in \{ 0, 1, \dots, T \}$, $p_T \in (0, 1]$, $c \in (0, 1]$, and $\alpha, \beta > 0$. Let 
$$S_\pi = \{ (1, \Delta), (2, \Delta), (0, 1), (1, 1), (0, 2) \}.$$ 
Let $\pi$ be an ordering on $S_\pi$ that is a subsequence of an optimal priority ordering. Fix a prefix $\pi_0, \pi_1, \dots, \pi_k$. If the type $(2, \Delta)$ has at least one potential child in the prefix, then $\gamma((2, \Delta), \pi_k) \leq \gamma((1, \Delta), \pi_k)$.
\end{lemma}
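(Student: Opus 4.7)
The plan is to establish the inequality by a coupling argument that realizes $\tau((1,\Delta),\pi_k)$ and $\tau((2,\Delta),\pi_k)$ on a common probability space with $\tau((2,\Delta),\pi_k) \geq \tau((1,\Delta),\pi_k)$ almost surely. Since $x \mapsto e^{-\beta x}$ is decreasing, this pointwise domination immediately gives $\gamma((2,\Delta),\pi_k) \leq \gamma((1,\Delta),\pi_k)$ upon taking expectations.

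First I would decompose each period by the infection status of its root. Both roots have span $\Delta$, so both are infected with the same probability $p(\Delta)$, and I identify these two Bernoullis. When the root is uninfected the period lasts exactly one step and the durations match trivially. When the root is infected, the $(1,\Delta)$-period has duration $1 + T_1$ with $T_1 = q_{01} Y$, where $q_{01} = \mathbbm{1}[(0,1) \in \text{prefix}]$ and $Y \sim \Ber(c)$ is the indicator that the sole potential child of $(1,\Delta)$ materializes. The $(2,\Delta)$-period has duration $1 + T_2$ with $T_2 = q_{02} Y_1 + q_{11} Y_2 + q_{01} q_{11} Y_2 B_3 Y_3$, where $q_{02}$ and $q_{11}$ are the analogous prefix indicators for $(0,2)$ and $(1,1)$, the $Y_i \sim \Ber(c)$ record whether the two potential children of $(2,\Delta)$ and the potential grandchild $(0,1)$ reached via $(1,1)$ exist, and $B_3 \sim \Ber(p(1))$ is the infection status of that $(1,1)$ parent; all these Bernoullis are independent because $D = \Ber(c)$ acts independently across contacts and infections propagate independently.

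The hypothesis $q_{02} \vee q_{11} = 1$ then makes the coupling transparent: if $q_{02} = 1$ I set $Y = Y_1$, and otherwise $q_{11} = 1$ and I set $Y = Y_2$. A brief case check yields $T_2 \geq T_1$ pointwise, since if $q_{01} = 0$ then $T_1 = 0$ and the bound is automatic, while if $q_{01} = 1$ the coupled term of $T_2$ is identically equal to $Y = T_1$ and the remaining terms are non-negative. Taking the expectation of $e^{-\beta \tau}$ through the coupled root-infection indicator then closes the argument. I do not anticipate a serious obstacle here; the main care lies in the bookkeeping of the coupling, and in observing that the hypothesis on $(2,\Delta)$'s potential children is essential -- without it, $T_2$ could vanish when $T_1 = 1$ and the inequality would reverse.
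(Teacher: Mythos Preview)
Your coupling argument is correct and takes a genuinely different route from the paper's. The paper bounds $\gamma((2,\Delta),\pi_k)$ from above by truncating the duration at $2$: it writes $\gamma \leq \Pr[\tau=1]e^{-\beta} + \Pr[\tau>1]e^{-2\beta}$ for any period, observes this is an equality for $(1,\Delta)$ since that period lasts at most two steps, and then compares the tail probabilities directly via $\Pr[\tau((2,\Delta))>1] = p(\Delta)(1-(1-c)^l) \geq p(\Delta)c \geq \Pr[\tau((1,\Delta))>1]$, the first inequality using $l\geq 1$. Your approach is strictly stronger, since it establishes full stochastic domination $\tau_2 \succeq \tau_1$ via an explicit coupling rather than only the comparison after truncation to $\{1,2\}$; it also makes transparent exactly where the hypothesis $q_{02}\vee q_{11}=1$ enters, namely in guaranteeing a $\Ber(c)$ term in $T_2$ against which $Y$ can be identified. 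The paper's approach has the minor advantage of sidestepping any joint-space construction, trading that for a slightly more opaque moment-truncation step.
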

\begin{proof}
By definition, for any type $(h, \Delta)$ and a prefix $\pi_0, \pi_1, \dots, \pi_k$,
\begin{align}
\gamma((h, \Delta), \pi_k) &= \sum_{j=1}^\infty \Pr[\tau((h, \Delta), \pi_k) = j] e^{-j \beta} \nonumber \\
&\leq \Pr[\tau((h, \Delta), \pi_k) = 1] e^{-\beta} + \Pr[\tau((h, \Delta), \pi_k) > 1] e^{- 2 \beta } \label{eq:gamma_bd}
\end{align}
The duration $\tau((h, \Delta), \pi_k)$ is strictly greater than 1 if and only if the node of type $(h, \Delta)$ is infected and has at least one realized child in the prefix. By definition, a node of type $(h, \Delta)$ is infected with probability $p(\Delta)$. Since each potential child is realized with probability $c$, if the type $(h, \Delta)$ has $l$ potential children in the prefix, a node of type $(h, \Delta)$ has at least one realized child in the prefix with probability $1 - (1 - c)^l$. Therefore
\begin{align}
\Pr[\tau((h, \Delta), \pi_k) > 1] &= p(\Delta) ( 1 - (1 - c)^l) \label{eq:child_bd}
\end{align}
The type $(1, \Delta)$ has only one potential descendant, the child $(0, 1)$, so a $((1, \Delta), \pi_k)$-period has duration at most $2$. 
\begin{align}
\gamma((1, \Delta), \pi_k) &= \Pr[\tau((1, \Delta), \pi_k) = 1] e^{-\beta} + \Pr[\tau((1, \Delta), \pi_k) = 2] e^{- 2 \beta } \nonumber \\
&= \Pr[\tau((1, \Delta), \pi_k) = 1] e^{-\beta} + \Pr[\tau((1, \Delta), \pi_k) > 1] e^{- 2 \beta } \label{eq:gamma_1}
\end{align}
Then by~\cref{eq:child_bd},
\begin{align}
\Pr[\tau((1, \Delta), \pi_k) > 1] &= p(\Delta) ( 1 - (1 - c)) \nonumber \\
&= p(\Delta) c \label{eq:p_tau1}
\end{align}
Consider the type $(2, \Delta)$. By~\cref{eq:gamma_bd},
\begin{align}
\gamma((2, \Delta), \pi_k) &\leq \Pr[\tau((2, \Delta), \pi_k) = 1] e^{-\beta} + \Pr[\tau((2, \Delta), \pi_k) > 1] e^{- 2 \beta } \label{eq:gamma_2}
\end{align}
Since the type $(2, \Delta)$ has at least one potential child in the prefix, by~\cref{eq:child_bd} and the fact that $c \in (0, 1]$,
\begin{align}
\Pr[\tau((2, \Delta), \pi_k) > 1] &= p(\Delta) ( 1 - (1 - c)^l) \nonumber \\
&\geq p(\Delta) ( 1 - (1 - c)) \nonumber \\
&= p(\Delta) c \label{eq:p_tau2}
\end{align}
By~\cref{eq:p_tau1} and~\cref{eq:p_tau2}, $\Pr[\tau((2, \Delta), \pi_k) > 1] \geq \Pr[\tau((1, \Delta), \pi_k) > 1]$. This implies that $\Pr[\tau((2, \Delta), \pi_k) = 1] \leq \Pr[\tau((1, \Delta), \pi_k) = 1]$. Then by~\cref{eq:gamma_1},~\cref{eq:gamma_2}, and the fact that $e^{-2 \beta} < e^{-\beta}$ for $\beta > 0$, 
\begin{align*}
\gamma((2, \Delta), \pi_k) &\leq \Pr[\tau((2, \Delta), \pi_k) = 1] e^{-\beta} + \Pr[\tau((2, \Delta), \pi_k) > 1] e^{- 2 \beta } \\
&\leq \Pr[\tau((1, \Delta), \pi_k) = 1] e^{-\beta} + \Pr[\tau((1, \Delta), \pi_k) > 1] e^{- 2 \beta } \\
&= \gamma((1, \Delta), \pi_k)
\end{align*}
Thus $\gamma((2, \Delta), \pi_k) \leq \gamma((1, \Delta), \pi_k)$.
\end{proof}
\Cref{lem:first_prefix,lem:second_prefix,lem:third_prefix} analyze different prefixes of an optimal priority ordering, as referenced by~\cref{lem:1_delta_2_delta}. 
\begin{lemma} \label{lem:first_prefix}
Fix $T \in \NN$, $\Delta \in \{ 2, \dots, T \}$, $p_T \in (0, 1]$, $c \in (0, 1]$, $\alpha > 0$, and $\beta > \ln\left( 2 (1 + c p_T e^{-\alpha})/(1 - e^{-\alpha}) \right)$. Let $D = \Ber(c)$. Let 
$$S_\pi = \{ (1, \Delta), (2, \Delta), (0, 1), (1, 1), (0, 2) \}.$$ 
Let $\pi$ be an ordering on $S_\pi$ that is a subsequence of an optimal priority ordering. Suppose $\pi_0 = (0, 1)$, $\pi_1 = (0, 2)$, and $\pi_2 = (1, 1)$. Then $\pi_3 = (1, \Delta)$. 
\end{lemma}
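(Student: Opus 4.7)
As already observed in the proof of~\cref{lem:1_delta_2_delta}, the only candidates for $\pi_3$ are $(1, \Delta)$ and $(2, \Delta)$, so it suffices to establish the strict inequality $I((1, \Delta), \pi_2) > I((2, \Delta), \pi_2)$. My plan is to split this into a denominator step and a numerator step. Since $(2, \Delta)$ has both of its potential children $(0, 2)$ and $(1, 1)$ inside the prefix $\pi_0, \pi_1, \pi_2$, \cref{lem:gamma_lemma} applies and gives $\gamma((2, \Delta), \pi_2) \leq \gamma((1, \Delta), \pi_2)$, hence $1 - \gamma((2, \Delta), \pi_2) \geq 1 - \gamma((1, \Delta), \pi_2) > 0$. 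The problem therefore reduces to the numerator inequality $\Exp[b((1, \Delta), \pi_2)] > \Exp[b((2, \Delta), \pi_2)]$.

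The second step is to write out both expected benefits explicitly. The $((1, \Delta), \pi_2)$-period has length at most two, and a direct computation in the style of~\cref{lem:0_delta_1_delta} gives
\begin{align*}
\Exp[b((1, \Delta), \pi_2)] = p(\Delta) e^{-\beta}\bigl(1 + c p_T e^{-\alpha}\bigr).
\end{align*}
For the $((2, \Delta), \pi_2)$-period I would condition on the root being infected and then case-split on which of $(0, 2)$ and $(1, 1)$ is realized among the children. Because $\pi_0 \prec \pi_1 \prec \pi_2$ orders $(0, 1) \prec (0, 2) \prec (1, 1)$, whenever both children appear $(0, 2)$ is queried before $(1, 1)$; moreover, if $(1, 1)$ is found infected its own potential child $(0, 1)$ may be queried on the following step. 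Summing the four cases and factoring out $p(\Delta) e^{-\beta}$ yields
\begin{align*}
\Exp[b((2, \Delta), \pi_2)] = p(\Delta) e^{-\beta}\Bigl[e^{-\beta} + c p_T e^{-2\alpha} + c p_T e^{-\alpha}\bigl(1 + c p_T e^{-\alpha}\bigr)\bigl((1 - c) e^{-\beta} + c e^{-2\beta}\bigr)\Bigr].
\end{align*}

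Subtracting and dividing through by the common factor $p(\Delta) e^{-\beta}$, the desired inequality reduces to
\begin{align*}
1 + c p_T e^{-\alpha}\bigl(1 - e^{-\alpha}\bigr) > e^{-\beta}\Bigl[1 + c p_T e^{-\alpha}\bigl(1 + c p_T e^{-\alpha}\bigr)\bigl((1 - c) + c e^{-\beta}\bigr)\Bigr].
\end{align*}
I would then crudely bound the bracketed factor on the right by $1 + 2 c p_T e^{-\alpha}$, using $1 + c p_T e^{-\alpha} \leq 2$ and $(1 - c) + c e^{-\beta} \leq 1$. The hypothesis $\beta > \ln\!\bigl(2(1 + c p_T e^{-\alpha})/(1 - e^{-\alpha})\bigr)$ is equivalent to $e^{-\beta}(1 + c p_T e^{-\alpha}) < (1 - e^{-\alpha})/2$, and since $1 + 2 c p_T e^{-\alpha} \leq 2(1 + c p_T e^{-\alpha})$, this upgrades to $e^{-\beta}(1 + 2 c p_T e^{-\alpha}) < 1 - e^{-\alpha}$, which is strictly less than $1 \leq 1 + c p_T e^{-\alpha}(1 - e^{-\alpha})$, closing the argument.

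The main obstacle I expect is the bookkeeping in the case analysis for $\Exp[b((2, \Delta), \pi_2)]$: the four children-realization cases produce four different query schedules (including a grandchild term arising from the $(1, 1)$-subtree) and the discount exponents must be tracked carefully enough to combine into the clean polynomial above. Intuitively the inequality is driven by $b(1) - b(2) = e^{-\beta}(1 - e^{-\beta})$, the extra benefit that $(1, \Delta)$ enjoys at the root; the hypothesis on $\beta$ is calibrated precisely so that this gain dominates all of the $O(e^{-\beta})$ contributions that the richer subtree rooted at $(2, \Delta)$ can add.
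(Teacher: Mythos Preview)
Your proposal is correct and follows essentially the same approach as the paper: both invoke \cref{lem:gamma_lemma} to reduce the index comparison to $\Exp[b((1,\Delta),\pi_2)] > \Exp[b((2,\Delta),\pi_2)]$, compute each expected benefit by case analysis on which children are realized, and then finish with crude bounds that reduce to the hypothesis on $\beta$. The only differences are cosmetic—your factoring out of $p(\Delta)e^{-\beta}$ and your specific chain of inequalities at the end are organized slightly differently from the paper's, but the substance is identical.
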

\begin{proof}
To determine $\pi_3$, we compute the index function for the only remaining types, $(1, \Delta)$ and $(2, \Delta)$. Note that all the potential descendants of types $(1, \Delta)$ and $(2, \Delta)$ are in the prefix. By~\cref{lem:gamma_lemma}, since the type $(2, \Delta)$ has at least one potential child in the prefix, $\gamma((2, \Delta), \pi_2) \leq \gamma((1, \Delta), \pi_2)$. Therefore
\begin{align}
I((1, \Delta), \pi_2) - I((2, \Delta), \pi_2) &= \frac{ \Exp[b((1, \Delta), \pi_2)]}{1 - \gamma((1, \Delta), \pi_2)} - \frac{ \Exp[b((2, \Delta), \pi_2)]}{1 - \gamma((2, \Delta), \pi_2)} \nonumber \\
&\geq \frac{1}{1 - \gamma((1, \Delta), \pi_2)} \cdot \left( \Exp[b((1, \Delta), \pi_2)] - \Exp[b((2, \Delta), \pi_2)] \right) \label{eq:first_prefix_compare}
\end{align}
To compute $\Exp[b((1, \Delta), \pi_2)]$, we examine the $((1, \Delta), \pi_2)$-period. The period begins by querying the node of type $(1, \Delta)$ on step $t = 0$. With probability $p(\Delta)$ the node is infected, and with probability $c$ it has child $(0, 1)$. So with probability $p(\Delta) c$ the node of type $(0, 1)$ is queryied on step $t = 1$. Therefore
\begin{align}
\Exp[b((1, \Delta), \pi_2)] &= p(\Delta) \left( b(1) + c p(1) b(0) e^{-\beta} \right) \label{eq:r_1}
\end{align}
We follow a similar process to compute $\Exp[b((2, \Delta), \pi_2)]$. The $((2, \Delta), \pi_2)$-period begins by querying the node of type $(2, \Delta)$ on step $t = 0$, which is infected with probability $p(\Delta)$. Following the priority ordering defined by the prefix, if all potential descendants are realized, then the child of type $(0, 2)$ is queried on step $t = 1$, followed by the child of type $(1, 1)$ on step $t = 2$; if the child of type $(1, 1)$ is infected, then its child of type $(0, 1)$ is queried on step $t = 3$. Accounting for the fact that each potential descendant is realized independently with probability $c$, 
\begin{align}
\Exp[b((2, \Delta), \pi_2)] &= p(\Delta) \big[ b(2) + c \left( p(2) b(0) e^{-\beta} + c p(1) (b(1) e^{-2 \beta} + c p(1) b(0) e^{-3 \beta} \right) + \nonumber \\
&\phantom{{}={}} (1 - c) c p(1) \left( b(1) e^{-\beta} + c p(1) b(0) e^{-2 \beta} \right) \big] \label{eq:r_2}
\end{align}
Combining~\cref{eq:r_1} and~\cref{eq:r_2}, 
\begin{align*}
\Exp[b((1, \Delta), \pi_2)] - \Exp[b((2, \Delta), \pi_2)]  &=  p(\Delta) \left( b(1) + c p(1) b(0) e^{-\beta} \right) - \\
&\phantom{{}={}} p(\Delta) \big[ b(2) + c \left( p(2) b(0) e^{-\beta} + c p(1) (b(1) e^{-2 \beta} + c p(1) b(0) e^{-3 \beta}) \right) + \\
&\phantom{{}={}} (1 - c) c p(1) \left( b(1) e^{-\beta} + c p(1) b(0) e^{-2 \beta} \right) \big] \\
&=  p(\Delta) \big[ b(1) + c p(1) b(0) e^{-\beta} - \\
&\phantom{{}={}} b(2) - c p(2) b(0) e^{-\beta} - c^2 p(1) b(1) e^{-2 \beta} - c^3 p(1)^2 b(0) e^{-3 \beta} - \\
&\phantom{{}={}} (1 - c) c p(1) b(1) e^{-\beta} - (1 - c) c^2 p(1)^2 b(0) e^{-2 \beta} \big] \\
&=  p_T e^{-\alpha \Delta} \big[ e^{-\beta} + c p_T e^{-\alpha - \beta} - \\
&\phantom{{}={}} e^{-2 \beta} - c p_T e^{-2 \alpha - \beta} - c^2 p_T e^{-\alpha - 3 \beta} - c^3 p_T^2 e^{-2 \alpha - 3 \beta} - \\
&\phantom{{}={}} (1 - c) c p_T e^{-\alpha - 2 \beta} - (1 - c) c^2 p_T^2 e^{-2 \alpha - 2 \beta} \big] \\
&\geq p_T e^{-\alpha \Delta} \left( c p_T e^{-\alpha - \beta} - c p_T e^{-2 \alpha - \beta} - 2 c p_T e^{-\alpha - 2 \beta} - 2 c^2 p_T^2 e^{-2 \alpha - 2 \beta} \right)
\end{align*}
The above expression is strictly positive iff
\begin{align}
c p_T e^{-\alpha - \beta} &> c p_T e^{-2 \alpha - \beta} + 2 c p_T e^{-\alpha - 2 \beta} + 2 c^2 p_T^2 e^{-2 \alpha - 2 \beta} \label{eq:first_prefix_start} \\
1 &> e^{- \alpha } + 2 e^{- \beta} + 2 c p_T e^{- \alpha -  \beta} \\
1 &> e^{- \alpha } + 2 e^{- \beta}(1 + c p_T e^{-\alpha}) \\
1 - e^{-\alpha} &> 2 e^{- \beta}(1 + c p_T e^{-\alpha}) \\
e^{\beta} &> \frac{2 (1 + c p_T e^{-\alpha})}{1 - e^{-\alpha}} \\
\beta &> \ln\left( \frac{2 (1 + c p_T e^{-\alpha})}{1 - e^{-\alpha}} \right) \label{eq:first_prefix_end} 
\end{align}
Thus the inequality holds due to the bound on $\beta$, so $\Exp[r((1, \Delta), \pi_2)] - \Exp[r((2, \Delta), \pi_2)] > 0$. Therefore by~\cref{eq:first_prefix_compare}, $I((1, \Delta), \pi_2) > I((2, \Delta), \pi_2)$, so $\pi_3 = (1, \Delta)$.
\end{proof}

\begin{lemma} \label{lem:second_prefix}
Fix $T \in \NN$, $\Delta \in \{ 2, \dots, T \}$, $p_T \in (0, 1]$, $c \in (0, 1]$, $\alpha > 0$, and $\beta > \ln\left( 2 (1 + c p_T e^{-\alpha})/(1 - e^{-\alpha}) \right)$. Let $D = \Ber(c)$. Let 
$$S_\pi = \{ (1, \Delta), (2, \Delta), (0, 1), (1, 1), (0, 2) \}.$$ 
Let $\pi$ be an ordering on $S_\pi$ that is a subsequence of an optimal priority ordering. Suppose $\pi_0 = (0, 1)$ and  $\pi_1 = (1, 1)$. Then $I((1, \Delta), \pi_1) > I((2, \Delta), \pi_1)$. 
\end{lemma}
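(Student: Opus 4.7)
The plan is to mirror the structure of the proof of \cref{lem:first_prefix}: first use \cref{lem:gamma_lemma} to reduce the index comparison to a comparison of expected benefits, and then expand both benefits by tracing the two periods under the current prefix $\pi_0 = (0,1),\ \pi_1 = (1,1)$. Since the potential child $(1,1)$ of $(2,\Delta)$ lies in the prefix, \cref{lem:gamma_lemma} yields $\gamma((2,\Delta),\pi_1) \leq \gamma((1,\Delta),\pi_1)$, and both denominators $1 - \gamma((\cdot,\Delta),\pi_1)$ are positive. Hence the same chain of inequalities leading to~\eqref{eq:first_prefix_compare} applies verbatim, and it suffices to show $\Exp[b((1,\Delta),\pi_1)] > \Exp[b((2,\Delta),\pi_1)]$.

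Next I would unwind the two periods. The $((1,\Delta),\pi_1)$-period queries the root on step $t=0$ and, if the root is infected and its lone potential child $(0,1)$ is realized, queries that child on step $t=1$, giving
\[
\Exp[b((1,\Delta),\pi_1)] = p(\Delta)\left(b(1) + c\, p(1)\, b(0)\, e^{-\beta}\right).
\]
The key observation for the $((2,\Delta),\pi_1)$-period is that the potential children of $(2,\Delta)$ are $(0,2)$ and $(1,1)$, but only $(1,1)$ lies in the prefix; the $(0,2)$ branch therefore contributes nothing. If $(1,1)$ is realized it is queried on step $t=1$, and if it is further infected then its potential child $(0,1)$ is queried on step $t=2$ when realized, which yields
\[
\Exp[b((2,\Delta),\pi_1)] = p(\Delta)\left(b(2) + c\, p(1)\, b(1)\, e^{-\beta} + c^2 p(1)^2\, b(0)\, e^{-2\beta}\right).
\]

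Substituting $p(h) = p_T e^{-\alpha h}$ and $b(h) = e^{-\beta h}$ and factoring out the positive quantity $p_T e^{-\alpha\Delta - \beta}$, the difference of the two expected benefits has the same sign as
\[
(1 + c p_T e^{-\alpha}) - e^{-\beta}\left(1 + c p_T e^{-\alpha} + c^2 p_T^2 e^{-2\alpha}\right),
\]
so I need only check that $e^{\beta}$ exceeds $1 + c^2 p_T^2 e^{-2\alpha}/(1 + c p_T e^{-\alpha})$. Since $c,p_T \in (0,1]$ and $e^{-\alpha} < 1$ this right-hand side is at most $2$, while the hypothesis $\beta > \ln(2(1+cp_T e^{-\alpha})/(1-e^{-\alpha}))$ already forces $e^{\beta} > 2/(1-e^{-\alpha}) > 2$, which is much stronger. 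The main obstacle, such as it is, is purely bookkeeping for the $(2,\Delta)$-period: one must see that the $(0,2)$ branch is inactive (since $(0,2) \notin \{\pi_0,\pi_1\}$) while the $(1,1)$ branch recursively unlocks $(0,1)$, producing the $c^2 p(1)^2 b(0)\, e^{-2\beta}$ cross term that is absent from the analogous computation in \cref{lem:first_prefix}.
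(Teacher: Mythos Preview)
Your proposal is correct and follows essentially the same route as the paper: invoke \cref{lem:gamma_lemma} to reduce to comparing expected benefits, compute both by unwinding the periods (correctly noting that $(0,2)$ is not in the prefix so only the $(1,1)$ branch of $(2,\Delta)$ is active), and then verify positivity of the difference under the hypothesis on $\beta$. The only cosmetic difference is in the final arithmetic: the paper drops the positive term $e^{-\beta}-e^{-2\beta}$ and reduces to the cleaner threshold $\beta > \ln(1 + c p_T e^{-\alpha})$, whereas you factor out $p_T e^{-\alpha\Delta-\beta}$ and bound the resulting ratio by $2$; both are comfortably implied by the assumed lower bound on $\beta$.
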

\begin{proof}
Since the type $(2, \Delta)$ has at least one potential child in the prefix, $\gamma((2, \Delta), \pi_1) \leq \gamma((1, \Delta), \pi_1)$. Therefore
\begin{align}
I((1, \Delta), \pi_1) - I((2, \Delta), \pi_1) &= \frac{ \Exp[b((1, \Delta), \pi_1)]}{1 - \gamma((1, \Delta), \pi_1)} - \frac{ \Exp[b((2, \Delta), \pi_1)]}{1 - \gamma((2, \Delta), \pi_1)} \nonumber \\
&\geq \frac{1}{1 - \gamma((1, \Delta), \pi_1)} \cdot \left( \Exp[b((1, \Delta), \pi_1)] - \Exp[b((2, \Delta), \pi_1)] \label{eq:second_prefix} \right)
\end{align}
The type $(1, \Delta)$ has one potential child $(0, 1)$ in the prefix and no other descendants, so
\begin{align}
\Exp[b((1, \Delta), \pi_1)] &= p(\Delta) [ b(1) + c p(1) b(0) e^{-\beta}] \label{eq:second_prefix_r1}
\end{align}
To compute $\Exp[b((2, \Delta), \pi_2)]$ we examine the $((2, \Delta), \pi_2)$-period. The period begins by querying the node of type $(2, \Delta)$ on step $t = 0$. With probability $p(\Delta)$ the node is infected, and with probability $c$ it has child $(1,1)$, which is then queried on step $t = 1$. With probability $p(\Delta)$ the node $(1, 1)$ is infected, and with probability $c$ it has child $(0, 1)$, which is then queried on step $t = 2$. Thus
\begin{align}
\Exp[b((2, \Delta), \pi_1)] &= p(\Delta) [ b(2) + c p(1) \left( b(1) e^{-\beta} + c p(1) b(0) e^{-2 \beta} \right)] \label{eq:second_prefix_r2}
\end{align}
Combining~\cref{eq:second_prefix_r1} and~\cref{eq:second_prefix_r2},
\begin{align*}
\Exp[b((1, \Delta), \pi_1)] - \Exp[b((2, \Delta), \pi_1)] &= p(\Delta) [ b(1) + c p(1) b(0) e^{-\beta} - b(2) - c p(1) \left( b(1) e^{-\beta} + c p(1) b(0) e^{-2 \beta} \right)] \\
&= p_T e^{-\alpha \Delta} [ e^{-\beta} + c p_T e^{-\alpha - \beta} - e^{-2 \beta} - c p_T e^{-\alpha - 2 \beta} - c^2 p_T^2 e^{-2 \alpha - 2 \beta}] \\
&\geq p_T e^{-\alpha \Delta} [c p_T e^{-\alpha - \beta} - c p_T e^{-\alpha - 2 \beta} - c^2 p_T^2 e^{-2 \alpha - 2 \beta}] \\
\end{align*}
The above expression is strictly positive iff
\begin{align*}
c p_T e^{-\alpha - \beta} &> c p_T e^{-\alpha - 2 \beta} + c^2 p_T^2 e^{-2 \alpha - 2 \beta} \\
1 &> e^{- \beta} + c p_T e^{- \alpha -  \beta} \\
1 &> e^{- \beta}(1 + c p_T e^{- \alpha}) \\
e^{\beta} &> 1 + c p_T e^{- \alpha} \\
\beta &> \ln(1 + c p_T e^{-\alpha})
\end{align*}
The inequality holds due to the bound on $\beta$, so $\Exp[b((1, \Delta), \pi_1)] - \Exp[b((2, \Delta), \pi_1)] > 0$. Therefore by~\cref{eq:second_prefix}, $I((1, \Delta), \pi_1) > I((2, \Delta), \pi_1)$.
\end{proof}
The following lemma is nearly identical to~\cref{lem:first_prefix} and references the prior lemma for many computations. The key difference between the two settings is that in the prefix that~\cref{lem:third_prefix} considers, $(1, 1) \prec (0, 2)$, whereas the opposite holds in~\cref{lem:first_prefix}. This difference in prefix changes the computation of the index function for $(2, \Delta)$.
\begin{lemma} \label{lem:third_prefix}
Fix $T \in \NN$, $\Delta \in \{ 2, \dots, T \}$, $p_T \in (0, 1]$, $c \in (0, 1]$, $\alpha > 0$, and $\beta > \ln\left( 2 (1 + c p_T e^{-\alpha})/(1 - e^{-\alpha}) \right)$. Let 
$$S_\pi = \{ (1, \Delta), (2, \Delta), (0, 1), (1, 1), (0, 2) \}.$$ 
Let $\pi$ be an ordering on $S_\pi$ that is a subsequence of an optimal priority ordering. Suppose $\pi_0 = (0, 1)$, $\pi_1 = (1, 1)$, and $\pi_2 = (0, 2)$. Then $\pi_3 = (1, \Delta)$. 
\end{lemma}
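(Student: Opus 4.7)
The plan is to follow the proof of~\cref{lem:first_prefix} almost verbatim, with the only substantive change being the computation of $\Exp[b((2,\Delta),\pi_2)]$, which must reflect the reordering of $(1,1)$ and $(0,2)$ inside the prefix: here $(1,1)\succ(0,2)$, whereas in~\cref{lem:first_prefix} we had $(0,2)\succ(1,1)$.

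First, I would invoke~\cref{lem:gamma_lemma}. The type $(2,\Delta)$ has both of its potential children in the prefix, so the hypothesis of~\cref{lem:gamma_lemma} is satisfied and $\gamma((2,\Delta),\pi_2) \le \gamma((1,\Delta),\pi_2)$. The same manipulation used in~\cref{lem:first_prefix} to pass from an index comparison to a benefit comparison (equation~\eqref{eq:first_prefix_compare}) then reduces the goal to showing $\Exp[b((1,\Delta),\pi_2)] > \Exp[b((2,\Delta),\pi_2)]$.

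Second, the value of $\Exp[b((1,\Delta),\pi_2)]$ is unchanged from~\cref{lem:first_prefix}, because the only potential descendant of $(1,\Delta)$ is $(0,1)=\pi_0$, giving $\Exp[b((1,\Delta),\pi_2)] = p(\Delta)\bigl(b(1)+cp(1)b(0)e^{-\beta}\bigr)$. For $\Exp[b((2,\Delta),\pi_2)]$ I would enumerate the four joint realizations of the two potential children of $(2,\Delta)$; the new feature appears whenever $(1,1)$ is realized, since it is now queried at step $1$ instead of being delayed, and if $(1,1)$ is infected and realizes its own child $(0,1)$, that $(0,1)$ preempts $(0,2)$ in the queue (since $(0,1)$ has higher priority), pushing $(0,2)$ from step $2$ to step $3$. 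This produces an expression structurally analogous to~\eqref{eq:r_2} but with the step-discounts on $(1,1)$, $(0,2)$, and the grandchild $(0,1)$ shuffled accordingly. I would then form $\Exp[b((1,\Delta),\pi_2)] - \Exp[b((2,\Delta),\pi_2)]$, substitute $p(h')=p_T e^{-\alpha h'}$ and $b(h)=e^{-\beta h}$, and discard non-negative terms to obtain a lower bound that reduces, as in~\eqref{eq:first_prefix_start}--\eqref{eq:first_prefix_end}, to the hypothesis $\beta>\ln\!\bigl(2(1+cp_Te^{-\alpha})/(1-e^{-\alpha})\bigr)$.

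The main obstacle is the both-realized case: the step at which $(0,2)$ is queried is a stochastic two-way branch depending on whether $(1,1)$ is infected and whether its child $(0,1)$ is realized, so the expression is more intricate than the corresponding term in~\cref{lem:first_prefix}. The technical crux is verifying that after simplification the dominant negative contributions are still of order $e^{-\alpha-\beta}$ and $e^{-\alpha-2\beta}$, rather than new terms forcing a tighter bound on $\beta$; the extra $e^{-3\beta}$ term introduced by the interleaving of $(0,1)$ before $(0,2)$ is already strictly smaller and can be absorbed without consequence.
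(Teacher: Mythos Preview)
Your proposal is correct and follows essentially the same route as the paper: invoke \cref{lem:gamma_lemma} to reduce the index comparison to a benefit comparison, reuse the $(1,\Delta)$ computation from \cref{lem:first_prefix}, recompute $\Exp[b((2,\Delta),\pi_2)]$ to reflect that $(1,1)$ now precedes $(0,2)$ in the prefix (so $(1,1)$ is queried at step~$1$, its potential grandchild $(0,1)$ can preempt $(0,2)$), then bound the difference from below by the same expression as in \eqref{eq:first_prefix_start}--\eqref{eq:first_prefix_end}. Your enumeration of the both-realized branch, including the split on whether the grandchild $(0,1)$ is realized, is in fact slightly more careful than the paper's prose description, but the resulting lower bound and the required condition on $\beta$ are identical.
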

\begin{proof}
To select $\pi_3$, we compute the index function for the two remaining types, $(1, \Delta)$ and $(2, \Delta)$. Following the same argument as in~\cref{lem:first_prefix}, this reduces to comparing expected benefits over an $((h, \Delta), \pi_2)$-period. Therefore
\begin{align}
I((1, \Delta), \pi_2) - I((2, \Delta), \pi_2) &= \frac{ \Exp[b((1, \Delta), \pi_2)]}{1 - \gamma((1, \Delta), \pi_2)} - \frac{ \Exp[b((2, \Delta), \pi_2)]}{1 - \gamma((2, \Delta), \pi_2)} \nonumber \\
&\geq \frac{1}{1 - \gamma((1, \Delta), \pi_2)} \cdot \left( \Exp[b((1, \Delta), \pi_2)] - \Exp[b((2, \Delta), \pi_2)] \right) \label{eq:third_prefix_compare}
\end{align}
Just as in~\cref{lem:first_prefix}, the prefix includes $(0, 1)$, so the computation of $\Exp[b((1, \Delta), \pi_2)]$ is also the same.
\begin{align}
\Exp[b((1, \Delta), \pi_2)] &= p(\Delta) \left( b(1) + c p(1) r(0) e^{-\beta} \right) \label{eq:third_prefix_r1}
\end{align}
To compute $\Exp[b((2, \Delta), \pi_2)]$ we examine the $((2, \Delta), \pi_2)$-period. The period begins by querying the node of type $(2, \Delta)$ on step $t = 0$, which is infected with probability $p(\Delta)$. Following the priority ordering defined by the prefix, if all potential descendants are realized, then the child of type $(1, 1)$ is queried on step $t = 1$; if it is infected, then the node of type $(0, 1)$ is queried on step $t = 2$, followed by the child of type $(0, 2)$ on step $t = 3$. If the child of type $(1, 1)$ is not infected, then the node of type $(0, 1)$ is skipped, and the child of type $(0, 2)$ is queried on step $t = 2$. Accounting for the fact that each potential descendant is realized independently with probability $c$, 
\begin{align}
\Exp[b((2, \Delta), \pi_2)] &= p(\Delta) \Big[ b(2) + c \big[ p(1) \left( b(1) e^{-\beta} + c p(1) b(0) e^{-2 \beta} + c p(2) b(0) e^{-3 \beta} \right) + \nonumber \\
&\phantom{{}={}} (1 - p(1)) c p(2) b(0) e^{-2 \beta} \big] + (1 - c) c p(2) b(0) e^{-\beta} \Big] \label{eq:third_prefix_r2}
\end{align}
Combining~\cref{eq:third_prefix_r1} and~\cref{eq:third_prefix_r2},
\begin{align*}
\Exp[b((1, \Delta), \pi_2)] - \Exp[b((2, \Delta), \pi_2)] &= p(\Delta) \left( b(1) + c p(1) b(0) e^{-\beta} \right) - \\
&\phantom{{}={}} p(\Delta) \Big[ b(2) + c \big[ p(1) \left( b(1) e^{-\beta} + c p(1) b(0) e^{-2 \beta} + c p(2) b(0) e^{-3 \beta} \right) + \\
&\phantom{{}={}} (1 - p(1)) c p(2) b(0) e^{-2 \beta} \big] + (1 - c) c p(2) b(0) e^{-\beta} \Big] \\
&= p_T e^{-\alpha \Delta} \Big[ e^{-\beta} + c p_T e^{-\alpha - \beta} - \\
&\phantom{{}={}} e^{-2 \beta} - c p_T e^{-\alpha - 2 \beta} - c^2 p_T^2 e^{-2 \alpha - 2 \beta} - c^2 p_T^2 e^{-3 \alpha - 3 \beta} - \\
&\phantom{{}={}} (1 - p_T e^{-\alpha}) c^2 p_T e^{-2 \alpha - 2 \beta} - (1 - c) c p_T e^{-2 \alpha - \beta} \Big]\\
&\geq p_T e^{-\alpha \Delta} \Big[ c p_T e^{-\alpha - \beta} - (1 - c) c p_T e^{-2 \alpha - \beta} - \\
&\phantom{{}={}} \left( c p_T e^{-\alpha - 2 \beta} + (1 - p_T e^{-\alpha}) c^2 p_T e^{-2 \alpha - 2 \beta} \right) - \\
&\phantom{{}={}} \left( c^2 p_T^2 e^{-2 \alpha - 2 \beta} + c^2 p_T^2 e^{-3 \alpha - 3 \beta} \right) \Big] \\
&\geq p_T e^{-\alpha \Delta} \Big[ c p_T e^{-\alpha - \beta} - c p_T e^{-2 \alpha - \beta} - 2 c p_T e^{-\alpha - 2 \beta} - 2 c^2 p_T^2 e^{-2 \alpha - 2 \beta} \Big]
\end{align*}
By~\crefrange{eq:first_prefix_start}{eq:first_prefix_end}, the above expression is strictly positive, so $\Exp[b((1, \Delta), \pi_2)] - \Exp[b((2, \Delta), \pi_2)] > 0$. Therefore by~\cref{eq:third_prefix_compare} $I((1, \Delta), \pi_2) > I((2, \Delta), \pi_2)$, so $\pi_3 = (1, \Delta)$.
\end{proof}

\end{document}